\definecolor{mygreen}{RGB}{10,110,230}
\definecolor{myred}{RGB}{10,110,230}
\newcommand{\ev}{\mathbb{E}} 
\newcommand{\g}{g} 
\newcommand{\pr}{\text{Pr}} 
\newcommand{\price}{f} 
\newcommand{\vecx}{\mathbf{x}} 
\newcommand{\vecy}{\mathbf{y}} 
\newcommand{\numpieces}{k} 
\newcommand{\classbackup}{C_b} 
\newcommand{\classnobackup}{C_s} 
\newcommand{\classnomatch}{C_\bot} 
\newcommand{\hclassbackup}{h_b} 
\newcommand{\hclassnobackup}{h_s} 
\newcommand{\hclassnomatch}{h_\bot} 
\newcommand{\Hclassbackup}{H_b} 
\newcommand{\Hclassnobackup}{H_s} 
\newcommand{\Hclassnomatch}{H_\bot} 
\newcommand{\x}{x} 
\newcommand{\apx}{\ensuremath{0.5469}} 
\newcommand{\y}{y} 
\newcommand{\B}{\text{B}} 
\newcommand{\I}{\text{I}} 
\newcommand{\ranking}{{\textsc{Ranking}}} 
\newcommand{\M}{\textsc{R}}
\newcommand{\opt}{M^*}
\newcommand{\E}[0]{\ensuremath{\mathbb{E}}}
\renewcommand{\epsilon}{\ensuremath{\varepsilon}}
\let\originalleft\left
\let\originalright\right
\renewcommand{\left}{\mathopen{}\mathclose\bgroup\originalleft}
\renewcommand{\right}{\aftergroup\egroup\originalright}
\crefname{lemma}{Lemma}{Lemmas}
\crefname{theorem}{Theorem}{Theorems}
\crefname{property}{Property}{Properties}
\crefname{claim}{Claim}{Claims}
\crefname{result}{Result}{Results}
\crefname{definition}{Definition}{Definitions}
\crefname{observation}{Observation}{Observations}
\crefname{proposition}{Proposition}{Propositions}
\crefname{assumption}{Assumption}{Assumptions}
\crefname{line}{Line}{Lines}
\crefname{figure}{Figure}{Figures}
\crefname{equation}{}{}
\crefname{section}{Section}{Sections}
\crefname{appendix}{Appendix}{Appendices}
\crefname{algCounter}{Algorithm}{Algorithms}
\Crefname{algCounter}{Algorithm}{Algorithms}
\newtheorem{lemma}{Lemma}[section]
\newtheorem{theorem}[lemma]{Theorem}
\newtheorem{corollary}[lemma]{Corollary}
\newtheorem{definition}[lemma]{Definition}
\newtheorem{claim}[lemma]{Claim}
\newtheorem{fact}[lemma]{Fact}
\newtheorem*{remark*}{Remark}
\newtheorem*{definition*}{Definition}
\def\thm@space@setup{\thm@preskip= 0.2cm \thm@postskip=\thm@preskip}
\definecolor{mygreen}{RGB}{20,155,20}
\definecolor{myred}{RGB}{195,20,20}
\definecolor{linkcolor}{RGB}{0,0,230}
\definecolor{mylightgray}{RGB}{230,230,230}
\definecolor{verylightgray}{RGB}{240,240,240}
\definecolor{commentcolor}{RGB}{120,120,120}
\newcounter{myalgctr}
\newenvironment{tbox}{\par\addvspace{0.2cm}\begin{tcolorbox}[width=\textwidth, boxsep=2pt, left=1pt, right=1pt, top=4pt, boxrule=1pt, arc=0pt, colback=white, colframe=black]}{\end{tcolorbox}}
\newenvironment{tboxh}{\par\addvspace{0.2cm}\begin{tcolorbox}[width=\textwidth, boxsep=2pt, left=1pt, right=1pt, top=4pt, boxrule=1pt, arc=0pt, colback=white, colframe=black, float=t]}{\end{tcolorbox}}
\newenvironment{graytbox}{
\par\addvspace{0.1cm}
\begin{tcolorbox}[width=\textwidth,
                  boxsep=5pt,
                  left=1pt,
                  right=1pt,
                  top=2pt,
                  bottom=2pt,
                  boxrule=0pt,
                  arc=0pt,
                  colback=mylightgray,
                  colframe=black,
                  ]
}{
\end{tcolorbox}
}
\newcommand{\tboxhrule}{\vspace{0.1cm} \hrule \vspace{0.2cm}}
\newenvironment{titledtbox}[1]{\begin{tbox}#1 \tboxhrule}{\end{tbox}}
\newenvironment{titledtboxh}[1]{\begin{tboxh}#1 \tboxhrule}{\end{tboxh}}
\newcounter{protocolcounter}
\newenvironment{protocol}{\refstepcounter{protocolcounter}\begin{tbox}\textbf{Algorithm \theprotocolcounter:}}{\end{tbox}}
\crefname{protocolcounter}{Algorithm}{Algorithms}
\renewcommand{\paragraph}{%
  \@startsection{paragraph}{4}%
  {\z@}{10pt}{-1em}%
  {\normalfont\normalsize\bfseries}%
}
\title{Improved Approximation for Ranking on General Graphs}
 \author{
 Mahsa Derakhshan \\ {\em Northeastern University} \and 
 Mohammad Roghani  \\ {\em Stanford University} \and
 Mohammad Saneian  \\ {\em Northeastern University} \and
 Tao Yu  \\ {\em Northeastern University}
 }
\date{}
\begin{document}
\maketitle

\setcounter{page}{1}

In this paper, we study \textsc{Ranking}, a well-known randomized greedy matching algorithm, for general graphs. The algorithm was originally introduced by Karp, Vazirani, and Vazirani [STOC 1990] for the online bipartite matching problem with one-sided vertex arrivals, where it achieves a tight approximation ratio of $1 - 1/e$. It was later extended to bipartite graphs with random vertex arrivals by Mahdian and Yan [STOC 2011] and to general graphs by Goel and Tripathi [FOCS 2012].  The \textsc{Ranking} algorithm for general graphs is as follows: a permutation $\sigma$ over the vertices is chosen uniformly at random. The vertices are then processed sequentially according to this order, with each vertex being matched to the first available neighbor (if any) according to the same permutation $\sigma$.

While the algorithm is quite well-understood for bipartite graphs—with the approximation ratio lying between 0.696 and 0.727—its approximation ratio for general graphs remains less well-characterized despite extensive efforts. Prior to this work, the best-known lower bound for general graphs was $0.526$ by Chan et al. [TALG 2018], improving on the approximation ratio of $0.523$ by Chan et al. [SICOMP 2018]. The upper bound, however, remains the same as that for bipartite graphs.

In this work, we improve the approximation ratio of \textsc{Ranking} for general graphs to $0.5469$, up from $0.526$. This also surpasses the best-known approximation ratio of $0.531$ by Tang et al. [JACM 2023] for the oblivious matching problem. Our approach builds on the standard primal-dual analysis used for online matchig. The novelty of our work lies in proving new structural properties of \textsc{Ranking} on general graphs by introducing the notion of the \emph{backup} for vertices matched by the algorithm. For a fixed permutation, a vertex's backup is its potential match if its current match is removed. This concept helps characterize the rank distribution of the match of each vertex, enabling us to eliminate certain \emph{bad events} that constrained previous work.

\newpage



\section{Introduction}

In this paper, we study \textsc{Ranking}, a well-known randomized greedy matching algorithm, for general graphs. The algorithm was originally introduced by Karp, Vazirani, and Vazirani \cite{karp1990optimal} for the online bipartite matching problem with one-sided vertex arrivals, where it achieves a tight approximation ratio of $1 - 1/e$~\cite{karp1990optimal, GoelM08, BirnbaumM08, DBLP:conf/soda/DevanurJK13}. 
The algorithm was later extended to bipartite graphs with random vertex arrivals by Mahdian and Yan~\cite{mahdian2011online} and to general graphs by Goel and Tripathi~\cite{goel2012matching}. 

The \ranking{} algorithm falls within a broader class of vertex-iterative randomized greedy matching algorithms. A vertex-iterative randomized greedy matching algorithm begins by generating a uniformly random permutation $\pi$ of the vertices. The vertices are then processed sequentially according to this order, with each vertex being matched to an available neighbor based on a predefined selection rule. In \ranking{} for general graphs, this rule ensures that a vertex is matched to the first unmatched neighbor that appears earlier in the same permutation $\pi$. This algorithm can be implemented in two well-studied models:

\vspace{-1 mm}
\paragraph{Oblivious Matching.}  This model also known as the query commit model was first introduced by~\cite{goel2012matching}. Consider an $n$ vertex graph whose edges are unknown. The existence of edges will only be revealed via queries. The algorithm picks a permutation of unordered pairs of vertices.  It then constructs a greedy matching by probing the edges one-by-one according to the
permutation. Particularly, when a pair of vertices is probed, if an edge exists between them and both are unmatched, the edge joins the matching. Currently, the best-known bounds for general graphs fall between 0.531 by Tang et al.~\cite{towardsabetterunderstanding} and 0.75~\cite{goel2012matching}. 
The 0.531 approximation is achieved via a vertex iterative algorithm. Note that while any vertex-iterative algorithm can be implemented in this model, the best algorithm in this model may not be vertex iterative.

\vspace{-1 mm}
\paragraph{Online Matching with Random Vertex Arrivals.} In this model, the vertices of an unknown graph arrive online in a uniformly random order. Upon the arrival of a vertex, its edges to the previous vertices are revealed. At this time, the algorithm needs to decide irrevocably whether to include one of these edges (if the other end-point is still available) in the matching or lose them forever. This model was first introduced by \cite{mahdian2011online} for bipartite graphs and later extended to general graphs by Chan et al.~\cite{chan2014ranking}. It can also be viewed as the random arrival version of the fully online model introduced by Huang et al.,~\cite{Howtomatchwhenallverticesarriveonline}. The best-known approximation ratio for general graphs in this setting is $0.526$ by Chan et al.~\cite{chan2018analyzing}, who improve the $0.523$ approximation ratio of~\cite{chan2014ranking}.

The \ranking{} algorithm studied in this paper is quite well understood for bipartite graphs, with its approximation ratio known to lie between 0.696~\cite{mahdian2011online} and 0.727~\cite{conf/stoc/KarandeMT11}. However, for general graphs, its approximation ratio remains less well-characterized despite extensive research efforts~\cite{DyerF91, aronson1995randomized, chan2014ranking, goel2012matching, poloczek2012randomized, towardsabetterunderstanding, 0.505_simplified_Ranking}. Currently, the best-known bounds for general graphs fall between $0.526$~\cite{chan2014ranking} and $0.727$ \cite{conf/stoc/KarandeMT11}.

\vspace{-1 mm}
\paragraph{Our Contribution.} In this work, we make progress by improving the approximation ratio of \ranking{} for general graphs. We present a refined analysis of \ranking{} and establish that it achieves an approximation ratio of at least 0.5469. Our work improves the best-known approximation ratio for online matching with random vertex arrivals and the oblivious matching problem respectively from $0.526$  by~\cite{chan2014ranking} and 0.531~\cite{towardsabetterunderstanding}. More formally, we prove the following theorem:

\vspace{0.5em}
\begin{graytbox}
\begin{theorem}
    The approximation ratio of \ranking{} on general graphs is at least 0.5469.
\end{theorem}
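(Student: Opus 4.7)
The plan is to follow the now-standard primal-dual framework for analyzing \ranking{}, but to strengthen it by exploiting the new ``backup'' structure outlined in the abstract. Concretely, once the random permutation $\sigma$ is drawn, every vertex $v$ matched by \ranking{} is assigned a dual $y_v = g(r_v)$, where $r_v$ is the rank at which $v$ is matched (normalized to $[0,1]$) and $g:[0,1]\to[0,1]$ is a gain-sharing function to be chosen. Unmatched vertices get $y_v=0$. The expected total dual is exactly a constant factor times $\E[|\M|]$, so it suffices to show, for every edge $(u,v)\in\opt$, that $\E[y_u+y_v]\ge \apx \cdot \gamma$ for the correct normalization constant $\gamma$ tied to $g$. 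The output ratio $\apx$ will then fall out of solving a small functional/LP optimization over $g$.

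The main steps, in order, would be: (i) formalize the backup of a matched vertex $w$ as the neighbor $w$ would be matched to in the execution of \ranking{} on the graph with $w$'s current partner deleted, and partition each vertex's random outcome into the three classes $\classbackup$ (matched with a backup), $\classnobackup$ (matched with no backup) and $\classnomatch$ (unmatched); (ii) for each class, prove a structural lemma describing the conditional law of the match rank---in particular, that vertices in $\classbackup$ have a much ``heavier'' low-rank distribution than was available in \cite{chan2014ranking, chan2018analyzing}, because the existence of the backup forces at least two distinct low-rank neighbors to be present; (iii) for a fixed edge $(u,v)\in\opt$, do a case analysis over the joint class of $(u,v)$ under the randomness of $\sigma$, bounding $\E[g(r_u)+g(r_v)]$ from below in each case; (iv) assemble the case bounds, pick the worst case, and numerically optimize $g$ (typically as a step function over a fine grid) to push the guaranteed ratio up to $\apx$.

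The essential new input is step (ii): the backup lets us ``replay'' \ranking{} on a graph missing an edge and relate the two executions. The classical rank-one-removal lemma of Karp--Vazirani--Vazirani and its extensions to general graphs by Chan et al.\ already compare executions differing in one vertex or one edge; I would generalize that coupling to show that, conditioned on $u\in\classbackup$ with rank distribution $\pdfclassbackup$, a neighbor $v$ of $u$ on the OPT side sees a biased rank distribution that dominates the previous analyses in the stochastic-ordering sense. The ``bad event'' eliminated is the configuration where both $u$ and $v$ get matched only at very high rank simultaneously; the presence of backups on at least one side rules this out with enough probability to beat $0.526$.

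The hardest step will be step (ii)/(iii): proving clean, usable distributional bounds on $r_u$ conditioned on class membership, and then combining the bounds for $u$ and $v$ without losing too much to the correlation between the two executions induced by the same underlying permutation $\sigma$. In particular, when both endpoints of the OPT edge are in $\classbackup$, one must argue about two interacting backup chains, which is where the previous analyses break down and where most of the technical effort will need to go. Once these structural inequalities are in place, the final numerical step---choosing $g$ so that every case bound exceeds $\apx$---is a standard, if tedious, LP/search over piecewise-constant gain-sharing functions.
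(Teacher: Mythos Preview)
Your plan captures the high-level framework but misses two ideas that the paper relies on to actually reach $0.5469$, and without them the numerical optimization in step~(iv) will almost certainly stall well below that number.

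First, you commit to a \emph{single-variable} gain function $y_v = g(r_v)$ applied symmetrically to every matched vertex. The paper instead uses a \emph{two-variable, asymmetric} price function $f(x_u,x_v)$: when buyer $u$ is matched to item $v$, the item receives $f(x_u,x_v)$ and the buyer receives $1-f(x_u,x_v)$. The authors explicitly note (see the discussion around Table~1) that this asymmetry is essential for the $0.5469$ bound; prior analyses that used single-argument or symmetric $f$ are exactly the ones that plateau in the $0.52$--$0.53$ range. To make such an asymmetric split well-defined on a general graph, the paper introduces a \emph{random two-coloring} of the vertices into buyers and items, exploiting the fact that $\M(\sigma)\cup M^*$ is a union of paths and even cycles, hence bipartite. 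Your proposal never mentions a buyer/item partition, so you have no mechanism to deploy the asymmetric $f$ that drives the improvement.

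Second, your analysis structure in step~(iii) is symmetric in the two OPT endpoints: you propose to case over the \emph{joint} class of $(u,v)$ and then worry about ``two interacting backup chains'' when both lie in $\classbackup$. The paper avoids this difficulty entirely. It fixes the OPT edge $(u,u^*)$, assumes w.l.o.g.\ (via the random coloring) that $u$ is the buyer, \emph{removes $u^*$} to form $\sigma_{-u^*}$, classifies $u$ alone by its profile $(x_u,x_v,x_b)$ in that reduced instance, and then averages over the independent insertion rank of $u^*$. Only one side is ever classified, so no joint backup analysis is needed. The structural lemmas (\cref{lem:struct-property}, \cref{lem:backup}) compare $\M(\sigma)$ with $\M(\sigma_{-u^*})$; the monotonicity lemmas (\cref{descrete-monotonicity-nobackup}, \cref{descrete-monotonicity-backup}) show that, conditional on the class of $u$, the rank $x_v$ of $u$'s match has a monotone-increasing density up to $1$ (no backup) or up to $x_b$ (backup). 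These reduce the outer expectation to a minimum over $O(k^2)$ ``uniform profile'' distributions and yield a concrete LP in the $k^2$ values of $f$. Your step~(ii) intuition that $\classbackup$ gives a ``heavier low-rank'' match distribution is roughly backwards: the role of the backup is not to bias $x_v$ downward but to guarantee that inserting $u^*$ cannot unmatch $u$ (and caps $u$'s new match rank at $x_b$).

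In short: add the random buyer/item partition, switch to a two-argument asymmetric $f$, and restructure the case analysis around the profile of a single endpoint in $\sigma_{-u^*}$. Without those pieces, the LP you would solve at the end is the wrong LP.
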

\end{graytbox}
\vspace{0.5em}

\section{Our Techniques}

We build on the standard primal-dual framework commonly used to analyze the \ranking{} algorithm in both general and bipartite graphs (see \cite{mahdian2011online, chan2018analyzing, chan2014ranking, DBLP:conf/soda/0002PTTWZ19, DBLP:conf/soda/DevanurJK13, DBLP:journals/talg/HuangTWZ19, peng2025revisiting, xu2025bounding}, and references therein). For each edge $(u, v)$ in the output of \ranking{}, we divide its gain (which is 1) between its endpoints $u$ and $v$ according to a function we define later. Let $g(v)$ denote the gain assigned to vertex $v$. We show that for every edge $(u, u^*)$ in a maximum matching of $G$, the expected total gain of its endpoints, $\E[g(u) + g(u^*)]$, is at least $\apx$. This establishes an approximation ratio of $\apx$ for the algorithm.

Our work is particularly inspired by the economics-based analysis by Eden et al.~\cite{eden2020economicbasedanalysisrankingonline} for bipartite graphs. In their work, vertices on one side are referred to as \emph{items} and those on the other as \emph{buyers}.\footnote{This paper presents an alternative, simplified analysis of the classic bipartite \ranking{} algorithm, where only one side of the graph is randomly permuted.} A random permutation $\sigma$ of the items is generated by having each item $v$ independently draw a \emph{rank} $x_v$ uniformly at random from the interval $(0,1)$, and ordering the items by increasing rank. Buyers arrive in an arbitrary order and are matched to their first unmatched neighbor (if any) according to $\sigma$. If an item $v$ is matched to a buyer $u$, the item's gain—interpreted as its \emph{price}—is given by a function $f(x_v)$, and the buyer receives the remaining gain $1 - f(x_v)$. By choosing an appropriate function $f$, they show that for every edge $(u, u^*)$ in the optimal solution, the total expected gain of its endpoints is at least $1 - 1/e$, thus proving a $1 - 1/e$ approximation ratio.

\paragraph{Random Partitioning of Vertices.} One of the challenges in extending this analysis to general graphs is the absence of a natural partition of vertices into items and buyers.  We cannot fix the partition before running the algorithm, as \ranking{} may match two vertices of the same type. However, we observe that since gain sharing is purely for the sake of analysis, such a partition is only needed after the \ranking{} algorithm has been completed. Importantly, the only edges relevant to the gain-sharing analysis are those used by \ranking{} and those in the optimal solution. Since the union of two matchings cannot form an odd-length cycle, the relevant edges form a bipartite graph. This allows us to partition the vertices into two sets—buyers and items—such that all relevant edges go between the two groups. It is also crucial that, for any vertex $v$, the outcome of the \ranking{} is independent of whether any vertex $v$ is labeled as an item or a buyer. Therefore, after partitioning the vertices into two parts, we randomly label one as the set of items (with probability 0.5) and the other as the set of buyers, preserving the symmetry needed for the analysis.
When analyzing the total expected gain of $u$ and $u^*$, the symmetrical partitioning allows us to assume without loss of generality that $u$ is the buyer and $u^*$ the item. Simillar to the economic-based approach of~\cite{eden2020economicbasedanalysisrankingonline} our gain-sharing approach involves assigning a price in range $(0,1)$ to each item $v$, which is an increasing function of its rank. If $v$ matches $u$, then $v$'s gain equals its price and the reminder goes to $u$. 

\paragraph{Comparing the Outcome of \ranking{} With and Without $u^*$.}  
To lower-bound the total expected gain of $u$ and $u^*$, $\E[g(u) + g(u^*)]$, we begin by removing $u^*$ (assumed to be an item) from the graph and drawing a uniformly random permutation over the remaining vertices, denoted by $\sigma_{-u^*}$. This permutation is obtained by having each vertex $v \in V \setminus \{u^*\}$ independently draw a \emph{rank} $x_v$ uniformly at random from the interval $(0,1)$, and ordering the vertices by increasing rank. We then run the \ranking{} algorithm on $\sigma_{-u^*}$. 
Next, we reintroduce $u^*$ into the graph by independently drawing its rank $x_{u^*}$ (independent of all other ranks) and inserting it into the permutation, resulting in the full permutation $\sigma$.  Our analysis involves comparing the matching produced by \ranking{} on $\sigma_{-u^*}$ and $\sigma$, denoted respectively by $\M(\sigma_{-u^*})$ and $\M(\sigma)$.

A crucial property used in the analysis of \ranking{} on bipartite graphs is the following: if $u$ is matched to a vertex $v$ in $\M(\sigma_{-u^*})$, then in $\M(\sigma)$, it remains matched—possibly to a vertex $v'$ whose rank is no larger than that of $v$. In other words, adding a vertex to one side of a bipartite graph cannot unmatch vertices on the other side. This property is fundamental to the analysis of~\cite{eden2020economicbasedanalysisrankingonline} and is what enables the proof of the $1 - 1/e$ approximation ratio in the bipartite setting. Unfortunately, this property does not extend to general graphs, where adding  $u^*$ can disrupt the structure of the matching and potentially unmatch $u$. The lack of this property is the main reason behind the difficulty of the analysis of \ranking{} in general graphs. Previous works~\cite{towardsabetterunderstanding, Howtomatchwhenallverticesarriveonline}  address this issue through a more involved gain-sharing approach. Rather than simply sharing the gain between the end-points of an edge in the matching, they extend some of the gain to a third vertex which they call the \emph{victim} of this event. In this work, we take an alternative approach. While in our analysis we do not directly deal with this bad event (i.e., $u^*$ unmatching $u$), for the sake of conveying our ideas, we will explain how they can be used to do so. 
Roughly speaking, we show that the probability of the bad event  cannot be too high.
Motivated by this, we introduce the concept of a \emph{backup} for the vertex $u$, which we later formally state in~\cref{defbackup}.

\begin{definition*}[Backup of $u$]
Assume that $u$ has a match $v$ in $\M(\sigma_{-u^*})$, the output of \ranking{} on permutation $\sigma_{-u^*}$. The \emph{backup} of $u$ with respect to $\sigma_{-u^*}$ is the vertex $u$ matches to if $v$ is removed from the permutation. If $u$ becomes unmatched as a result of removing $v$, we say that $u$ has no backup. We use $b$ to denote this backup vertex.
\end{definition*}

We prove a key property regarding the backup of $u$. Specifically, if $u$ has a backup $b$, then adding $u^*$ to the permutation cannot cause $u$ to become unmatched. Furthermore, in $\M(\sigma)$, the rank of the vertex matched to $u$ is at most as large as the rank of $b$. (We formally prove this in \cref{lem:backup}) This implies that the bad event of  $u^*$ unmatching $u$ can only happen when $u$ does not have a backup\footnote{See \Cref{sec:appendix-a} for a detailed comparison of backup vertices with similar definitions in previous works, such as \cite{chan2018analyzing}.}.

Next we argue that conditioned on $u$ not having a backup the distribution of the rank of its match in $\sigma_{-u^*}$ has a property that limits the probability of the bad event happening (i.e. adding $u^*$ unmatches $u$). Note that $u^*$ can only unmatch $u$ if it joins the matching itself. We argue that with a sufficiently large probability $u$ and $u^*$ will both be matched in $\M(\sigma)$ when $u$ is matched in $\M(\sigma_{-u^*})$ and has no backup. Basically, our goal is to show that the following probability is sufficiently large where the probability is taken over the randomization of the rank vector $\mathbf{x}$ which resulted in permutation $\sigma$:
$$\Pr_{\mathbf{x}}[\{u, u^*\} \subset \M(\sigma)  \mid u\in \M(\sigma_{-u^*}) \text{ and } u \text{ has no backup}].$$
This is due to two crucial observations:
\begin{enumerate}
    \item Suppose $u$ is matched to a vertex $v$ in $\M(\sigma_{-u^*})$ and has no backup with respect to $\sigma_{-u^*}$.  In this scenario, increasing the rank of $v$ while  keeping the relative order of the rest of the vertices unchanged, does not change the matching $\M(\sigma_{-u^*})$. We use this observation to infer that conditioned on the event that $u$  is matched with no backup the rank of its match is monotonically increasing.  The randomization here comes from the rank vector $\mathbf{x}$.  (This is formally stated as~\cref{descretecoro-mootone-nobackup}.)
    \item Suppose $u$ is matched to a vertex $v$ in $\M(\sigma_{-u^*})$ s.t. $x_u < x_v$ meaning that $u$ comes earlier in the permutation. If $x_{u^*}$ falls in range $(x_u, x_v)$, then both $u$ and $u^*$ will be matched in $\M(\sigma)$. Since $x_{u^*}$ is drawn independently of $x_v$ and $x_u$, this happens with probability $x_v-x_u$. (This is formally stated in~\cref{lem:struct-property})
\end{enumerate}

Putting the two observations together and doing some calculations we conclude that given a fixed rank $x_u$ for vertex $u$ and conditioned on $u$ having no backup, with probability at least $(1-x_u)^2/2$ both $u$ and $u^*$ are matched in $\M(\sigma)$. As long as $x_u$ is sufficiently smaller than one, this probability is a constant. For instance if $x_u \leq 0.5$ then this probability is at least $1/8$. 

If one only hopes to show that \ranking{} has an approximation ratio of $0.5+\epsilon$ for some constant $0<\epsilon$ following the approach of bounding the probability of the \emph{bad event} allows for a simple analysis. However, our aim is to show a more significant approximation ratio of $\apx$. Therefore, we continue with our gain-sharing analysis. 

\paragraph{Optimizing the Gain-sharing Function.}
When a buyer $u$ is matched to an item $v$, we divide the gain using a function $f(x_u, x_v)$, interpreted as the price of $v$ for buyer $u$. We then define the gains as $g(v) = f(x_u, x_v)$ and $g(u) = 1 - f(x_u, x_v)$. Previous works involving the gain-sharing technique mostly defined function $f(x_u, x_v)$ to depend on only one of the variables $x_u$ or $x_v$, as in \cite{RandomizedPrimalDual, Howtomatchwhenallverticesarriveonline, Thighcompetitiveratioofclassicmatching, towardsabetterunderstanding}, or imposed a symmetry constraint $f(x_u, x_v) = 1 - f(x_v, x_u)$, as in \cite{Onlinevertexweightedbipartitematching}. With the help of our random vertex partitioning technique, we are able to relax $f$ to depend on both $x_u, x_v$ and without imposing the symmetry constraint. This relaxed function $f$ is one of the key ingredients in our $0.5469$-approximation.

To optimize $f$, we formulate a factor-revealing linear program (LP), requiring $f$ to be monotonically increasing in $x_v$ and decreasing in $x_u$. We discretize the rank interval into $k$ equal-sized buckets, giving us $k^2$ variables for $f$.
To help us express the expected gain as a linear function of $f$, we introduce the notion of \emph{profile} for $u$:

\begin{definition*}[Profile of $u$]
The profile of $u$ with respect to permutation $\sigma_{-u^*}$ consists of three pieces of information regarding $u$. It is a tuple $(x_u, x_v, x_b)$, denoting the ranks of $u$, its match $v$ (if any), and its backup (if any). See~\cref{def:profile}. 
\end{definition*}

Fixing $x_u$, we derive a lower bound on $\E[g(u) + g(u^*) \mid x_u]$ as a linear function of $f$, using $O(k^2)$ representative distributions over $u$’s profile, falling into three types:
\begin{itemize}
    \item $u$ is matched but has no backup. The rank $x_v$ of the match is uniformly distributed in $(i/k, 1)$ for some $0 \le i < k$.
    \item  $u$ is matched and has a backup. The match's rank $x_v$ is uniformly distributed in $(i/k, j/k)$, and the backup has rank $x_b = j/k$, for $0 \le i < j \leq k$.  
\item $u$ is unmatched and has no backup (a unique profile).
\end{itemize}

Using the structural properties, we prove we are able to formulate a linear lower bound for the expected gain $\E[g(u) + g(u^*) \mid x_u]$ given each distribution. The final bound is the minimum over these.

This leads to our factor-revealing LP, fully described in~\cref{LP}, and establishes the approximation ratio of $\apx$ for \ranking{}. This ratio is obtained by solving the LP with $k = 100$. \cref{figure:example} illustrates an example for $k = 3$. Notably, the function $f$ is asymmetric—a crucial feature for achieving our approximation guarantee. This need for asymmetry underlies the necessity of partitioning vertices into items and buyers.

\begin{table}[h]
    \centering 
    \renewcommand{\arraystretch}{1.2} 
    \setlength{\tabcolsep}{6pt} 
    \small
    \begin{tabular}{|c||c|c|c|c|}
        \hline
        $i \backslash j$ & [0, 1/3] & [1/3, 2/3] & [2/3, 1]\\ \hline\hline
        [0, 1/3] & 0.469 & 0.563 & 0.563 \\ \hline
        [1/3, 2/3] & 0.469 & 0.500 & 0.500 \\ \hline
        [2/3, 1] & 0.469 & 0.500 & 0.500 \\ \hline
    \end{tabular}
    \caption{Values of $f(i,j)$ for $k=3$  resulting in a lower bound of $0.503$ for the approximation ratio.}
    \label{figure:example}
\end{table}

\section{Preliminaries}

\subsection{Graph and Permutation Notation}
We consider undirected graphs $G = (V, E)$ with $|V| = 2n$ vertices. A \emph{matching} $M \subseteq E$ is a set of edges such that no two edges share an endpoint. A vertex incident to an edge in $M$ is said to be \emph{matched} or \emph{unavailable}; otherwise, it is \emph{unmatched} or \emph{available}. We let $\opt$ denote a fixed maximum matching of $G$. We also assume that $G$ admits a perfect matching of size $n$. We can make this assumption because of the following commonly used fact:
\begin{lemma}[\cite{aronson1995randomized}]\label{wecanassume2n}
    Let $\alpha \in [0,1]$ be a number that lower-bounds the approximation ratio for \ranking{} on all graphs $G^*$ that admit a perfect matching $M^*$. Then, $\alpha$ also lower-bounds the approximation ratio for \ranking{} on any graph $G$.
\end{lemma}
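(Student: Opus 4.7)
The plan is to reduce the general-graph case to the perfect-matching case by passing to the induced subgraph on the endpoints of a maximum matching. Let $G$ have $|V(G)|=2n$ and maximum matching $M^*$ of size $m = \nu(G)$, and define $H := G[V(M^*)]$, the subgraph induced on the $2m$ vertices covered by $M^*$. Since $M^*$ is itself a perfect matching of $H$, the graph $H$ satisfies the hypothesis of the lemma, so $\E[|\M(H)|] \ge \alpha m = \alpha\,\nu(G)$. It therefore suffices to establish the dominance $\E[|\M(G)|] \ge \E[|\M(H)|]$, as this would yield $\E[|\M(G)|] \ge \alpha\,\nu(G)$, which is exactly the claim.

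I would establish this dominance using the natural restricted coupling: sample $\sigma$ uniformly on $V(G)$ and set $\sigma_H := \sigma|_{V(H)}$, which is itself uniform on $V(H)$ by symmetry. Under this coupling, the goal reduces to the pointwise inequality $|\M(G)_{\sigma}| \ge |\M(H)_{\sigma_H}|$ for every $\sigma$. A useful structural fact is that $U := V(G)\setminus V(H)$ is an independent set (otherwise an edge inside $U$ would contradict the maximality of $M^*$), and each $u\in U$ has all its neighbors in $V(H)$; in particular, $G$ is obtained from $H$ by attaching the independent set $U$ via edges to $V(H)$.

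The main obstacle is proving this pointwise monotonicity. I would proceed by induction on $|U|$, reducing to the following sub-claim: for any graph $G'$ and any new vertex $u$ with arbitrary edges to $V(G')$, if $\sigma'$ is a permutation of $V(G')\cup\{u\}$ and $\sigma := \sigma'|_{V(G')}$, then $|\M(G'+u)_{\sigma'}| \ge |\M(G')_{\sigma}|$. To prove the sub-claim, I would simulate the two \ranking{} executions in lockstep along $\sigma'$ and, whenever the presence of $u$ causes a vertex $v$ to choose a different partner in the $(G'+u)$-run than in the $G'$-run, trace the resulting chain of displacements through subsequent processing steps. An augmenting-path-style exchange then shows that any edge lost on the $G'$-side is compensated either by the new $u$-edge or by a later match in the $(G'+u)$-run, so the total matching size does not decrease. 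The delicate part is accounting for the fact that a vertex's match can be decided after its own processing step (when a later vertex picks it up in the general-graph variant of \ranking{}), so the bookkeeping must carefully track these delayed matches across both coupled runs.
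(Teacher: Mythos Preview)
The paper does not actually prove this lemma; it is stated with a citation to \cite{aronson1995randomized} and used as a known black box. So there is no ``paper's own proof'' to compare against.

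Your approach is correct and is essentially the standard one. A few remarks that would tighten it considerably. Your sub-claim---that adding a single vertex $u$ to a graph $G'$ (with arbitrary edges into $V(G')$) cannot decrease the size of the \textsc{Ranking} output under the coupled permutations---is an immediate consequence of the alternating path lemma that the paper already proves as \cref{lem:alt-path}. Applying that lemma with $u$ in the role of $u^*$ gives that $\M(\sigma')\oplus\M(\sigma)$ is an alternating path $(u_0,\dots,u_k)$ with $u_0=u$, in which the edges $(u_{2i},u_{2i+1})$ lie in $\M(\sigma')$ and the edges $(u_{2i+1},u_{2i+2})$ lie in $\M(\sigma)$. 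Hence $|\M(\sigma')|-|\M(\sigma)|=\lceil k/2\rceil-\lfloor k/2\rfloor\ge 0$. This replaces the ``augmenting-path-style exchange'' you sketch and removes the delicate bookkeeping you flag about matches being decided after a vertex's own processing step: the greedy-probing formulation of \cref{lem:alt-path} already handles that uniformly. With this in hand, the induction over the independent set $U=V(G)\setminus V(H)$ (each added vertex has neighbors only in $V(H)\subseteq V(G')$, so the lemma applies verbatim at every step) gives $|\M(G)_\sigma|\ge|\M(H)_{\sigma_H}|$ pointwise, and the rest of your argument goes through.
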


We use random rank vectors $\vecx$ and random permutations $\sigma$ as sources of randomness\footnote{We will eventually use what we call bucketed random permutation vectors as the source of randomness. This is merely a discretization of the uniform real rank vector $x_v$ so that we can build the LP with proper discretization. The relevant notions are presented in Section~5 and are not needed before that.}. Let $\vecx:V\to(0,1)$ be a random vector such that the values $x_v$ for each index are i.i.d. random variables drawn uniformly from $(0,1)$, i.e. $\x_v\overset{iid}{\sim}\mathcal{U}(0,1)$. Each realization of $\vecx$ induces a unique permutation $\sigma_{\vecx}:V\to\{1,...,|V|\}$ such that $\sigma_{\vecx}(u)\leq \sigma_{\vecx}(v)\Leftrightarrow \x_u\leq \x_v$\footnote{There are multiple possible permutations when there exists some $u\neq v$ such that $x_u=x_v$. The probability measure of such an event happening is $0$, so we assume that it does not happen.}. The distribution of $\vecx$ induces a distribution on permutations $\sigma$, which is a uniform distribution on permutations. Equivalently, we can draw a uniform random permutation $\sigma: V \to \{1, \dots, |V|\}$ directly without vector $\vecx$. 

We interpret $\sigma$ as both the arrival order and the preference order of vertices. For a vertex $v$, we write $\x_v$ or $\sigma(v)$ for its rank; which one is used will be clear from the context. For a set $S \subseteq V$, we write $\sigma_{-S}$ to denote the permutation $\sigma$ with the vertices in $S$ removed. That is, $\sigma_{-S}$ is the induced permutation of $\sigma$ with its domain restricted to $V \backslash S$. In particular, we write $\sigma_{-u^*}$ to denote the permutation induced on $V \backslash \{u^*\}$ and $\sigma_{-vu^*}$ to denote the permutation induced on $V \backslash \{v, u^*\}$. Conversely, $\sigma$ can be viewed as a permutation generated by adding $S$ to a permutation $\sigma_{-S}$ by assigning each $v \in S$ the rank $\sigma(v)$ while preserving the relative order of vertices in $V \backslash S$. Similarly, we also use $\vecx_{-S}$ to represent the induced rank vector on $V \backslash S$.

Let $A, B$ be two sets. We will use $A \oplus B$ to denote the symmetric difference of $A$ and $B$, that is, $A \oplus B = (A \backslash B) \cup (B \backslash A)$. 

We assume the standard lexicographic ordering on pairs of numbers $(a, b)$, i.e., for $a, b, c, d \in \mathbb{R}$, $(a, b) \leq (a', b')$ iff $(a < a')$ or $(a = a' \;\wedge\; b \leq b')$.

\subsection{The \ranking{} Algorithm}

In this paper, we use two equivalent views of the \ranking{} algorithm: the vertex-iterative view and the greedy probing (edge-iterative) view.

\paragraph{Vertex-Iterative View.}
The algorithm begins by selecting for each vertex $v$ a rank $\x_v$ in $(0,1)$ uniformly and independently at random. The vector $\vecx$ containing all the $\x_v$ values is then translated into a permutation $\sigma:V\to\{1,\dots,|V|\}$ based on the increasing order of the $x$ values. Then, it processes the vertices in order of increasing $\sigma$. When a vertex $u$ is processed, it is matched to the first unmatched neighbor in its neighborhood $N(u)$, based on the same ordering $\sigma$. If no unmatched neighbor exists, $u$ remains unmatched.

\begin{figure}[h]
\begin{protocol}\label{alg:vertex-ranking-general}
    Vertex-iterative version of the \textsc{Ranking} algorithm for general graphs

    \smallskip\smallskip
    \textbf{Input:} Graph $G = (V, E)$

    \smallskip\smallskip
    \begin{enumerate}[leftmargin=15pt]
        \item Sample a uniform random value $\x_v\in(0,1)$ independently for each $v\in V$
        \item Define permutation $\sigma$ as the unique permutation s.t. $\sigma(u)<\sigma(v)\Leftrightarrow x_u<\x_v$.
        \item Initialize matching $M \gets \emptyset$
        \item Initialize all vertices as unmatched
        \item For each vertex $v \in V$ in order $\sigma$:
        \begin{enumerate}
            \item Enumerate $u$ such that $e = (u, v) \in E$ in order $\sigma$
            \begin{enumerate}
                \item If both $u$ and $v$ are available
                \begin{enumerate}
                    \item Add $(u,v)$ to $M$
                    \item Mark $u$ and $v$ as matched
                    \item break
                \end{enumerate}
            \end{enumerate}
        \end{enumerate}
        \item Return matching $M$
    \end{enumerate}
\end{protocol}
\end{figure}

The algorithm, as defined, is not applicable to the uniform random vertex arrival model\footnote{Here we assume that in the random vertex arrival model, each vertex arrives uniformly at random and may only match with a neighbor that has already arrived.}, as it will always match a vertex with a higher-ranked neighbor. However, one can prove that imposing the restriction that each vertex may only be matched to previously arrived vertices does not change the output graph. Under this restriction, the algorithm becomes applicable to the uniform random vertex arrival model. See \cref{vertexarrivalmodels} for a detailed discussion.
\label{greeedyprobing}
\paragraph{Greedy Probing View}
This is also known as the query commit view. Draw $\sigma: V \to \{1, \dots, |V|\}$ as a uniform random permutation on vertices, either directly or through $\vecx$. $\sigma$ naturally induces an ordering $\sigma_E$ on edges $(u, v) \in E$ by the lexicographic ordering of $(\sigma(u), \;\sigma(v))=\sigma_E(u,v)$.  

\ranking{} probes edges $(u, v)$ in ascending order of $\sigma_E$. Whenever an edge $(u, v)$ is probed and both ends are available, \ranking{} includes $(u, v)$ in the match set and marks both ends as matched. 

Note that while an edge may be probed twice, once at time $\sigma_E(u,v)$ and once at $\sigma_E(v,u)$, the second time never results in a matching. In this paper, whenever we refer to ``the time an edge $(u, v)$ is probed", it makes sense to refer to the first query time. This time can be defined unambiguously as $\min\{\sigma_E(u,v),\;\sigma_E(v,u)\}$.

We here present a simple fact about the greedy proving view: if two different edges $(u,v)$ and $(u,v')$ share a common node, the order in which they get queried is determined solely by the other node which they do not share.
\begin{fact}\label{lexicographicorder}
    Assuming the greedy probing model, let $t$ be the time edge $(u,v)$ is probed, and let $t'$ be the time edge $(u,v')$ gets probed. It follows that $t \leq t'$ if and only if $\sigma(v) \leq \sigma(v')$.
\end{fact}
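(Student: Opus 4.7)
The plan is to unpack the definition of ``probe time'' into an ordered pair and then verify the claim by a short case analysis on where $\sigma(u)$ falls relative to $\sigma(v)$ and $\sigma(v')$. By the greedy probing view, the first time the edge $(u,v)$ is probed equals $\min\{\sigma_E(u,v),\sigma_E(v,u)\}$ under the lexicographic order, where $\sigma_E(a,b)=(\sigma(a),\sigma(b))$. Since for any two distinct ranks $a,b$ we have $(a,b) \leq (b,a)$ lexicographically iff $a \leq b$, this minimum is exactly
\[
T(u,v) \;:=\; \bigl(\min\{\sigma(u),\sigma(v)\},\;\max\{\sigma(u),\sigma(v)\}\bigr).
\]
So the fact reduces to showing $T(u,v)\leq T(u,v')$ iff $\sigma(v)\leq \sigma(v')$. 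Since the random ranks are almost surely distinct, I can assume throughout that all three of $\sigma(u),\sigma(v),\sigma(v')$ are pairwise different.

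Next I would split into three cases on the position of $\sigma(u)$. (i) If $\sigma(u) < \min\{\sigma(v),\sigma(v')\}$, then both $T(u,v)$ and $T(u,v')$ have first coordinate $\sigma(u)$, and their second coordinates are $\sigma(v)$ and $\sigma(v')$ respectively; lexicographic comparison then collapses to comparing $\sigma(v)$ with $\sigma(v')$. (ii) Symmetrically, if $\sigma(u) > \max\{\sigma(v),\sigma(v')\}$, both pairs have second coordinate $\sigma(u)$, and their first coordinates are $\sigma(v)$ and $\sigma(v')$, so the comparison again reduces to comparing $\sigma(v)$ with $\sigma(v')$. (iii) If $\sigma(u)$ lies strictly between $\sigma(v)$ and $\sigma(v')$, say WLOG $\sigma(v) < \sigma(u) < \sigma(v')$, then $T(u,v)=(\sigma(v),\sigma(u))$ while $T(u,v')=(\sigma(u),\sigma(v'))$; comparing the first coordinates gives $\sigma(v)<\sigma(u)$, so $T(u,v) < T(u,v')$, which matches $\sigma(v)<\sigma(v')$.

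In each of the three cases the probe-time ordering coincides with the $\sigma$-ordering of the ``other endpoint,'' which proves the equivalence in both directions.

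Honestly, I do not expect a real obstacle here: the content of the claim is just that the lexicographic minimum over the pair $\{(\sigma(u),\sigma(v)),(\sigma(v),\sigma(u))\}$ behaves monotonically in $v$ when $u$ is held fixed. The only thing to be careful about is that ties among $\sigma(u),\sigma(v),\sigma(v')$ could in principle make the case analysis ambiguous; this is handled for free by the paper's standing genericity assumption that with probability one all ranks are distinct, so the three cases listed above are exhaustive. I would write this up as a few lines immediately after the statement.
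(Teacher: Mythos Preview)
Your proposal is correct and follows essentially the same approach as the paper: both reduce the claim to comparing $\min\{(\sigma(u),\sigma(v)),(\sigma(v),\sigma(u))\}$ with $\min\{(\sigma(u),\sigma(v')),(\sigma(v'),\sigma(u))\}$ under the lexicographic order. The only difference is that the paper simply asserts the identity $\min\{(a,b),(b,a)\}\leq_{\mathrm{lex}}\min\{(c,a),(a,c)\}\iff b\leq c$ as a ``simple fact,'' whereas you spell out the three-case verification explicitly.
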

\begin{proof}
    Let $\leq_{lex}$ denote the lexicographic ordering on pairs. We have the simple fact that for any reals $a, b , c \in \mathbb{R}$
    $$\min\{(a, b), (b, a)\} \leq_{lex} \min\{(c, a), (a, c)\} \iff b \leq c.$$
    As the greedy probing model probes edges with respect to the increasing lexicographic order on tuples, we have 
$$
t_1 = \min\{\sigma_E(u,v),\sigma_E(v,u)\} \leq_{lex} \min\{\sigma_E(u,v'),\sigma_E(v',u)\} = t_2 \Longleftrightarrow \sigma(v) \leq \sigma(v'),
$$
\end{proof}
\subsection{Matching Structures and Terminology}
Let $\sigma$ be a permutation of vertices in $V$ and let $\sigma_{-u^*}$ be a permutation on $V\backslash\{u^*\}$. We use $\M(\sigma)$ to represent the result of running \ranking{} on $G$ with respect to ordering $\sigma$. We use $\M(\sigma_{-S})$ to represent the result of running \ranking{} on $G$ restricted to the set of vertices $V\backslash S$  with respect to ordering $\sigma_{-S}$.

We will define the notion of \emph{backup vertex} of a vertex $u$ with respect to a permutation $\sigma_{-u^*}$ on $V \backslash \{u^*\}$. Intuitively, the backup vertex of a vertex $u$ with respect to some permutation is the second-best choice for $u$ in that permutation.

We will also define the notion of the \emph{profile} of a vertex $u$ in a rank vector $\vecx_{-u^*}$ on $V \backslash \{u^*\}$. The profile of $u$ in some permutation records the rank of $u$, the rank of $u$'s match (if it exists), and the rank of $u$'s backup (if it exists) in that permutation.

Notice that here we emphasize the permutation and rank vector to have domain $V \backslash \{u^*\}$. It is true that the definitions are well-defined even if we consider them as permutations and rank vectors on the full set of vertices $V$, but in our paper, we will mainly use these two notions when we assume $u^*$ is excluded.

\begin{definition}[Backup of a vertex $u$]\label{defbackup}
    Fix a permutation $\sigma_{-u^*}$ on $V\backslash\{u^*\}$ and assume $u$ has a match $v$ in $\M(\sigma_{-u^*})$. Let $\sigma_{-vu^*}$ be the result of removing $v$ from $\sigma_{-u^*}$. If $u$ is still matched in $\M(\sigma_{-vu^*})$, we call the match of $u$ in $\M(\sigma_{-vu^*})$ the backup of $u$ and denote it as $b$ with respect to permutation $\sigma_{-u^*}$. Moreover, if $u$ is not matched in $\M(\sigma_{-vu^*})$ we say $u$ has no backup in $\sigma_{-u^*}$.
\end{definition}

\begin{definition}[Profile of a vertex $u$]\label{def:profile} 
    Let $\vecx_{-u^*}$ be a rank vector on $V\backslash \{u^*\}$. The profile of $u$ in $\vecx_{-u^*}$ is a triple $(\x_u,\x_v,\x_b)$ where 
    \begin{enumerate}
        \item $\x_{u}$ is the rank of $u$ in $\vecx_{-u^*}$
        \item $\x_{v}$ is the rank of $v$ if $u$ is matched to $v$ in $\M(\sigma_{\vecx_{-u^*}})$; $\x_v=\bot$ if $u$ is not matched,
        \item $\x_{b}$ is the rank of $b$ where $b$ is the backup of $u$ in $\M(\sigma_{\vecx_{-u^*}})$; $\x_b=\bot$ if $u$ does not have a backup.
    \end{enumerate}
\end{definition}

\section{Structural Properties of \ranking{}}
In this section, we will mainly present a few structural properties of the \ranking{} algorithm. This section consists of three subsections, each of which will serve the following objective:

\paragraph{\cref{sectionalternatingpath}}: This subsection presents the alternating path lemma, a key property of \ranking{} that frequently appears in other related works. We prove this lemma in its strongest form, taking into account not only the final matching produced by \ranking{} but also the partial matchings generated during its execution. This lemma will be applied in most of the subsequent proofs in this section.

\paragraph{\cref{section3.3}}: This section describes the effects on the output of \ranking{} when a new vertex is introduced. In particular, we will focus on a prechosen pair of perfect matches $u,u^*$, discussing what happens when $u^*$ does not exist in the graph and what happens if $u^*$ is introduced.

\paragraph{\cref{section3.4}}: This section describes the effect of increasing the rank of a vertex's match in the permutation. We will prove that, under certain conditions, increasing the rank of $u$'s match does not affect the output matching, allowing us to derive a conditional monotonicity property for the rank of the $u$'s match.

In this section, we assume $\sigma:V\to\{1,...,|V|\}$ is directly drawn as a permutation instead of first picking the rank vector $\vecx$. Whenever we refer to the word rank of a vertex $u$, it refers to $\sigma(u)$, which is an integer and is the relative order of $u$ within the permutation. All the properties proved in this section are applicable to the rank vector form $\vecx$ as well. 

\subsection{Alternating Path Lemma}\label{sectionalternatingpath}
The alternating path lemma is a common result used in \ranking{} analysis. Different versions of it are proved in \cite{chan2018analyzing, Howtomatchwhenallverticesarriveonline,towardsabetterunderstanding}. The version we will prove and use is similar to Lemma 2.5 in \cite{Howtomatchwhenallverticesarriveonline}, but extends to partial matchings formed during the execution of \ranking{}.

Let $\sigma: V \to \{1, \dots, |V|\}$ be a permutation, and let $\sigma_{-u^*}$ be the induced permutation on $V \backslash \{u^*\}$\footnote{The alternating path lemma works for any graph $G=(V,E)$ and any $v\in V$, reguardless of wether $v$ is a vertex in the maximum matching,  we are choosing $u^*$ here because most of the application of alternating path lemma in this paper will pick $u^*$ as the special node.}. For the purpose of the proof, instead of treating $\sigma_{-u^*}$ as a permutation on $V \backslash \{u^*\}$, we interpret it as a permutation on $V$ but with $u^*$ marked as unavailable from the start of the algorithm, so that $u^*$ will never be included in the output matching. This perspective allows us to view \ranking{} running on $\sigma$ and $\sigma_{-u^*}$ as probing the same sequence of edges in the greedy probing view, as defined in \cref{greeedyprobing}, with the only difference occurring in the availability status of vertices.

The alternating path lemma states that if we take snapshots of \ranking{} running in parallel on $\sigma$ and $\sigma_{-u^*}$, the difference in their partial matchings forms an alternating path $(u_0, u_1, \dots, u_k)$ starting at $u^* = u_0$. Furthermore, the ranks of the even-indexed and odd-indexed vertices along the alternating path monotonically increase, that is, $\sigma(u_i) < \sigma(u_{i+2})$.

We introduce the notion of time $t$ to describe the alternating path lemma. Let $t \in \{1, \dots, |V|\}^2$ be a tuple, and let $\sigma$ be a permutation on $V$. We use time $t$ to denote the time when \ranking{} has just finished probing all edges $(u,v)$ with edge order $(\sigma(u), \sigma(v)) < t$. We use $\M^t(\sigma)$ to denote the partial matching formed by \ranking{} on $\sigma$ at this time, and $A^t(\sigma)$ to denote the set of vertices that are still available at this time.

To develop intuition for the alternating path lemma, consider executing the \ranking{} algorithm twice: once on the full graph, and once with a vertex $u^*$ removed. The resulting matchings differ only locally around $u^*$, as the remainder of the graph is unaffected. Crucially, the symmetric difference of the two matchings forms an alternating path, where edges alternate between the matchings and the path begins at $u^*$.

\begin{lemma}[Alternating Path Lemma]\label{lem:alt-path}
    Let $t\in \{1,...,|V|\}^2$ be a time stamp in the greedy probing process. The symmetric difference of $\M^t(\sigma)$ and $\M^t(\sigma_{-u^*})$, denoted as $\M^t(\sigma)\oplus \M^t(\sigma_{-u^*})$ is an alternating path $(u_0,u_1,..,u_k)$ where $u_0=u^*$ s.t.
\begin{enumerate}
    \item For all even $i < k$, $(u_{i}, u_{i+1}) \in \M^t(\sigma)$, for all odd $i < k$, $(u_{i}, u_{i+1}) \in \M^t(\sigma_{-u^*})$.
    \item $\sigma(u_i) < \sigma(u_{i+2})$ for all $i \leq k - 2$.
    \item $A^t(\sigma) \oplus A^t(\sigma_{-u^*}) = \{u_k\}$.
\end{enumerate}
\end{lemma}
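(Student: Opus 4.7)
The plan is to prove the lemma by induction on time $t$ in the greedy probing view, tracking how $\M^t(\sigma)$ and $\M^t(\sigma_{-u^*})$ evolve in parallel. The key enabling observation is that $\sigma_{-u^*}$ can be viewed as the same permutation $\sigma$ with $u^*$ marked unavailable from the outset, so both executions probe exactly the same sequence of edges in the same order, and the only divergence between the runs is which vertices are currently marked available.

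For the base case (before any edge is probed), both partial matchings are empty and the only vertex whose availability differs is $u^*$ itself, so the trivial path $(u_0) = (u^*)$ with $k = 0$ satisfies all three conditions (conditions 1 and 2 vacuously, and condition 3 reads $\{u_k\} = \{u^*\}$).

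For the inductive step, suppose the lemma holds at time $t$ with alternating path $(u_0, \dots, u_k)$, and let $e$ be the edge probed at this step. I would case on whether $e$ is added in both runs, in neither, or in exactly one. If $e$ is added in both or neither, the matchings and available sets change identically, so conditions 1--3 carry over verbatim. The interesting case is when $e$ is added in exactly one run. Since condition 3 says the available sets differ only at $u_k$, and since any edge touching $u^*$ is blocked in both runs when $k > 0$ (as $u^*$ is then matched to $u_1$ in $\sigma$ and always excluded in $\sigma_{-u^*}$), one endpoint of $e$ must equal $u_k$; write $e = (u_k, w)$. A short parity check then shows that when $k$ is even, $u_k$ is available in $\sigma$ but unavailable in $\sigma_{-u^*}$ (since the last path edge $(u_{k-1}, u_k)$ lies in $M^t(\sigma_{-u^*})$), so $e$ is added only in $\sigma$; when $k$ is odd the situation reverses. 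In either sub-case, the extended path $(u_0, \dots, u_k, w)$ with $u_{k+1} = w$ respects the alternation required by condition 1, and the available-set difference correctly shifts from $\{u_k\}$ to $\{w\}$. One also needs to confirm $w$ is disjoint from prior path vertices: interior $u_1, \dots, u_{k-1}$ are matched in both runs and thus unavailable, $w = u^*$ is ruled out by availability when $k > 0$, and $w = u_{k-1}$ would require re-probing an already-probed edge.

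The main obstacle is establishing the monotonicity condition $\sigma(u_{k-1}) < \sigma(u_{k+1})$ whenever the path is extended (for $k \geq 1$). I would establish this through \cref{lexicographicorder}: the edge $(u_{k-1}, u_k)$ is already present in the alternating path at time $t$, so it was probed strictly earlier than $(u_k, w) = e$. Because these two edges share the endpoint $u_k$, \cref{lexicographicorder} converts the probe-time ordering directly into $\sigma(u_{k-1}) < \sigma(w) = \sigma(u_{k+1})$. All older monotonicity inequalities are inherited unchanged from the inductive hypothesis, closing the induction.
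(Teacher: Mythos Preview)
Your proposal is correct and follows essentially the same approach as the paper: induction on the probe time, the same case split (edge added in both/neither versus exactly one run), the same identification of $u_k$ as the unique endpoint with differing availability, and the same use of \cref{lexicographicorder} on the shared endpoint $u_k$ to obtain $\sigma(u_{k-1}) < \sigma(u_{k+1})$. You additionally verify that the new endpoint $w$ is distinct from prior path vertices, a detail the paper leaves implicit, but otherwise the arguments coincide.
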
 
\begin{proof}
We prove this by induction on $t$.
\newline
\noindent
\begin{minipage}{0.6\textwidth}
\paragraph{\textbf{Base case:}} At $t=(1,1)$, no edges are probed yet. $\M^t(\sigma)\oplus \M^t(\sigma_{-u^*})$ is a degenerate path, which can be viewed as a path with a single node $u^*$. (1) and (2) hold trivially. By the special view we take on $\sigma_{-u^*}$, we have $A^{t}(\sigma)=V$ and $A^{t}(\sigma_{-u^*})=V\backslash\{u^*\}$, so (3) also holds.
\end{minipage}
\begin{minipage}{0.4\textwidth}
\begin{center}
\begin{tikzpicture}[dot/.style={circle, fill=black, inner sep=1.5pt}, scale=0.9]

\node[dot, label=right:{$u^{*}$}] (u0) at (1.5, 1.25) {};
\node[dot, label=right:$u_2$] (u2) at (1.5, 0.75) {};
\node[dot, label=left:$u_1$]  (u1) at (0, 0.25) {};
\node[dot, label=left:$u_3$]  (u3) at (0, -0.6) {};
\node[dot, label=right:$u_4$] (u4) at (1.5, -1) {};

\draw[thick, dash pattern=on 6pt off 3pt] (u0) -- (u1);
\draw[ thick] (u2) -- (u1);
\draw[thick, dash pattern=on 6pt off 3pt] (u2) -- (u3);
\draw[ thick] (u3) -- (u4);

\begin{scope}[shift={(2.5,0)}] 
  \draw[thick, dash pattern=on 6pt off 3pt] (0, 0.6) -- (0.6, 0.6);
  \node[anchor=west,font=\footnotesize] at (0.7, 0.6) {$\M^t(\sigma)$};

  \draw[ thick] (0, 0.2) -- (0.6, 0.2);
  \node[anchor=west,font=\footnotesize] at (0.7, 0.2) {$\M^t(\sigma_{-u^*})$};
\end{scope}

\end{tikzpicture}
{\footnotesize\text{An example alternating path}}
\end{center}
\end{minipage}

\paragraph{\textbf{Inductive step:}} Assume it holds for $t$, let $t^+$ be the successive time stamp. If no edge is probed at time $t$, then we have nothing to prove. If there is some edge $(u,v)$ probed at time $t$, there are two cases:

\textbf{Case 1}: $(u,v)$ is included by \ranking{} in both $\M(\sigma)$ and $\M(\sigma_{-u^*})$, or not included in both $\M(\sigma)$ and $\M(\sigma_{-u^*})$. In this case, the symmetric difference $\M^{t^+}(\sigma)\oplus \M^{t^+}(\sigma_{-u^*})$ stays the same, as both either added an edge or both did not add anything. Similar reasoning holds for $A^{t^+}(\sigma) \oplus A^{t^+}(\sigma_{-u^*})$. Therefore, (1), (2), and (3) hold.

\textbf{Case 2}: $(u,v)$ is included in one of the partial matchings but not in the other. Let $\M^t(\sigma)\oplus \M^t(\sigma_{-u^*}) = (u_0, \dots, u_k)$. Case 2's assumption implies that one of $u$ or $v$ is $u_k$, because $u_k$ is the only vertex that can have a different availability status across $A^t(\sigma)$ and $A^t(\sigma_{-u^*})$. W.o.l.g., let $u = u_k$. 
We only prove the case where $k$ is even, as the case where $k$ is odd is symmetric. By (1), the alternating path ends at an edge in $\M^t(\sigma_{-u^*})$, meaning $u$ and $v$ are available in $A^t(\sigma)$ and hence will be added to $\M^{t^+}(\sigma)$. However, $(u,v)$ will not be added to $\M^{t^+}(\sigma_{-u^*})$ as $u$ is not available. Therefore, the new symmetric difference of partial matchings is an alternating path $(u_0, \dots, u_k = u, v)$ with $(u,v) \in \M(\sigma)$. This proves (1). By case 2's assumption and (3), we have $u,v \in A^t(\sigma)$ and $v \in A^t(\sigma_{-u^*})$. Now, $u, v$ are removed from $A^t(\sigma)$, but $v$ remains available in $A^t(\sigma_{-u^*})$. Hence, $A^{t^+}(\sigma)\oplus A^{t^+}(\sigma_{-u^*}) = \{v\}$, which proves (3). To prove (2), observe that $(u_{k-1}, u)$ is probed before $(u,v)$. By \cref{lexicographicorder}, this implies $\sigma(u_{k-1}) < \sigma(v)$.
\end{proof}

\subsection{Structural Properties Comparing $\M(\sigma)$ and $\M(\sigma_{-u^*})$}\label{section3.3}

Let $\sigma_{-u^*}$ be a permutation on $V\backslash\{u^*\}$, and let $\sigma$ be a permutation formed by introducing $u^*$ into $\sigma_{-u^*}$. In this subsection, we will prove the following properties:
\begin{enumerate}
    \item If $u$ is matched with $v$ when $u^*$ is excluded, then inserting $u^*$ into a rank  $<\sigma(v)$ will make $u^*$ match to a vertex of rank $\leq\sigma(u)$.
    \item If $u$ is matched with $v$ when $u^*$ is excluded, then inserting $u^*$ into a rank $>\sigma(u)$ will not cause $u$ to be unmatched or matched to a less preferred vertex than $v$.
    \item If $u$ is is matched with $v$ when $u^*$ is excluded, and introducing $u^*$ causes $u$ to be unmatched or matched to a less preferred vertex than $v$, then $u^*$ is matched with a vertex of rank $\leq\sigma(v)$.
    \item If $u$ admits a backup $b$ in $\sigma_{-u^*}$ as defined in \cref{defbackup}, then $u$ will always match to a vertex with rank $\leq\sigma(b)$, no matter where $u^*$ is inserted.
    \item If $u$ is unmatched when $u^*$ is excluded, then $u^*$ will always match to a vertex of rank $\leq\sigma(u)$, no matter where $u^*$ is inserted.
\end{enumerate}
The following claim proves the first three properties.
\begin{claim}\label{lem:struct-property}
    Let $\sigma_{-u^*}$ be a permutation on $V\backslash\{u^*\}$. Let $\sigma$ be a permutation generated by inserting $u^*$ to an arbitrary rank in $\sigma_{-u^*}$. Assume $u$ is matched to $v$ in $\M(\sigma_{-u^*})$. We have the following:
\begin{enumerate}
    \item If $\sigma(u^*)<\sigma(v)$, then $  u^\ast$ is matched to some vertex $w$ in $\M(\sigma)$ where $\sigma(w)\leq \sigma(u),$
   \item If $\sigma(u^*)>\sigma(u)$ , then $u$ is matched to some vertex $w$ in $\M(\sigma)$ where $\sigma(w)\leq \sigma(v),$
    \item If $u$  is not matched in  $\M(\sigma)$ or the match of $u$ in $\M(\sigma)$ has rank $>\sigma(v)$, then  $u^*$  is matched to some vertex $w$ where $ \sigma(w)\leq \sigma(v).$
\end{enumerate}
\end{claim}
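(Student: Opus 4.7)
The plan is to derive all three parts from the Alternating Path Lemma (\cref{lem:alt-path}) applied at the end of the algorithm, combined with \cref{lexicographicorder}. Write the symmetric difference as $\M(\sigma)\oplus\M(\sigma_{-u^*}) = (u_0=u^*, u_1, \dots, u_k)$. Properties (1)--(3) give alternation between the two matchings, rank monotonicity within each parity class ($\sigma(u_0) < \sigma(u_2) < \cdots$ and $\sigma(u_1) < \sigma(u_3) < \cdots$), and the fact that $u_k$ is the unique vertex whose availability differs across the two runs. The entire proof is then a case analysis on the positions of $u$ (and of $v$) along this path, translating rank hypotheses into constraints on the indices via the two parity monotonicities.

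For Part 2, I would first show $u$ stays matched in $\M(\sigma)$: if not, property (3) forces $u = u_k$, and reconciling property (1) with the availability change shows $k$ is even with $k \ge 2$; even monotonicity then gives $\sigma(u^*) = \sigma(u_0) < \sigma(u_k) = \sigma(u)$, contradicting the hypothesis $\sigma(u^*) > \sigma(u)$. Let $w$ be $u$'s match in $\M(\sigma)$; if $w = v$ we are done, otherwise $u$ lies on the path at some position $j \ge 1$. If $j$ is odd, the path edges at $u$ give $w = u_{j-1}$ and $v = u_{j+1}$, both even-indexed, so even monotonicity yields $\sigma(w) < \sigma(v)$ (the boundary case $j=1$ gives $w = u_0 = u^*$ and the same monotonicity applies directly). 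If $j$ is even $\ge 2$, even monotonicity on $u_0$ and $u_j$ again contradicts the hypothesis. Part 3 then uses the same skeleton: its hypothesis on $u$ forces $u$ onto the path, parity case analysis leaves only $j$ even $\ge 2$ consistent, so $v = u_{j-1}$; the path contains the edge $(u_0, u_1) \in \M(\sigma)$, hence $u^*$ is matched to $w = u_1$, and odd monotonicity gives $\sigma(u_1) \le \sigma(u_{j-1}) = \sigma(v)$.

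For Part 1, I would first establish that $u^*$ is matched in $\M(\sigma)$: otherwise property (1) forces $k = 0$, so $\M(\sigma) = \M(\sigma_{-u^*})$ and $(u,v) \in \M(\sigma)$; but $\sigma(u^*) < \sigma(v)$ together with \cref{lexicographicorder} applied to $(u,u^*)$ and $(u,v)$ makes $(u,u^*)$ get probed first, and with $u^*$ permanently available this edge would match, preventing $(u,v)$ from entering $\M(\sigma)$. So $u^*$ is matched to $w = u_1$. To show $\sigma(u_1) \le \sigma(u)$, suppose $\sigma(u_1) > \sigma(u)$; \cref{lexicographicorder} applied to $(u, u^*)$ and $(u^*, u_1)$ shows $(u,u^*)$ is probed before $u^*$ gets matched, so $u$ is unavailable at that probe time and hence matched in $\M(\sigma)$ to some $x$ with $\sigma(x) < \sigma(u^*)$ (again by \cref{lexicographicorder}). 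Since $\sigma(x) < \sigma(u^*) < \sigma(v)$, $x \neq v$, so $u$ lies on the path; the parity analysis of $u$'s position then forces one of the rank monotonicities to violate either $\sigma(x) < \sigma(u^*)$ or $\sigma(v) > \sigma(u^*)$ (for example, if $j$ is even $\ge 2$ then $v = u_{j-1}$ and $x = u_{j+1}$ are both odd-indexed with $\sigma(v) < \sigma(x) < \sigma(u^*)$, contradicting $\sigma(v) > \sigma(u^*)$; the case $j$ odd uses even monotonicity on $x = u_{j-1}$ to derive $\sigma(u^*) \le \sigma(x)$). The main obstacle is the boundary analysis at small positions ($j \in \{0,1,2\}$) and at $j = k$, where one of $u$'s would-be path neighbors coincides with $u^*$ or does not exist; these degenerate situations must be checked directly against properties (1) and (3) of the alternating path lemma rather than the generic parity rule, after which the rest is routine bookkeeping on two interleaved monotone rank sequences along the path.
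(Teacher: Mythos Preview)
Your proposal is correct. For Parts~2 and~3 it is essentially the paper's argument: both place $u$ on the final alternating path, observe (via property~(1) and monotonicity) that a vertex whose $\M(\sigma)$-match is absent or strictly worse than its $\M(\sigma_{-u^*})$-match must sit at an even index, and then read off the contradiction (Part~2) or the odd-monotonicity bound $\sigma(u_1)\le\sigma(u_{j-1})=\sigma(v)$ (Part~3). The paper packages the parity observation as a one-line characterization rather than splitting into your odd/even sub-cases, but the content is identical.

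Part~1 is where the two routes genuinely differ. You use only the end-of-run alternating path: suppose $\sigma(u_1)>\sigma(u)$, use \cref{lexicographicorder} to force $u$ to be already matched when $(u,u^*)$ is probed, locate $u$ on the path, and derive a parity contradiction. This works (your flagged boundary cases $j\in\{1,k\}$ all collapse, since $j=1$ gives $\sigma(u_1)=\sigma(u)$ and $j=k$ contradicts property~(3)), but it costs a full case split. The paper instead exploits the \emph{time-indexed} property~(3) of \cref{lem:alt-path}: at the probe time $t$ of $(u,u^*)$, either $u^*$ is already matched to some $w$ with $\sigma(w)<\sigma(u)$ by \cref{lexicographicorder}, or $u^*$ is still available, in which case $A^t(\sigma)\oplus A^t(\sigma_{-u^*})=\{u^*\}$ forces $u$ (available in $\sigma_{-u^*}$ since $(u,v)$ is probed later) to be available in $\sigma$ too, and $(u,u^*)$ matches. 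The paper's argument is shorter precisely because the intermediate-time availability statement lets it avoid ever locating $u$ on the full path; your approach trades that shortcut for the boundary bookkeeping you anticipated.
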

\begin{proof}
    (1) Assume $\sigma(u^*)<\sigma(v)$. At the time $t$ when $(u,u^*)$ is probed, either $u^*$ is already matched to some vertex $w$ or it is not. If $u^*$ is matched to $w$, then $(w,u^*)$ being probed earlier than $(u,u^*)$ implies $\sigma(w)\leq \sigma(u)$ by \cref{lexicographicorder}. If $u^*$ is still available, by part (3) of \cref{lem:alt-path}, the only vertex with a different availability status between $A^t(\sigma)$ and $A^t(\sigma_{-u^*})$ is $u^*$. Since $(u,v)$ has not been probed yet at time $t$ (by \cref{lexicographicorder}), $u\in A^t(\sigma_{-u^*})$ and is also available in $A^t(\sigma)$. Since both $u$ and $u^*$ are available, $u^*$ will be matched to $u$ in $\M(\sigma)$.

(2) Assume $\sigma(u^*)>\sigma(u)$, and towards a contradiction, assume that $u$ is not matched in $\M(\sigma)$ or is matched to some vertex $w$ with rank $\sigma(w)>\sigma(v)$ in $\M(\sigma)$. It follows that $(u,v)\in \M(\sigma)\oplus \M(\sigma_{-u^*})$. Denote the alternating path as $(u_0,u_1,\dots,u_k)$. By property (1) in \cref{lem:alt-path}, it follows that a node in the alternating path is even-indexed if and only if it is unmatched in $\M(\sigma)$ or its match in $\M(\sigma)$ has a larger rank than its match in $\M(\sigma_{-u^*})$. By our previous assumption, $u$ is forced to take an even-indexed position in the alternating path. But monotonicity implies the even-indexed $u$ must satisfy $\sigma(u)>\sigma(u^*)$, which is a contradiction.

(3) If $u$ is not matched in $\M(\sigma)$, by the same line of reasoning as in case (2), $u$ is an even-indexed vertex in the alternating path. It follows that $v$ is an odd-indexed vertex. By the monotonicity of the alternating path, the match of $u^*$ in $\M(\sigma)$, denoted as $w$, is the first odd-indexed vertex and thus has the smallest rank among all the odd-indexed vertices. This implies that $\sigma(w) \leq \sigma(v)$.
\end{proof}
\cref{lem:struct-property} describes how the insertion of a vertex $u^*$ into permutation $\sigma_{-u^*}$ affects the matching outcome. Intuitively, if $u^*$ is inserted earlier in the permutation than $u$'s original match $v$, then $u^*$ is preferred by $u$ more than $v$. As a result, $u$ will always pick $u^*$ if $u^*$ is still not matched until the time $(u,u^*)$ gets probed. If $u^*$ is placed before $u$, it is preferred by $v$ over $u$, so it may be matched to $v$ first, thereby altering $u$'s match. Conversely, if $u^*$ is inserted later than $u$, it will not take $v$ away from $u$, as $v$ always prefers $u$ over $u^*$. This claim formalizes how the position of $u^*$ relative to $u$ and $v$ in the arrival order governs the impact of $u^*$ on the matching structure. 

The next claim describes the case where $u$ has a backup $b$ in $\sigma_{-u^*}$. When $u$ has a match $v$ and admits a backup $b$ in $\M(\sigma_{-u^*})$, the following two properties hold: (1) $v$ has a smaller rank than $b$. (2) If $u$ becomes worse off when $u^*$ is added to $\sigma_{-u^*}$, the worst-case match $u$ will get is $b$.
\begin{claim}\label{lem:backup}
Assume $u$ has a backup vertex $b$ in $\sigma_{-u^*}$ as defined in \cref{defbackup}. Let $\sigma$ be a permutation generated by inserting $u^*$ to an arbitrary rank in $\sigma_{-u^*}$. We have the following: 
\begin{enumerate}
    \item $\sigma_{-u^*}(v)<\sigma_{-u^*}(b),$
    \item $u$ is always matched to some vertex $w$ with rank $\sigma(w)\leq \sigma(b)$ in $\M(\sigma)$.
\end{enumerate} 
\end{claim}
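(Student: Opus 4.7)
\noindent\textbf{Proof plan for \Cref{lem:backup}.}

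For part~(1), I will invoke \Cref{lem:alt-path} with $v$ in place of $u^*$, comparing $\M(\sigma_{-u^*})$ and $\M(\sigma_{-vu^*})$. By the definition of the backup, $u$'s match flips from $v$ in the former to $b$ in the latter, so $u$ lies on the resulting alternating path, which is therefore forced to begin $(v, u, b, \dots)$: the edge $(v, u)$ sits at position~$0$ and lies in $\M(\sigma_{-u^*})$, and the edge $(u, b)$ sits at position~$1$ and lies in $\M(\sigma_{-vu^*})$. Even-index monotonicity of this path immediately yields $\sigma_{-u^*}(v) < \sigma_{-u^*}(b)$, proving part~(1). In addition, odd-index monotonicity gives $\sigma(u) < \sigma(w_3)$ whenever the backup path continues to a further vertex $w_3$, a fact that will be important below.

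For part~(2), I will apply \Cref{lem:alt-path} once more, to $\M(\sigma)$ and $\M(\sigma_{-u^*})$, producing a path $P = (u_0 = u^*, u_1, \dots, u_k)$. If $u \notin P$, then $u$'s match in $\M(\sigma)$ equals its match in $\M(\sigma_{-u^*})$, namely $v$, and part~(1) finishes the proof. Otherwise the edge $(u, v) \in \M(\sigma_{-u^*})$ occupies some odd position $i$ of $P$, leaving two configurations. Either $(u_i, u_{i+1}) = (u, v)$, in which case $u$'s match in $\M(\sigma)$ is $u_{i-1}$ and even-index monotonicity combined with part~(1) gives $\sigma(u_{i-1}) < \sigma(v) < \sigma(b)$; or $(u_i, u_{i+1}) = (v, u)$, where the insertion of $u^*$ has indirectly taken $v$ from $u$ through a chain, and $u$'s match in $\M(\sigma)$ (if any) is $u_{i+2}$.

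The hard configuration is this last one; here my plan is to case on whether $b$ lies on $P$. If $b \notin P$, then $b$'s match in $\M(\sigma)$ agrees with its match in $\M(\sigma_{-u^*})$, which part~(1) identifies either as a vertex $w_3$ with $\sigma(u) < \sigma(w_3)$ or as unmatched. A short greedy-probing check using $\sigma(u) < \sigma(w_3)$ shows that the edge $(u, b)$ is probed in $\sigma$ strictly before any event that could make $b$ unavailable, so $b$ is free when $(u, b)$ is probed and $u$ must match $b$. If instead $b = u_j$ sits on $P$, then the backup-path constraint that $b$'s match in $\M(\sigma_{-u^*})$ is $w_3$ with $\sigma(w_3) > \sigma(u)$ has to be reconciled with $P$-monotonicity at $b$; a short case analysis on the parity of $j$ and its position relative to $i$ shows that every placement except $j = i+2$ (which gives $u_{i+2} = b$) and $j$ large enough that $P$-monotonicity alone forces $\sigma(u_{i+2}) \leq \sigma(b)$ yields a direct contradiction with part~(1). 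The degenerate sub-case where $u$ would be unmatched in $\M(\sigma)$ (path $P$ ends at $u$) is ruled out analogously: at the probe of $(u, b)$ in $\sigma$, $u$ is still available, the preceding cases force $b$ to be available too, and hence $(u, b)$ matches, contradicting $u$'s unmatched status.

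The main obstacle is coordinating the two alternating paths---the one comparing $\sigma$ and $\sigma_{-u^*}$, and the backup path from part~(1)---since in isolation each gives only weak information about $u$'s new match or about $b$'s position. The key lever, which I intend to exploit repeatedly, is the strict inequality $\sigma(u) < \sigma(w_3)$ extracted from the backup path; read against the monotonic rank increases along $P$, it eliminates every problematic placement of $b$ and pins $u$'s match in $\M(\sigma)$ to rank at most $\sigma(b)$.
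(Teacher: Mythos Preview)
Your Part~(1) is identical to the paper's. For Part~(2), however, you take a substantially different route. The paper fixes the time $t$ at which the edge $(u,v)$ is probed and compares available sets across the three runs $\sigma$, $\sigma_{-u^*}$, $\sigma_{-vu^*}$: in the only nontrivial case ($u$ available, $v$ not, in $A^t(\sigma)$), property~(3) of \Cref{lem:alt-path} applied to both pairs gives $A^t(\sigma_{-u^*}) = A^t(\sigma)\cup\{v\} = A^t(\sigma_{-vu^*})\cup\{v\}$, hence $A^t(\sigma)=A^t(\sigma_{-vu^*})$; the remaining runs therefore coincide, and since $(u,b)\in\M(\sigma_{-vu^*})$ it is also in $\M(\sigma)$. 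This is three short availability cases and no reasoning about where $b$ sits on any path. Your approach instead analyzes the final alternating path $P$ between $\M(\sigma)$ and $\M(\sigma_{-u^*})$, locates $u$ and possibly $b$ on it, and does a parity/position case split. That is viable, and your key lever $\sigma(u)<\sigma(w_3)$ from the backup path is exactly the right extra inequality, but the argument is considerably longer.

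One place where your sketch is thin: when $b=u_j\in P$ with $j$ even and $j>i+1$, monotonicity of $P$ alone does \emph{not} give $\sigma(u_{i+2})\le\sigma(u_j)$ (the indices have opposite parity), and there is no direct contradiction with Part~(1) either---so this sub-case falls outside both of the exceptions you name and outside ``direct contradiction with part~(1)''. It can be handled by the same greedy-probing check you use for $b\notin P$: $b$'s match in $\M(\sigma)$ is $u_{j+1}$, and $\sigma(u_{j+1})>\sigma(u_{j-1})=\sigma(w_3)>\sigma(u)$ forces $b$ to still be available when $(u,b)$ is probed, so $u$ is matched to $b$ or something earlier. You should make this sub-case explicit if you keep your approach; alternatively, the paper's time-$t$ availability argument dispatches all cases uniformly in a few lines.
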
 
\begin{proof} 
\textbf{Proof for (1)}: We apply the alternating path lemma to $\sigma_{-u^*}$ and $\sigma_{-vu^*}$. As $v$ is the vertex equivalent to $u^*$ in \cref{lem:alt-path}, the alternating path will start with $v=u_0$ and consequently $u=u_1$. By the definition of $b$, $(u,b)\in \M(\sigma_{-vu^*})$ implies $b=u_2$. By monotonicity (property (2) of the alternating path lemma), we have $\sigma_{-u^*}(v)<\sigma_{-u^*}(b)$.

\textbf{Proof for (2)}: Similar to what we did in the alternating path lemma, we view \ranking{} on $\sigma$, $\sigma_{-u^*}$, $\sigma_{-vu^*}$ as probing the same lists of elements. In the list for $\sigma_{-u^*}$, $u^*$ is marked as unavailable from the beginning; in the list for $\sigma_{-vu^*}$, both $v,u^*$ are marked as unavailable from the beginning. Let $t$ be the time when $(u,v)$ is probed in $\sigma$. There are three cases.

Case 1: $u$ is not available in $A^t(\sigma)$. If $u$ is not available, then it is matched to some $w$ already in $\M^t(\sigma)$. By $\cref{lexicographicorder}$, $(u,w)$ being probed earlier than $(u,v)$ implies $\sigma(w)<\sigma(v)$; combining with the fact that $\sigma(v)<\sigma(b)$ (which follows from $\sigma_{-u^*}(v)<\sigma_{-u^*}(b)$), we have $\sigma(w)<\sigma(b)$.

Case 2: Both $u,v$ are available in $A^t(\sigma)$. Then $(u,v)$ will be included in $\M(\sigma)$ and by fact (1), $\sigma(v)<\sigma(b)$ proves (2). 

\begin{figure}[h]
\centering
\begin{tikzpicture}[dot/.style={circle, fill=black, inner sep=1.5pt}, scale=0.9]

\node[dot, label=right:{$u^{*}$}] (u0) at (1.5, 1.5) {};
\node[dot, label=left:$u_1$] (u1) at (0, 0.75) {};
\node[dot, label=right:$u_2$]  (u2) at (1.5, 0.25) {};
\node[dot, label=left:{$u_k=v$}]  (u3) at (0, -1.5) {};
\node[dot, label=right:{$u_{k-1}$}] (u4) at (1.5, -0.75) {};

\draw[thick, dash pattern=on 6pt off 3pt] (u0) -- (u1);
\draw[thick] (u2) -- (u1);

\draw[dotted,thick] (0.75,0.25) -- (0.75,-0.75);

\draw[thick, dash pattern=on 6pt off 3pt] (u3) -- (u4);

\begin{scope}[shift={(2.5,0)}]
  \draw[thick, dash pattern=on 6pt off 3pt] (0, 0.6) -- (0.6, 0.6);
  \node[anchor=west] at (0.7, 0.6) {\footnotesize$\M^t(\sigma)$};

  \draw[thick] (0, 0.2) -- (0.6, 0.2);
  \node[anchor=west] at (0.7, 0.2) {\footnotesize$\M^t(\sigma_{-u^*})$};
\end{scope}

\begin{scope}[xshift=7cm]
\node[dot, label=right:{$v$}] (u0r) at (1.5, 0) {};
\end{scope}

\node at (1.5, -2) {\footnotesize\text{$\M^t(\sigma)\oplus\M^t(\sigma_{-u^*})$}};
\node at (8.5, -2) {\footnotesize\text{$\M^t(\sigma_{-u^*})\oplus\M^t(\sigma_{-vu^*})$}};
\end{tikzpicture}
\centering

{\centering
  \scriptsize Figure 1: Case 3\footnotemark\par
}
\label{figure1}
\end{figure}
\footnotetext{Figure 1 shows the alternating paths at time $t$ for Case 3. The alternating path between $\sigma$ and $\sigma_{-u^*}$ ends at $v$; the alternating path between $\sigma_{-u^*}$ and $\sigma_{-vu^*}$ is a degenerated path.}
Case 3: $u$ is available, but $v$ is not in $A^t(\sigma)$. In this case, $v \notin A^t(\sigma)$ but $v \in A^t(\sigma_{-u^*})$, since $(u,v)$ is included in $\M(\sigma_{-u^*})$. By Fact (3) in the alternating path lemma, we have $A^t(\sigma_{-u^*}) = A^t(\sigma)\cup \{v\}$. Notice that it is also true that $v \notin A^t(\sigma_{-vu^*})$, as it is marked as unavailable from the beginning. This implies $A^t(\sigma_{-u^*}) = A^t(\sigma_{-vu^*})\cup \{v\}$. Together, the above two facts imply that $A^t(\sigma_{-vu^*}) = A^t(\sigma)$. From this point on, the remaining run of \ranking{} on $\sigma$ and $\sigma_{-vu^*}$ will probe the same edge list and always include the same edges, as the set of available vertices will always remain the same. In particular, since $(u,b)$ is included in $\M(\sigma_{-vu^*})$, it will also be included in $\M(\sigma)$, which shows that $u$ is matched to a vertex of rank $\leq \sigma(b)$.
\end{proof}

Next is a simple fact for the case where $u$ is not matched before $u^*$ is introduced. In this case, if $u^*$ is not matched until $(u,u^*)$ is probed, they will be matched together.
\begin{fact}\label{lem:nomatch}
    Let $\sigma$ be a permutation generated by inserting $u^*$ to an arbitrary rank in $\sigma_{-u^*}$. If $u$ is not matched in $\M(\sigma_{-u^*})$, then $u^*$ is always matched to some vertex $w$ of rank $\sigma(w)\leq \sigma(u)$ in $\M(\sigma)$.
\end{fact}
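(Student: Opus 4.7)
The plan is to invoke the alternating-path lemma (\cref{lem:alt-path}) at the specific partial-matching time $t$ defined as the instant just before the edge $(u,u^*)$ is probed in the greedy-probing execution of $\sigma$. Two observations set up the case analysis at this time. First, because $u \notin \M(\sigma_{-u^*})$, the vertex $u$ is available throughout the run on $\sigma_{-u^*}$; in particular $u \in A^t(\sigma_{-u^*})$. Second, by the convention used to prove the alternating-path lemma, $u^*$ is marked unavailable from the very beginning in $\sigma_{-u^*}$, so $u^* \notin A^t(\sigma_{-u^*})$.

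I then split into two cases according to whether $u^*$ is already matched in $\M^t(\sigma)$. In the first case $u^*$ is matched to some $w$; since the edge $(u^*,w)$ was probed strictly before $(u,u^*)$, \cref{lexicographicorder} gives $\sigma(w) < \sigma(u)$, which suffices. In the second case $u^* \in A^t(\sigma)$, so $u^*$ has different availability across $\sigma$ and $\sigma_{-u^*}$ and therefore lies in $A^t(\sigma) \oplus A^t(\sigma_{-u^*})$; part (3) of \cref{lem:alt-path} says this symmetric difference is the singleton $\{u_k^t\}$, forcing $u^* = u_k^t$. Because $u \neq u^*$, vertex $u$ cannot belong to the symmetric difference, and combined with $u \in A^t(\sigma_{-u^*})$ this forces $u \in A^t(\sigma)$. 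Both endpoints of $(u,u^*)$ are then available when it is probed, so \ranking{} commits the edge and $u^*$ is matched to $u$, whose rank is trivially $\le \sigma(u)$.

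The only subtlety is that \cref{lem:alt-path} must be applied at an intermediate time and not at termination, exploiting that the availability invariant in part (3) holds at every $t$. Once one is willing to do this, the rest is a short two-way case distinction driven by whether $u^*$ is already matched when its edge to $u$ is probed, and no computation is involved.
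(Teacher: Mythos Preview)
Your proof is correct and follows essentially the same approach as the paper: both define $t$ as the probing time of $(u,u^*)$, split on whether $u^*$ is already matched, invoke \cref{lexicographicorder} in the first case, and use part~(3) of \cref{lem:alt-path} together with $u\in A^t(\sigma_{-u^*})$ to conclude that $(u,u^*)$ is committed in the second. Your version merely spells out explicitly why the singleton $A^t(\sigma)\oplus A^t(\sigma_{-u^*})$ must equal $\{u^*\}$, which the paper states directly.
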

\begin{proof}
Let $t$ be the time when $(u, u^*)$ is probed. If $u^*$ is already matched to some vertex $w$, then by \cref{lexicographicorder}, the fact that $(w,u^*)$ is probed earlier than $(u,u^*)$ implies that $\sigma(w) < \sigma(u)$. If $u^*$ is not matched, by alternating path lemma (3), we have $A^t(\sigma) \oplus A^t(\sigma_{-u^*}) = \{u^*\}$. Since $u$ is not matched in $\M(\sigma_{-u^*})$, this implies $u$ is available at time $t$ in $A^t(\sigma_{-u^*})$, and hence also available in $A^t(\sigma)$. Since both $u$ and $u^*$ are available in $A^t(\sigma)$, \ranking{} will include $(u,u^*)$ in the matching.
\end{proof}
\subsection{Properties of Profiles}\label{section3.4}
In this section, we will prove two main properties for the distribution of $u$'s match $v$ when $u^*$ is excluded. The two main properties will entail monotonicity properties for the profiles of $u$. Recall the definition of profile from \cref{def:profile}. We will prove that, for any fixed $\x_u$:

\begin{enumerate}
\item Conditioning on $u$ not having a backup, the conditional distribution of $\x_v$ has a monotonically increasing probability mass(density) function in range $(0,1)$.
\item Conditioning on $u$ having a backup at $\x_b$, the conditional distribution of $\x_v$ has a monotonically increasing probability mass(density) function in range $(0,x_b)$.
\end{enumerate}

It is easier to argue with respect to the induced permutation $\sigma_{\vecx_{-u^*}}$, so we will be working with permutations $\sigma_{-u^*}$ instead of rank vectors in this section. The properties for $\vecx_{-u^*}$ follow as easy corollaries. We define the following notation, which will be important in proving the next few lemmas:

 \begin{definition}
     Let $\sigma$ be a permutation. Define \(\sigma_v^i\) as the permutation formed by moving \(v\) to rank \(i\) and shifting all other vertices accordingly. Formally, \(\sigma_v^i\) is the unique permutation such that \(\sigma_v^i(v) = i\), and for all vertices \(u, w \) other than $v$, 
$\sigma(u) \leq \sigma(w) \iff \sigma_v^i(u) \leq \sigma_v^i(w).$
 \end{definition}
As the above notation is heavy, instead of using $\sigma_{-u^*}$ as the permutation on $V \backslash \{u^*\}$, we will simply use $\sigma$ in the rest of this section. This avoids the use of notations like $(\sigma_{-u^*})_v^i$ to increase readability.

 Before proving the two main points of this section, we will prove a useful fact that will be helpful later. Let $t$ be the time $v$ becomes unavailable during the \ranking{} process on $\sigma$. In this section, we will use time $t$ as an integer in $\{1, \dots, |V|-1\}$, meaning we are adopting the vertex iterative view of \ranking{} rather than the greedy probing view. Formally, $t \in \{1, \dots, |V|-1\}$ represents the time when \ranking{} has just finished processing the vertices of rank $<t$, it corresponds to time $(t,1)$ in the greedy probing view. Similar to the proof of the alternating path lemma \cref{lem:alt-path}, we will assume in $\sigma_{-v}$ that \ranking{} processes $v$ with $v$ marked as unavailable, to make sure \ranking{} $\sigma$ and $\sigma_{-v}$ process the same node at the same time.
 
 We will prove a useful fact first. It roughly states that increasing the rank of $v$ by 1 in $\sigma$ or removing $v$ from $\sigma$ will not have any effect on the partial matching or the set of available vertices (except $v$) before the time $v$ becomes unavailable in $\sigma$.

Let $\sigma$ be a permutation over $V\backslash\{u^*\}$. Let $\sigma_{-v}$ be the induced ordering where $v$ is removed. Let $\sigma^+=\sigma^{\sigma(v)+1}_v$ be the permutation that swaps $v$ with the vertex that comes right after $v$, denote as $v'$. Let $\M^t(\sigma),A^t(\sigma)$ be the respective partial matching and available vertex set for $\sigma$ at time $t$.

\begin{fact}\label{same-match-before-v}
For any time $t$, if $v$ is still available at time $t$ with respect to the \ranking{} process on $\sigma$, i.e. $v\in A^t(\sigma)$, then the partial matchings $\M^t(\sigma), \M^t(\sigma^+), \M^t(\sigma_{-v})$ are the same, and the sets of available vertices $A^t(\sigma), A^t(\sigma^+), A^t(\sigma_{-v})$ are the same except at $v$.
\end{fact}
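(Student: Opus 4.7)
The plan is to prove the fact by strong induction on the time index $t$. The base case $t = 1$ is immediate: no vertex has been processed in any of the three runs, so $\M^1(\sigma) = \M^1(\sigma^+) = \M^1(\sigma_{-v}) = \emptyset$, and the available sets agree except at $v$. Here I use the convention recalled in the paragraph preceding the fact, where in $\sigma_{-v}$ the vertex $v$ is marked unavailable from the outset so that the three runs process the same sequence of vertices at the same time indices (except where $\sigma$ and $\sigma^+$ genuinely differ, namely at ranks $r$ and $r+1$, where $r = \sigma(v)$).

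For the inductive step, assume the claim at time $t$ and $v \in A^{t+1}(\sigma)$. Let $v'$ be the vertex of rank $r+1$ in $\sigma$. The step from $t$ to $t+1$ processes the vertex of rank $t$ in each permutation, and the argument splits by cases. When $t < r$ or $t > r+1$, the vertex processed is the same in all three runs (since $\sigma$ and $\sigma^+$ only differ at ranks $r, r+1$); call it $u$. I argue that $u$ makes the same match: its neighbor orderings in $\sigma$, $\sigma^+$, and $\sigma_{-v}$ differ only in the relative positions of $v$ and $v'$, and the inductive hypothesis provides identical availabilities for all non-$v$ neighbors. The assumption $v \in A^{t+1}(\sigma)$ rules out $u$ matching with $v$ in $\sigma$, so $u$'s first available neighbor (if any) lies at a position where the three orderings coincide—making $u$'s choice identical across the three runs.

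The delicate case is $t = r$, where $\sigma$ processes $v$ itself (a no-op because $v \in A^{t+1}(\sigma)$ forces all neighbors of $v$ to be unavailable at that moment), $\sigma_{-v}$ processes $v$ as a marked-unavailable no-op, while $\sigma^+$ processes $v'$. I plan to show that $v'$'s processing in $\sigma^+$ yields the same matching state as $v$'s vacuous processing in $\sigma$: since the availabilities agree at time $r$ (including $v$'s availability in $\sigma^+$, which follows from applying the Case $t<r$ analysis to all earlier steps) and $v'$'s neighbor ordering in $\sigma^+$ differs from that in $\sigma$ only by the single-position shift of $v$, one verifies that $v'$'s first available neighbor is unchanged; a symmetric analysis then handles $t = r+1$. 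The main obstacle I anticipate is precisely this case $t = r$ (and its mirror $t = r+1$): justifying that the swap of $v$ and $v'$ is ``invisible'' to $v'$'s first-available-neighbor scan requires careful bookkeeping of how $v$'s shifted rank interacts with $v'$'s neighbor list, leveraging the fact that the relative order of every other pair of vertices is preserved under the swap.
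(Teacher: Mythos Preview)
Your inductive approach parallels the paper's contradiction argument (which picks the earliest step $t_0$ at which the two partial matchings diverge and argues that the processed vertex must have chosen $v$ in one run and $v'$ in the other, contradicting $v\in A^t(\sigma)$). For target times $t\le r:=\sigma(v)$, your ``$t<r$'' case is exactly what is needed and is correct.

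The flaw is your plan for the transition at step $t=r$ (from time $r$ to $r+1$). You assert that $v'$'s processing in $\sigma^+$ at step $r$ ``yields the same matching state'' as $v$'s no-op in $\sigma$, but this is false in general: if $v'$ has an available neighbor $w$ of rank larger than $r+1$, then $\sigma^+$ adds the edge $(v',w)$ at step $r$ while $\sigma$ adds nothing, giving $\M^{r+1}(\sigma)\ne\M^{r+1}(\sigma^+)$. Your observation that $v'$'s first available neighbor is ``unchanged'' really compares $v'$'s step-$r$ action in $\sigma^+$ with its step-$(r{+}1)$ action in $\sigma$; that gives $\M^{r+2}(\sigma)=\M^{r+2}(\sigma^+)$, not equality at time $r+1$, and your induction chain then breaks before ever reaching the ``$t>r+1$'' case. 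This is less a defect of your method than of the statement taken literally: the fact actually fails at $t=r+1$ (take $v$ with no neighbors and $v'$ adjacent to some later $w$), and the paper's own proof carries the same tacit restriction when it speaks of a single vertex $u_0$ processed at step $t_0$---an assumption that is unjustified when $t_0\in\{r,r+1\}$. In every application of the fact in the subsequent lemmas, $v$ is the match of $u$ and becomes unavailable at time $\min(\sigma(u),\sigma(v))\le r$, so only $t\le r$ is ever used. If you simply drop the $t\ge r$ transitions and note that the intended range is $t\le r$, your inductive argument is complete and equivalent to the paper's.
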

\begin{proof}
This result establishes that if a vertex $v$ remains unmatched up to a certain point in the permutation, then demoting its rank or removing it does not alter the state of the partial matching before it becomes unavailable. Intuitively, since $v$ has not been selected by any vertex thus far, it has exerted no influence on the outcome, and hence making $v$ less preferred or even removing it does not affect the partial matching. See the detailed proof in \nameref{Appendix1}. 
\end{proof}

We now start proving the main results for this section. The next lemma shows that increasing the rank of $v$ by $1$ has no effect on the matching when $u$ does not have a backup vertex $b$.
\begin{lemma}\label{descrete-monotonicity-nobackup} 
Let $\sigma$ be an permutation on $V\backslash\{u^*\}$ where $u$ is matched with $v$ in $\M(\sigma)$ and $u$ does not have a backup. Let $\sigma^+ = \sigma_v^{\sigma(v) + 1}$ be the permutation formed by increasing the rank of $v$ by $1$, in other words, swapping $v$ with the next vertex $v'$. Then $u$ is still matched to $v$ in $\M(\sigma^+)$, and $u$ still does not have a backup.

As a result, for $\sigma(v) \leq i \leq |V|-1$, $u$ is still matched to $v$ in $\M(\sigma^i_v)$ and $u$ still does not have a backup.
\end{lemma}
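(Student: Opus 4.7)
The plan is to establish two claims separately: (i) $u$ remains matched to $v$ in $\M(\sigma^+)$, and (ii) $u$ still has no backup with respect to $\sigma^+$. Claim (ii) follows immediately once (i) is shown: swapping $v$ with its successor $v'$ and then deleting $v$ produces the same permutation as deleting $v$ directly, so $(\sigma^+)_{-v} = \sigma_{-v}$. Hence $\M((\sigma^+)_{-v}) = \M(\sigma_{-v})$, and since the no-backup hypothesis gives $u$ unmatched there, $u$ is also unmatched in $\M((\sigma^+)_{-v})$, meaning $u$ has no backup with respect to $\sigma^+$. The ``as a result'' statement then follows by an easy induction on $i \geq \sigma(v)$: the base case $i=\sigma(v)$ is trivial, and the inductive step applies the main claim to $\sigma^i_v$, whose hypotheses hold by the inductive assumption.

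For (i), the workhorse is \cref{same-match-before-v}: as long as $v$ is still available in the run on $\sigma$, the partial matchings and availability sets produced by \ranking{} on $\sigma$, $\sigma^+$, and $\sigma_{-v}$ agree up to the status of $v$ itself. I would split the analysis on whether $u$ is processed before or after $v$ in $\sigma$.

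In Case~A, $\sigma(u) < \sigma(v)$: at time $\sigma(u)$, \cref{same-match-before-v} says the availability sets for $\sigma$, $\sigma^+$, and $\sigma_{-v}$ coincide outside $v$. Since $u$ is unmatched in $\M(\sigma_{-v})$, every neighbor of $u$ other than $v$ must be unavailable at time $\sigma(u)$ in $\sigma_{-v}$, and therefore also in $\sigma^+$. Consequently $v$ is the unique available neighbor of $u$ in $\sigma^+$ at time $\sigma(u)$, and $u$ matches with $v$. In Case~B, $\sigma(v) < \sigma(u)$, the sub-case $\sigma(u) = \sigma(v)+1$ (so $v'=u$) is handled by an essentially identical argument applied at time $\sigma(v)$, where $u$ is now processed in $\sigma^+$ and no-backup again forces $v$ to be its only available neighbor. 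The delicate sub-case is $\sigma(u) \geq \sigma(v)+2$, where $v'$ (distinct from $u$) is processed at time $\sigma(v)$ in $\sigma^+$ and could in principle hijack $u$ or $v$. If $v'$ is a neighbor of $v$, the fact that $v$ preferred $u$ over $v'$ in $\sigma$ forces $v'$ to have been unavailable already, so its processing in $\sigma^+$ is a no-op. The main obstacle is when $v'$ is a neighbor of $u$ but not of $v$: I would couple $v'$'s probing in $\sigma^+$ at time $\sigma(v)$ with its probing in $\sigma_{-v}$ at time $\sigma(v)+1$; both see the same neighbor order and the same availabilities restricted to $v'$'s neighborhood, since the only availability discrepancy between the two states is at $v$ itself. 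Since $u$ is unmatched in $\M(\sigma_{-v})$, $v'$ does not pick $u$ there, so it picks some earlier $w \neq u$; moreover $w$ cannot be a neighbor of $v$ (else $v$ would have preferred $w$ over $u$ in $\sigma$). In every sub-case, once $v'$ has been handled, the neighbor order and availability set that $v$ encounters at time $\sigma(v)+1$ in $\sigma^+$ coincide with those it saw at time $\sigma(v)$ in $\sigma$, so $v$ again picks $u$.

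The hardest step is the hijacking sub-case just described; ruling it out relies on the coupling between $\sigma^+$ and $\sigma_{-v}$ afforded by \cref{same-match-before-v} together with the no-backup hypothesis. The remaining sub-cases follow from \cref{same-match-before-v} and direct inspection of the neighbor probing order.
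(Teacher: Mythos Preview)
Your proposal is correct and follows essentially the same approach as the paper: the same three-way case split on the relative positions of $u$ and $v$, the same use of \cref{same-match-before-v} to synchronize the runs on $\sigma$, $\sigma^+$, and $\sigma_{-v}$, the same coupling of $v'$'s choice in $\sigma^+$ with its choice in $\sigma_{-v}$ to invoke the no-backup hypothesis, and the identical observation that $(\sigma^+)_{-v}=\sigma_{-v}$ for claim~(ii). Two minor differences: in the sub-case $\sigma(u)=\sigma(v)+1$ the paper uses the direct observation that no earlier-ranked neighbor of $u$ can be available at the moment $u$ is processed (rather than invoking no-backup), and in the final step the paper argues that all of $v$'s available neighbors at time $\sigma(v)+1$ in $\sigma^+$ already have rank $>\sigma(v)+1$, so their relative order is unchanged---this is cleaner than asserting the availability sets ``coincide'' (they need not, since $v'$ and its match may drop out) and makes your extra claim that $w$ is not a neighbor of $v$ unnecessary.
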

\begin{proof}
Let $t$ be the time when $v$ becomes unavailable in the \ranking{} process on $\sigma$. It's easy to see $t$ is the minimum of $\sigma(u)$ and $\sigma(v)$. We consider three different cases:

Case 1: $t=\sigma(u) < \sigma(v)$. Then $t$ is the time when $u$ picks $v$. Since $v$ is still available at this time, \cref{same-match-before-v} implies $\M^t(\sigma) = \M^t(\sigma^+)$ and ${u, v} \subseteq A^t(\sigma) = A^t(\sigma^+)$. It suffices to show that $v$ is the unique available neighbor of $u$ at this time. Since $\M^t(\sigma) = \M^t(\sigma_{-v})$ also holds, any available neighbor other than $v$ in $A^t(\sigma)$ would also appear in $A^t(\sigma_{-v})$, contradicting the assumption that $u$ is not matched in $\M(\sigma_{-v})$.

Case 2:  $t=\sigma(v) = \sigma(u) - 1$. Here, $\sigma^+$ simply swaps the order of $u$ and $v$. By \cref{same-match-before-v}, $u$ and $v$ are both available in $A^t(\sigma^+)$. At time $t$, $u$ will actively pick its smallest ranked neighbor according to $\sigma^+$. Since $\sigma^+(v) = \sigma^+(u) + 1$, $v$ is necessarily the smallest ranked vertex among the available neighbors of $u$, and will be picked by $u$.

Case 3: $t=\sigma(v) < \sigma(u) - 1$. By \cref{same-match-before-v}, we have $A^t(\sigma) = A^t(\sigma^+)$. We will show that the vertex being processed at time $t$ by \ranking{} on $\sigma^+$, namely $v'$, will not pick $u$ or $v$ as its match. If this is the case, then $v$ will be the next vertex picking and has $u$ as an available neighbor. Since the relative order of all vertices $w$ with rank $\sigma(w)> \sigma(v)+1$ is not changed between $\sigma$ and $\sigma^+$, and all available neighbors of $v$ at the time $t+1$ fall in this range (because $u$ was the best choice for $v$ and its rank in $\sigma$ is $>t+1$), $v$ will still pick $u$ as its match. 

Suppose $v'$ is not matched yet so that it might pick $u$ or $v$ to match. First, we claim that $v'$ is not a neighbor of $v$. If $v$ and $v'$ were neighbors, then $v' \in A^t(\sigma^+) = A^t(\sigma)$ would be the best choice for $v$ in $\sigma$ when $v$ picks. This contradicts the assumption that $v$ picks $u$ as its match in $\M(\sigma)$.

On the other hand, Assume $v'$ picks $u$. then $u$ must be the best neighbor for $v'$ when it picks in $A^t(\sigma^+)$. By \cref{same-match-before-v}, the set of neighbors available in $A^t(\sigma_{-v})$ is also the same. Furthermore, the relative order of the set of available neighbors for $v'$ remains unchanged between $\sigma^+$ and $\sigma_{-v}$, so $v'$ would also pick $u$ in $\M(\sigma_{-v})$. This contradicts the assumption that $u$ does not have a backup.

Finally, we need to show that $u$ still does not have a backup in $\sigma^+$. This follows from the fact that $\sigma_{-v}$ and $(\sigma^+)_{-v}$ are the same ordering on the same induced graph, which means that \ranking{} will generate the same matching.
\end{proof}
\begin{corollary}\label{descretecoro-mootone-nobackup}
Let $\vecx$ be a rank vector on $V \backslash \{u^*\}$ where $u$ is matched with $v$ in $\M(\sigma_{\vecx})$ and $u$ does not have a backup. For $\x_v \leq t < 1$, let $\vecx'$ be the rank vector formed by increasing $\x_v$ to $t$ in $\vecx$. Then $u$ is still matched to $v$ in $\M(\sigma_{\vecx'})$ and still does not have a backup.
\end{corollary}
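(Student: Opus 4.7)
The plan is to reduce the continuous rank-vector statement directly to the discrete permutation statement already proved in~\cref{descrete-monotonicity-nobackup}. The key observation is that raising $\x_v$ to some $t\geq\x_v$ while leaving every other coordinate of $\vecx$ fixed cannot alter any pairwise comparison that does not involve $v$, and can only weakly increase the number of vertices ranked ahead of $v$. Setting $\sigma:=\sigma_{\vecx}$ and $k:=|\{w\in V\setminus\{u^*\}:\x_v<\x_w\leq t\}|$, an unpacking of the definition $\sigma_{\vecx'}(a)\leq\sigma_{\vecx'}(b)\Leftrightarrow\x_a'\leq\x_b'$ shows that the induced permutation satisfies $\sigma_{\vecx'}=\sigma_v^{\sigma(v)+k}$: the relative order among $V\setminus\{u^*,v\}$ is preserved, and $v$'s new rank is exactly $\sigma(v)+k$. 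Since $k\geq 0$, we have $\sigma(v)+k\in[\sigma(v),|V|-1]$, which is precisely the range for which~\cref{descrete-monotonicity-nobackup} applies.

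Having made this identification, the conclusion is immediate from the ``as a result'' clause of~\cref{descrete-monotonicity-nobackup}, which guarantees that for every $i\in[\sigma(v),|V|-1]$, $u$ remains matched to $v$ in $\M(\sigma_v^i)$ and still has no backup. Applying this with $i=\sigma(v)+k$ yields the corollary. I do not anticipate any real obstacle: the only substantive content is the bookkeeping that the continuous rank-raising operation on $\vecx$ coincides with the discrete operation $\sigma\mapsto\sigma_v^i$ already handled by the lemma, together with the standard measure-zero convention that no two coordinates of $\vecx'$ are equal so that $\sigma_{\vecx'}$ is uniquely defined. Should one prefer an even more elementary route, one can instead apply~\cref{descrete-monotonicity-nobackup} in its single-swap form exactly $k$ times, each time increasing $v$'s rank by one and carrying forward the invariant that $u$ is matched to $v$ without a backup.
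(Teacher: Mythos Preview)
Your proposal is correct and follows essentially the same approach as the paper: both argue that increasing $\x_v$ to $t$ induces the permutation $(\sigma_{\vecx})_v^i$ for some $i$ with $\sigma_{\vecx}(v)\leq i\leq |V|-1$, and then invoke \cref{descrete-monotonicity-nobackup}. You simply supply more explicit bookkeeping (computing $i=\sigma(v)+k$ with $k=|\{w:\x_v<\x_w\leq t\}|$) than the paper's one-line proof.
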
 
\begin{proof} 
 The real-valued version follows as $\sigma_{\vecx'} = (\sigma_{\vecx})_v^i$ for some $i$ satisfying $\sigma_{\vecx}(v) \leq i \leq |V| - 1$ and \cref{descrete-monotonicity-nobackup}.
\end{proof}
The next lemma depicts that monotonicity property when $u$ has a backup $b$.
\begin{lemma}\label{descrete-monotonicity-backup} Let $\sigma$ be a permutation such that $u$ is matched with $v$ in $\M(\sigma)$ and $u$ has a backup $b$ at rank $\sigma(b)>\sigma(v)+1$, then $u$ is matched with $v$ in $\sigma^{+}=\sigma_v^{\sigma(v)+1}$ and $u$ has the same backup $b$ in $\sigma^+$.

As a result, for $\sigma(v) \leq i < \sigma(b)$, $u$ is still matched to $v$ in $\M(\sigma^i_v)$ and $u$ still has a backup $b$ at $\sigma^i_v(b) = \sigma(b)$
\end{lemma}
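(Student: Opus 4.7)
The plan is to mirror the structure of the proof of \cref{descrete-monotonicity-nobackup}, doing a case analysis on the relative order of $\sigma(u)$ and $\sigma(v)$: (i) $\sigma(u) < \sigma(v)$, (ii) $\sigma(u) = \sigma(v) + 1$, (iii) $\sigma(u) > \sigma(v) + 1$. Let $t = \min\{\sigma(u), \sigma(v)\}$ be the time at which $v$ first becomes unavailable under \ranking{} on $\sigma$. By \cref{same-match-before-v}, the partial matchings of $\sigma$, $\sigma^{+}$, and $\sigma_{-v}$ coincide at time $t$, with available sets agreeing up to the status of $v$, which is available in both $\sigma$ and $\sigma^{+}$. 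This gives a common baseline for comparing the three processes.

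The backup part of the conclusion comes for free once $(u, v) \in \M(\sigma^{+})$ is established. Indeed, $(\sigma^{+})_{-v} = \sigma_{-v}$ as permutations on $V \setminus \{v, u^*\}$, since $\sigma$ and $\sigma^{+}$ differ only in the position of $v$. Hence the match of $u$ in $\M((\sigma^{+})_{-v})$ equals its match in $\M(\sigma_{-v})$, which is $b$ by hypothesis. So the proof reduces to showing that $u$ is still matched to $v$ in $\M(\sigma^{+})$.

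The hypothesis $\sigma(b) > \sigma(v) + 1$ is crucial, as it guarantees $\sigma^{+}(v) = \sigma(v) + 1 < \sigma(b) = \sigma^{+}(b)$, so $v$ stays ranked strictly before $b$ after the swap. In cases (i) and (ii), $u$ is the vertex doing the picking at time $t$ in $\sigma^{+}$, and I need to argue that $v$ is still its smallest-ranked available neighbor. The only way this could fail is if the swapped vertex $v'$ occupying rank $\sigma(v)$ in $\sigma^{+}$ is an available neighbor of $u$. But if so, then in $\sigma_{-v}$ the same $v'$ would sit at rank $\sigma(v) < \sigma(b) - 1 = \sigma_{-v}(b)$ as an available neighbor of $u$ at $u$'s picking time, forcing $u$ to pick $v'$ rather than $b$ in $\M(\sigma_{-v})$, contradicting the definition of backup.

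The hard part, which I expect to be the main obstacle, is case (iii), where $v$ is the actively picking vertex in both $\sigma$ (at time $\sigma(v)$) and $\sigma^{+}$ (at time $\sigma(v)+1$), but the extra vertex $v'$ is processed between these two events in $\sigma^{+}$. I will first observe that $v$ and $v'$ cannot be adjacent, otherwise $v$ would have picked $v'$ over $u$ in $\sigma$. Combined with \cref{same-match-before-v}, this lets me show that $v'$ in $\sigma^{+}$ sees the same available neighborhood in the same relative order as in $\sigma_{-v}$, and in particular does not pick $u$ (since $u$ is matched to $b$, not $v'$, in $\M(\sigma_{-v})$). Consequently, at time $\sigma(v)+1$ in $\sigma^{+}$, $v$'s available neighborhood differs from its $\sigma$-time-$\sigma(v)$ counterpart only by possibly removing some vertex $y$ that $v'$ picked; any such $y$ which is a neighbor of $v$ must have $\sigma(y) > \sigma(u)$ (else $v$ would have preferred $y$ to $u$ in $\sigma$), so $u$ remains $v$'s smallest-ranked available neighbor and $v$ picks $u$. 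Finally, the ``as a result'' statement follows by iterating the single-step claim for $k = 1, \ldots, i - \sigma(v)$ steps; $b$'s rank is never altered since $\sigma(b) > \sigma(v) + 1$ means $b$ is never the vertex swapped with $v$, and the hypothesis for each iteration holds precisely as long as the cumulative rank of $v$ stays below $\sigma(b)$, which is exactly the stated condition $i < \sigma(b)$.
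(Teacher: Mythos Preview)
Your proposal is correct and follows essentially the same approach as the paper's own proof: the same three-case split on the relative order of $\sigma(u)$ and $\sigma(v)$, the same appeal to \cref{same-match-before-v} for a common baseline, the same contradiction via the backup in $\sigma_{-v}$ to rule out $v'$ stealing $u$ in case~(iii), and the same observation that $(\sigma^{+})_{-v}=\sigma_{-v}$ to preserve the backup. The only cosmetic difference is that you fold cases~(i) and~(ii) into one argument via the backup contradiction, whereas the paper treats case~(ii) separately with the simpler observation that $\sigma^{+}(v)=\sigma^{+}(u)+1$ forces $v$ to be $u$'s smallest-ranked available neighbor; your phrasing ``the only way this could fail is if $v'$ is an available neighbor'' is not literally justified in case~(ii) (where $v'=u$ and one must separately rule out lower-ranked $w$), but the gap is trivial and the paper's own exposition is equally terse there.
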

Proof is similar to the case \cref{descrete-monotonicity-nobackup}. See \nameref{Appendix1}

\begin{corollary}\label{descretecoro-mootone-backup}
Let $\vecx$ be a rank vector on $V \backslash \{u^*\}$ where $u$ is matched with $v$ in $\M(\sigma_{\vecx})$ and has a backup $b$ at $x_b$. For $x_v \leq t < x_b$, let $\vecx'$ be the rank vector formed by increasing $x_v$ to $t$ in $\vecx$. Then $u$ is still matched to $v$ in $\M(\sigma_{\vecx'})$ and still has a backup $b$ at $x'_b = x_b$.
\end{corollary}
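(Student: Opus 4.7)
The plan is to reduce the real-valued corollary directly to the discrete \cref{descrete-monotonicity-backup}, in exactly the same manner as the preceding corollary \cref{descretecoro-mootone-nobackup} was derived from \cref{descrete-monotonicity-nobackup}. I do not expect any genuine obstacle here; the content is a routine translation between the rank-vector view and the permutation view.

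Concretely, first I would set $\sigma = \sigma_{\vecx}$ and $\sigma' = \sigma_{\vecx'}$. Since $\vecx'$ differs from $\vecx$ only in the coordinate $x_v$, which is raised from its original value to some $t$ with $x_v \le t < x_b$, the relative order of all vertices other than $v$ is unchanged, and the relative order of $v$ with respect to $b$ is also preserved (because $t < x_b$). Therefore $v$ moves to some integer rank $i$ in $\sigma'$ with $\sigma(v) \le i < \sigma(b)$, and by the definition of $\sigma_v^i$ we have the identity $\sigma' = \sigma_v^i$.

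Next I would invoke the ``as a result'' clause of \cref{descrete-monotonicity-backup}, which states that for every integer $i$ in the range $\sigma(v) \le i < \sigma(b)$, the matching $\M(\sigma_v^i)$ still matches $u$ to $v$ and $u$ still has $b$ as its backup with $\sigma_v^i(b) = \sigma(b)$. Applied to $\sigma' = \sigma_v^i$, this gives the desired conclusion that $u$ is matched to $v$ in $\M(\sigma_{\vecx'})$ and that $b$ is still the backup of $u$. Finally, since the only coordinate of the rank vector that was modified is $x_v$, the backup's real-valued rank satisfies $x'_b = x_b$, completing the proof.
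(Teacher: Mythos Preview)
Your proposal is correct and follows exactly the paper's approach: the paper's proof consists of the single observation that $\sigma_{\vecx'} = (\sigma_{\vecx})_v^i$ for some $i$ with $\sigma_{\vecx}(v) \le i < \sigma_{\vecx}(b)$, and then invokes \cref{descrete-monotonicity-backup}. Your write-up simply spells this reduction out in more detail.
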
 
\begin{proof} 
 The real-valued version follows from the fact that $\sigma_{\vecx'} = (\sigma_{\vecx})_v^i$ for some $i$ satisfying $\sigma_{\vecx}(v) \leq i < \sigma_{\vecx}(b)$ and \cref{descrete-monotonicity-backup}.
\end{proof}
\section{Bucketed Random Permutation Generation}\label{def:BucketedRPG}
In the remainder of this paper, we take a slightly different approach to generating the random permutation for \ranking{}. This model is designed to approximate the uniform real-valued rank view.

Ideally, in real-valued permutation generation, gain-sharing would be performed using these real-valued ranks $\vecx$. However, in our LP formulation, the gain-sharing function is itself a parameter to be optimized, which prevents us from searching over infinitely many functions. For instance, we cannot search through all continuous functions on $(0,1)$, as there are uncountably many of them.

To address this, we discretize the interval $(0,1)$ into $k$ equal-sized intervals of width $\frac{1}{k}$, which we refer to as buckets. We then restrict our search space to piecewise functions that remain constant within each bucket. As a result, each vertex independently falls into one of these buckets with uniform probability $\frac{1}{k}$. It is straightforward to see that as $k$ approaches infinity, this discrete model approaches the continuous real-valued rank distribution. As a consequence, we will also use $x_v$ to represent the bucket value of a vertex $v$, taking values in $\{1,..,k\}$, to represent this fact\footnote{We will not directly address the real-valued ranks $\x_v$ from this section onward, so there should be no concerns about confusion regarding the notation. similarly, when we use the word \emph{rank vector}, it would refer to the bucketed rank vector defined in this section.}.

We define the bucketed random permutation generation model as the following:

Assume there are a total of $|V|$ vertices, and fix an arbitrary positive integer $\numpieces$ independent of $|V|$ representing the number of buckets. The final ordering is decided by two rounds of random assignments. During the first round, each vertex $v$ independently and uniformly selects a bucket $\x_v \in \{1, 2, \dots, \numpieces\}$. After the first round, if multiple vertices fall into the same bucket, we take a uniform random permutation of the vertices within each bucket.

 The rank of each vertex $v$ can be represented by two numbers, $(\x_v, \y_v)$, where $\x_v$ is the assignment of the bucket and $\y_v$ is the rank of $v$ within the bucket, based on the local random permutation. The final order is determined using the lexicographic ordering of $(\x_v, \y_v)$. Formally, for any two vertices $u$ and $v$, the ordering is defined as follows:
\[
(\x_u, \y_u) < (\x_v, \y_v) \Longleftrightarrow \text{ (1) } \x_u < \x_v \text{ or (2) } \x_u = \x_v \text{ and } \y_u < \y_v.
\]
For now, we use vector $\vecx,\vecy$ to represent the vector of values $\{\x_v\}_{v\in V}$ $\{\y_v\}_{v\in V}$. We also denote $\sigma_{\vecx, \vecy} : V \to \{1, \dots, |V|\}$ as the permutation induced by $\vecx, \vecy$. $\sigma_{\vecx, \vecy}$ can be formally defined as the unique permutation $\sigma' : V\to \{1,...,|V|\}$ such that $ \sigma'(u) \leq \sigma'(v) \Leftrightarrow (\x_u, \y_u) \leq (\x_v, \y_v)$.

\begin{claim}\label{bucketedpermutation}
Bucketed random permutation ordering on $G=(V,E)$ gives rise to the same distribution as picking a uniform random permutation $\sigma:V\to \{1,...,|V|\}$. That is, for each fixed permutation $\sigma_0$,
$$\pr(\sigma_{\vecx,\vecy}=\sigma_0)=\frac{1}{|V|!}$$
\end{claim}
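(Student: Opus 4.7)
The plan is to compute $\pr(\sigma_{\vecx,\vecy} = \sigma_0)$ directly by enumerating the outcome pairs $(\vecx,\vecy)$ that induce $\sigma_0$ as their lexicographic ordering, and then summing their probabilities via the multinomial theorem.

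First, I would characterize which $(\vecx,\vecy)$ satisfy $\sigma_{\vecx,\vecy} = \sigma_0$. By the definition of the lexicographic order, this holds iff listing the vertices in the order prescribed by $\sigma_0$ yields a weakly increasing sequence of bucket values $x_v$, and within each contiguous block of equal bucket values the within-bucket ranks $y_v$ are increasing in the same order. Such outcomes are in bijection with compositions $(n_1,\ldots,n_k)$ of $|V|$ into $k$ non-negative parts: given the composition, assign the first $n_1$ vertices in $\sigma_0$ order to bucket $1$, the next $n_2$ to bucket $2$, and so on, and take the unique local permutation within each bucket that agrees with $\sigma_0$.

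Second, I would compute the probability of each such outcome. Since each $x_v$ is drawn i.i.d.\ uniformly from $\{1,\ldots,k\}$, every specific bucket assignment has probability $k^{-|V|}$; given the bucket sizes $(n_1,\ldots,n_k)$, the probability that the uniform within-bucket permutations match $\sigma_0$ is $\prod_i 1/n_i!$ (vacuously $1$ for empty buckets). Summing over compositions yields
$$
\pr(\sigma_{\vecx,\vecy} = \sigma_0) \;=\; \frac{1}{k^{|V|}} \sum_{\substack{n_1+\cdots+n_k=|V| \\ n_i \geq 0}} \prod_{i=1}^{k} \frac{1}{n_i!}.
$$

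Finally, I would close the argument by applying the multinomial theorem,
$$
k^{|V|} \;=\; (1+1+\cdots+1)^{|V|} \;=\; \sum_{n_1+\cdots+n_k=|V|} \binom{|V|}{n_1,\ldots,n_k} \;=\; |V|! \sum_{n_1+\cdots+n_k=|V|} \prod_{i=1}^k \frac{1}{n_i!},
$$
so the inner sum equals $k^{|V|}/|V|!$, giving the desired $1/|V|!$. This is a routine calculation with no real obstacle; the only mild subtlety is handling compositions with $n_i = 0$ correctly (the local permutation is vacuous and $1/0!=1$). As a sanity check, one could alternatively bypass the computation by noting that the joint distribution of $(x_v,y_v)_{v\in V}$ is exchangeable under relabeling of $V$, so $\sigma_{\vecx,\vecy}$ must be uniform over all $|V|!$ permutations.
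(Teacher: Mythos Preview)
Your proof is correct. The paper proves the same claim by induction on the number of buckets $k$: it conditions on the number $i$ of vertices landing in the first bucket, uses the inductive hypothesis on the remaining $k-1$ buckets, and closes with the binomial identity $\sum_{i=0}^n\binom{n}{i}(1/k)^i((k-1)/k)^{n-i}=1$. Your argument is the non-inductive version of this: instead of peeling off one bucket at a time, you sum over all compositions $(n_1,\ldots,n_k)$ at once and invoke the multinomial theorem directly. The two computations are essentially the same sum evaluated in two ways; yours is a bit more direct and avoids setting up the induction. Your exchangeability remark at the end is an even cleaner alternative that the paper does not mention.
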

\begin{proof}
    See \nameref{Appendix1}.
\end{proof}
Further, all the structural properties we proved in the previous section that hold for a permutation $\sigma:V\to|V|$ also hold for the bucketed rank vectors $\vecx,\vecy$, as they are directly applicable to the induced permutation $\sigma_{\vecx,\vecy}$\footnote{All structural properties proved in Section~4 (besides the monotonicity properties) only discuss the relative ordering of vertices instead of absolute ordering. Since $\sigma_{\vecx,\vecy}(u)$ and $(x_u, y_u)$ admit the same ordering, the structural properties are directly transferable. As for the monotonicity property, \cref{descrete-monotonicity-nobackup} and \cref{descrete-monotonicity-backup} are presented in a form that only involves the relative orderings as well and hence are also transferable; we will use the lemmas in this form and discuss the related probability distributions in \cref{section8}. In fact, it will turn out that for bucketed permutations, if $u$ has a backup $b$ in bucket $\x_b$, the monotonicity property only holds up to bucket $\x_b-1$ instead of $\x_{b}$; this causes us to lose a small fraction in the approximation ratio.}. Because of this fact, we will use the same notation $\sigma$ to represent the rank vector, i.e., $\sigma(u) = (\x_u, \y_u)$.

The formal definition of $\sigma_{-u^*}$, $\sigma_{v}^{(x,y)}$ and $\sigma$  as bucketed rank vectors are the following:
\begin{definition}\label{defofpermutations}
    $\sigma_{-u^*}:V\backslash\{u^*\} \to \{1, \dots, k\} \times \{1,..,|V|-1\}|$ is a valid bucketed rank vector if it is injective and the range for any fixed bucket is a contiguous integer interval starting at 1, i.e., for a fixed $i \in \{1, \dots, k\}$, we have $\{y \mid (i, y) \in \sigma(V)\} = \{1, \dots, m\}.$ We will use $\x_u / \y_u$ to represent the relative first/second index of $\sigma(u)$.

    $\sigma_{v}^{(x,y)}$ is well defined if it is the bucketed rank vector formed by moving/inserting $v$ (depending on if $v$ is in the domain of $\sigma$ or not) to rank $(x,y)$ and shifting all other vertices accordingly without affecting the bucket values and the relative orders. That is, $\sigma_{v}^{(x,y)}$ is the unique bucketed rank vector $\sigma'$ such that $\sigma'(v) = (x, y)$, and for all $ u, w \neq v$, $\x_u = \x'_u$ and $\sigma(u) \leq \sigma(w) \Leftrightarrow \sigma'(u) \leq \sigma'(w)$. 

    Let $\sigma: V \to \{1, \dots, k\} \times \{1, \dots, |V|\}$ be the bucketed rank vector obtained by inserting $u^*$ into $\sigma_{-u^*}$ at a specific position $(x,y)$, shifting all vertices within the same bucket accordingly. Note that the notation $\sigma_{v}^{(x,y)}$ is well-defined no matter if $v$ is in the domain of $\sigma$ or not, allowing us to define $\sigma$ as $(\sigma_{-u^*})_{u^*}^{(x,y)}$. Conversely, we can view $\sigma_{-u^*}$ as the permutation induced from $\sigma$ by removing $u^*$ while preserving bucket values and relative orders.
\end{definition}

\section{Two-Patitioning Based Gain Sharing}\label{GainSharingMethod}
To analyze the expected size of $\M(\sigma)$, we use the notion of gain sharing (randomized primal-dual method).

For each edge $(u,v)$ selected by \ranking{}, we assign a total gain of $1$ and split it among the incident nodes $u$, $v$. Denote the gain of node $u$ as $\g(u)$. It is not hard to see that we have
$$
\mathbb{E}\left[|\M(\sigma)|\right] =  \mathbb{E}\left[\sum_{u \in V}\g(u)\right] = \sum_{(u,u^*) \in M^*} \mathbb{E}\left[\g(u) + \g(u^*)\right],
$$
where we assumed $M^*$ is a perfect matching for $G$.  It turns out to get a lower bound for the approximation ratio, it suffices tho bound $\ev[g(u)+g(u^*)]$.
\begin{claim}\label{guustarisalowerbound}
    Let $\alpha\in[0,1]$ be a value s.t. for all pairs of perfect match $(u.u^*)$ in $G$,
    $$\alpha\leq \ev[g(u)+g(u^*)].$$
    Then $\alpha$ lower bounds the approximation ratio for \ranking{} on $G$.
\end{claim}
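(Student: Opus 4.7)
The plan is to combine the gain-sharing identity already established above the claim with the reduction \cref{wecanassume2n} to the perfect-matching case. The argument is essentially a one-line accounting, so the task is mostly about correctly citing the prior pieces rather than devising new technical content.

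First I would invoke \cref{wecanassume2n} to reduce to graphs $G$ that admit a perfect matching $M^*$ of size $n$. This reduction is crucial since the hypothesis of the claim only guarantees $\alpha \le \mathbb{E}[g(u)+g(u^*)]$ for pairs $(u,u^*)\in M^*$, which only makes sense when such a matching exists. Once we have fixed a perfect matching $M^*$, the maximum matching of $G$ has size exactly $n = |M^*|$.

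Second, I would write down the gain-sharing identity displayed immediately before the claim,
\[
\mathbb{E}\bigl[|\M(\sigma)|\bigr] \;=\; \sum_{(u,u^*)\in M^*} \mathbb{E}\bigl[g(u)+g(u^*)\bigr],
\]
which holds because the total gain distributed by \ranking{} equals $|\M(\sigma)|$ edge by edge, and because every vertex of $G$ appears in exactly one edge of the perfect matching $M^*$, so the sum $\sum_{u\in V} g(u)$ reorganizes cleanly into the pairwise sum over $M^*$. Applying the hypothesis termwise to the $n$ pairs of $M^*$ yields
\[
\mathbb{E}\bigl[|\M(\sigma)|\bigr] \;\ge\; \alpha \cdot |M^*| \;=\; \alpha \cdot n.
\]

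Finally, since $M^*$ is a maximum matching of $G$, dividing through by $|M^*|$ gives an approximation ratio of at least $\alpha$ on every graph that admits a perfect matching. By \cref{wecanassume2n}, the same bound $\alpha$ then lower-bounds the approximation ratio of \ranking{} on arbitrary graphs. The only mild subtlety worth flagging in the write-up is the role of \cref{wecanassume2n}: without it the hypothesis of the claim would not even be stated over a set covering all of $V$, so one should be explicit that the reduction is what allows the pairwise bound to turn into an overall approximation ratio.
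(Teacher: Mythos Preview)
Your proposal is correct and follows essentially the same accounting as the paper: sum the hypothesis $\alpha \le \ev[g(u)+g(u^*)]$ over the $n$ edges of $M^*$ and use the gain-sharing identity $\ev[|\M(\sigma)|] = \sum_{(u,u^*)\in M^*}\ev[g(u)+g(u^*)]$ to conclude $\ev[|\M(\sigma)|] \ge \alpha\,|M^*|$. The only minor difference is that the claim as stated concerns the fixed graph $G$ (already assumed in the preliminaries to admit a perfect matching), so invoking \cref{wecanassume2n} is not needed here; the paper simply uses the standing perfect-matching assumption directly.
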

\begin{proof}
    This follows from the following calculation, assuming $|V|=2n$ and $G$ has a perfect matching $M^*$:
    $$\alpha\leq\frac{n\cdot\ev[g(u)+g(u^*)]}{n} =\frac{\sum_{(u,u^\ast)\in M^*}\ev [\g(u)+\g(u^\ast)]}{n}= \frac{\ev [|\M(\sigma)|]}{\ev [|M^*|]}.$$ 
\end{proof}
An example of this type of analysis is presented in \cite{eden2020economicbasedanalysisrankingonline}. In this paper, the analysis is conducted on a bipartite graph. The author assigned the two sides of the bipartition to play the role of buyers ($\B$) and items ($\I$). An auxiliary price function $\price$ is defined for the items. Whenever a match occurs between a buyer $b$ and an item $i$, the intuition is that the buyer $b$, possessing $1$ unit of budget, buys the item $i$ with price $\price(\x_i)$. The result is item $i$ receiving a utility of $g(i)=\price(i)$ and buyer $b$ receiving a utility of $g(b)=1 - \price(\x_i)$. Then, as in \cref{guustarisalowerbound}, expected value of $g(u)+g(u^*)$ is taken to obtain a lower bound for \ranking{}. 

The function $g$ that results in a good approximation ratio is asymmetric with respect to buyers and items, which means that we cannot apply this approach of sharing gains if two vertices of the same group are matched. However, it is impossible to pick a fixed bipartition for general graphs that does not result in a match within the same group. To address this issue, we introduce a randomized bipartition scheme. The scheme we introduced satisfies the following three properties, which are enough for us to conduct a buyer-item style analysis. 
\begin{enumerate}
    \item  Whenever a vertex $u$ is matched to a vertex $v$ by \ranking{}, $u$ and $v$ belong to different sides of the bipartition.
    \item For all pairs of perfect matches $u, u^*$, $u$ and $u^*$ belong to different sides of the bipartition.
    \item For any vertex $v$, conditioning on the event that $v$ belongs to a fixed group (either buyer or item) does not change the underlying distribution of $\sigma$.
\end{enumerate}

The construction is as follows: We construct a random coloring function $\chi: V \to \{B, I\}$ and group vertices having the same colors together, i.e., buyers are the set of vertices such that $\chi(v) = B$ and items are the set of vertices with $\chi(v) = I$. For each $\sigma$ and its respective result matching $\M(\sigma)$, since $\M(\sigma) \cup M^*$ does not contain odd cycles, we can pick a two-coloring $\chi_\sigma: V \to \{B, I\}$ on $\M(\sigma) \cup M^*$. Whenever $\sigma$ is realized, we let $\chi=\chi_\sigma$ with probability $\frac{1}{2}$ and let $\chi$ equals the opposite coloring of $\chi_\sigma$ with probability $\frac{1}{2}$.

Because we choose $\chi_{\sigma}$ as a two coloring on $\M(\sigma)\cup M^*$, properties 1 and 2 follows. On the other hand, because we choose $\chi_{\sigma}$ and its reversed coloring each with $\frac{1}{2}$ probability, any vertex $v$ has $\frac{1}{2}$ probability of being assigned to group $B$ and $\frac{1}{2}$ probability of being assigned to group $I$, independent of the underlying distribution of $\sigma$.
\subsection{Randomized Two-partitioning on $V$}
In this section, we present the formal definition of the randomized two-partioning $\chi$ and prove that the three desired properties hold. We first state the following fact about the union of two matchings:
\begin{fact}\label{altpathsandcycles}
    Let $M_1$, $M_2$ be two matchings on $G$. Then $M_1\cup M_2$ is a disjoint union of alternating paths and even cycles.
\end{fact}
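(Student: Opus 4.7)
The plan is to analyze the (multi)graph $H$ on vertex set $V$ whose edge set is $M_1\cup M_2$, with multiplicity retained for edges appearing in both matchings. The key quantitative input is a degree bound: since each $M_i$ is a matching, every vertex $v\in V$ is incident to at most one edge from $M_1$ and at most one edge from $M_2$, so $\deg_H(v)\le 2$.

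First I would invoke the elementary structural fact that any (multi)graph of maximum degree at most $2$ is a disjoint union of simple paths and simple cycles. This can be established by examining each connected component: starting at any vertex and greedily extending a walk along unused edges, the degree bound guarantees that the walk either terminates at a vertex of degree $1$ (producing a path) or returns to its starting vertex (producing a cycle), and each component is covered by exactly one such walk.

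Next I would argue that the edges along any such path or cycle alternate between $M_1$ and $M_2$. If two consecutive edges incident to the same vertex $w$ both belonged to the same matching $M_i$, then $w$ would have $M_i$-degree at least $2$, contradicting the matching property. This forces strict alternation along paths, and for a closed walk (cycle) this same alternation forces the number of edges to be even, giving the ``even cycle'' conclusion.

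The only subtle point is the treatment of an edge $e\in M_1\cap M_2$. In the multigraph view it contributes two parallel copies, forming a $2$-cycle which is both alternating and of even length; in the simple-graph view it is an isolated edge, which is a trivial alternating path of length $1$. Either interpretation is consistent with the statement of the fact, so no case-splitting is really needed beyond this remark. I do not anticipate any hard step here; the entire content of the proof is the degree bound combined with the alternation argument, which is a classical observation.
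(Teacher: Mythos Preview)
Your argument is correct and is exactly the standard proof of this classical fact: bound the degree in $M_1\cup M_2$ by $2$, decompose into paths and cycles, and use the matching property to force alternation (hence even cycles). The paper itself states this fact without proof, treating it as well known, so there is nothing further to compare.
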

Now we define the random bipartition as a random coloring and prove that the desired properties hold.
\begin{definition}(Randomized Two-Partitioning)\label{def:2-coloring}
    For each rank vector $\sigma$ on $V$ and its respective matching $\M(\sigma)$. Let $\chi_{\sigma}:V\to \{\B,\I\}$ be a fixed two-coloring on $\M(\sigma)\cup M^*$. $\chi_{\sigma}$ always exists because \cref{altpathsandcycles}. Let $\chi_{\sigma}^r$ be the opposite coloring of $\chi_{\sigma}$, i.e. all vertices have their color flipped. We define the random two-partitioning $\chi$ to be a random variable s.t. $p_{joint}$, the joint distribution of $\chi$ and $\sigma$,  satisfies:
    $$
p_{joint}(\chi, \sigma) =
\begin{cases}
\frac{1}{2} \pr(\sigma) & \text{if } \chi = \chi_{\sigma} \text{ or } \chi^r_{\sigma}, \\
0 & \text{otherwise}.
\end{cases}
$$
We say that a vertex $v$ belongs to the buyer set if $\chi(v)=B$ and $v$ belongs to the item set if $\chi(v)=I$.
\end{definition}

\begin{claim}\label{lem:2-coloring}
     The following property holds for the random coloring $ \chi$ defined in \cref{def:2-coloring}:
    \begin{enumerate}
        \item For all edge $(u,v)\in E$, $u,v$ get different colors when $(u,v)$ is included in $\M(\sigma)$.\\ 
        Formally, $\pr[\chi(u)\neq\chi(v)\;|\; (u,v)\in \M(\sigma)]=1$.
        \item For all edge $(u,u^*)\in M^*$, $u,u^*$ always get different colors.\\
         Formally, for all $(u,u^*)\in M^*$, $\pr[\chi(u)\neq\chi(u^*)]=1$.
    \end{enumerate}
    Further, the coloring of each vertex is uniform and independent of $\sigma$:
    \begin{enumerate}
        \setcounter{enumi}{2}
        \item For all vertex $v:\; \pr(\chi(v)=\B)=\pr(\chi(v)=\I)=\frac{1}{2}.$
        \item For all vertex $v,\;\text{permutation }\sigma:\;  \pr(\chi(v)=\B\;\wedge
        \;\sigma)=\pr(\chi(v)=\B)\cdot\pr(\sigma)=\frac{1}{2}\pr(\sigma)$.
    \end{enumerate}
\end{claim}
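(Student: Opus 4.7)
The plan is to verify the four properties directly from the definition of the joint distribution $p_{joint}$ and the defining property of $\chi_\sigma$. The key observation is that $\chi_\sigma$ is, by construction, a valid 2-coloring of the graph $\M(\sigma) \cup M^*$; such a coloring exists because, by \cref{altpathsandcycles}, $\M(\sigma) \cup M^*$ is a disjoint union of alternating paths and even cycles, which is bipartite.

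For properties (1) and (2), I would argue as follows. Fix any realization of $\sigma$. For every edge $e \in \M(\sigma) \cup M^*$, the definition of $\chi_\sigma$ as a proper 2-coloring guarantees that the endpoints of $e$ get different colors under $\chi_\sigma$. The reverse coloring $\chi_\sigma^r$ simply swaps $\B$ and $\I$ at every vertex, so endpoints of $e$ remain oppositely colored under $\chi_\sigma^r$ as well. Since $p_{joint}$ puts all of its mass on pairs $(\chi_\sigma, \sigma)$ and $(\chi_\sigma^r, \sigma)$, with probability $1$ the chosen coloring $\chi$ disagrees on the endpoints of every edge in $\M(\sigma) \cup M^*$. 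Taking the special cases $e \in \M(\sigma)$ and $e \in M^*$ yields (1) and (2) respectively.

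For properties (3) and (4), I would condition on $\sigma$. Given a fixed realization of $\sigma$, the definition of $p_{joint}$ places probability $1/2$ on $\chi = \chi_\sigma$ and $1/2$ on $\chi = \chi_\sigma^r$. Since $\chi_\sigma(v)$ and $\chi_\sigma^r(v) $ are opposite colors for any $v$, exactly one of these two realizations assigns $v$ to $\B$. Thus for every $v$ and every $\sigma$,
\[
\Pr[\chi(v) = \B \mid \sigma] = \tfrac{1}{2}.
\]
Multiplying by $\Pr[\sigma]$ yields $\Pr[\chi(v)=\B \wedge \sigma] = \tfrac{1}{2}\Pr[\sigma]$, which is (4). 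Marginalizing over $\sigma$ gives $\Pr[\chi(v)=\B] = \tfrac{1}{2}$, which is (3); the analogous statement for $\I$ follows by symmetry.

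The only subtlety worth checking carefully is the well-definedness of $\chi_\sigma$, i.e., that $\M(\sigma) \cup M^*$ really is bipartite; this is immediate from \cref{altpathsandcycles}. There is no substantive obstacle here beyond being careful about conditioning: the fact that the choice between $\chi_\sigma$ and $\chi_\sigma^r$ is a fair coin flip \emph{for every fixed} $\sigma$ is what simultaneously delivers the marginal uniformity in (3) and the independence in (4).
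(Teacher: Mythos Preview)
Your proposal is correct and follows essentially the same approach as the paper: both argue (1) and (2) directly from $\chi_\sigma$ and $\chi_\sigma^r$ being proper 2-colorings of $\M(\sigma)\cup M^*$, and both derive (3) and (4) from the fact that, conditioned on $\sigma$, exactly one of $\chi_\sigma$ and $\chi_\sigma^r$ assigns $v$ to $\B$. The only cosmetic difference is that the paper writes out the calculation for (3) first and then deduces (4), whereas you establish (4) by conditioning and then marginalize to get (3).
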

\begin{proof}
Properties (1), (2) are clear from the way we construct $\chi$. When $\sigma$ is realized, either $\chi=\chi_{\sigma}$ or $\chi=\chi_\sigma^r$, both of which are two colorings on $\M(\sigma)\cup M^*$. (3) comes from the following simple calculation: For any $v\in V$
$$\pr(\chi(v)=B)=\sum_{\sigma}\left[\mathds{1}_{\chi_{\sigma}(v)=\B}\cdot p_{joint}(\chi_{\sigma},\sigma)+\mathds{1}_{\chi^r_{\sigma}(v)=\B}\cdot p_{joint}(\chi^r_{\sigma},\sigma)\right]=\sum_{\sigma}\frac{1}{2}\pr(\sigma)=\frac{1}{2}.$$
Property (4) is immediate from fact (3) and the way we define our coloring.
\end{proof}
Properties $(3)$ and $(4)$, in particular, enable us to always assume that \( u \) takes a specific color for free when calculating the expected gain. This is illustrated by the following fact:
\begin{claim} \label{assumeuB}
    Let $\alpha\in[0,1]$ be a value. Let $\sigma$ be the bucketed random vector. Let $\ev_\sigma$ denote the expected value with respect to the distribution of $\sigma$. If for graph $G$, and all edges $(u,u^*)\in M^*$, 
    $$\alpha\leq \ev_{\sigma}[g(u)+g(u^*)\;|\; \chi(u)=\B]. $$
    Then for $G$ and all edges $(u,u^*)\in M^*$,
    $$ \alpha\leq \ev_{\sigma}[g(u)+g(u^*)]. $$
\end{claim}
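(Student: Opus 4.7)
The plan is to use the perfect symmetry built into the random coloring $\chi$ from \cref{def:2-coloring}. The key idea is that the hypothesis, which conditions on $\chi(u)=\B$, can be applied to the pair in two ways, and the two resulting bounds exactly assemble the unconditional expectation via the law of total expectation.

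First, by property (3) of \cref{lem:2-coloring}, $\Pr[\chi(u)=\B]=\Pr[\chi(u)=\I]=\tfrac{1}{2}$, so the law of total expectation gives
\[
\ev_\sigma[g(u)+g(u^*)]
= \tfrac{1}{2}\,\ev_\sigma\!\left[g(u)+g(u^*)\mid \chi(u)=\B\right]
+ \tfrac{1}{2}\,\ev_\sigma\!\left[g(u)+g(u^*)\mid \chi(u)=\I\right].
\]
The first conditional expectation is $\geq \alpha$ by the hypothesis applied directly to the edge $(u,u^*)$.

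For the second term, observe that by property (2) of \cref{lem:2-coloring}, $\chi(u^*)\neq \chi(u)$ holds with probability one, so conditioning on $\chi(u)=\I$ is the same event as conditioning on $\chi(u^*)=\B$. The hypothesis is stated for \emph{all} edges $(u,u^*)\in M^*$, so in particular we may apply it to the same edge with the roles of the two endpoints swapped, which yields
\[
\alpha \;\leq\; \ev_\sigma\!\left[g(u^*)+g(u)\mid \chi(u^*)=\B\right]
\;=\; \ev_\sigma\!\left[g(u)+g(u^*)\mid \chi(u)=\I\right].
\]
Plugging both lower bounds into the decomposition above yields $\ev_\sigma[g(u)+g(u^*)]\geq \alpha$, which is the desired conclusion.

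The whole argument is essentially a two-line symmetrization, and the only subtlety is making sure the two events $\{\chi(u)=\I\}$ and $\{\chi(u^*)=\B\}$ coincide almost surely so that the swapped application of the hypothesis is legitimate; this is exactly what property (2) of \cref{lem:2-coloring} delivers. I do not anticipate any real obstacle beyond stating the symmetry cleanly.
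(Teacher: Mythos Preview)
Your argument is correct and follows essentially the same symmetrization as the paper's proof: apply the hypothesis to the edge in both orientations $(u,u^*)$ and $(u^*,u)$, then average the two conditional bounds using $\Pr[\chi(u)=\B]=\Pr[\chi(u)=\I]=\tfrac12$. You make explicit the use of property~(2) of \cref{lem:2-coloring} to identify $\{\chi(u)=\I\}$ with $\{\chi(u^*)=\B\}$, which the paper uses only implicitly.
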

\begin{proof}
     Assume $\alpha\leq \ev_{\sigma}[g(u)+g(u^*)\;|\; \chi(u)=\B]$ for $G$ and all pairs of vertices $(u,u^*)\in M^*$. In particular, the assumption holds for $G,(u,u^*)$ and $G,(u^*,u)$. So we have
     \begin{align*}
         \alpha\leq \ev_{\sigma}[g(u)+g(u^*)\;|\; \chi(u)=\B]\;\wedge\;\alpha\leq \ev_{\sigma}[g(u^*)+g(u)\;|\; \chi(u^*)=\B].
     \end{align*}
    By property $(4)$ of \cref{lem:2-coloring}, coloring on a single vertex does not affect the distribution, we have
    \begin{align*}
        \alpha\leq \frac{1}{2}\ev_{\sigma}[g(u)+g(u^*)\;|\; \chi(u)=\B]+\frac{1}{2}\ev_{\sigma}[g(u)+g(u^*)\;|\; \chi(u)=\I]= \ev_{\sigma}[g(u)+g(u^*)].
    \end{align*}
\end{proof}
Due to the above claim, we can always assume for free a fixed color for $u$.\textbf{ From now on, we will assume that $u$ always corresponds to a buyer, i.e. $\chi(u)=B$. By \cref{lem:2-coloring}, this implies that $u^*$ is always an item.}
\subsection{Price Function and Gain Sharing}
The random two-partitioning enables us to define a gain-sharing function assuming all edges included by \ranking{} have a buyer on one end and an item on the other end. We now define the set of price functions considered by our LP for the $k$-bucketed random permutation.
\begin{definition}[Price Function $\price$]\label{pricefun}
A \textbf{price function} is a function $\price:\{1,...,k\}\times\{1,...,k\}\to [0,1]$ that takes a (buyer, item) pair and outputs a price between $0$ and $1$ that satisfies the following monotonicity constraints:
    \begin{equation}\label{pricefunction}
        \begin{split}
             &\quad \price(i,j)\geq \price(i+1,j)\quad\forall i<k,\forall j,\\
        &\quad \price(i,j)\leq \price(i,j+1)\quad\forall i, \forall j<k.
        \end{split}
    \end{equation}
\end{definition}
Notice that the domain of $\price$ is restricted to $k$, the number of buckets. Thus, the price function will only depend on the bucket value $\x$ rather than $y$, the relative rank within buckets (which will be relevant to  the number of vertices in the graph). This allows our analysis to extend to arbitrarily large graphs using a fixed number of buckets $k$. 

Further, the price function depends on both buyers and items. When the first input takes the bucket value of a buyer and the second input takes the bucket value of an item, the monotonicity constraints naturally correspond to the intuition that for both buyers and items, smaller ranked counterparts are preferred over larger ranked ones. This is illustrated by \cref{monotonicitygainfunction}.

Now, with the price function defined, we can define the gain-sharing function $g: V \to [0, 1]$, where, as promised, $\sum_{u \in V} g(u) = |\M(\sigma)|$. Intuitively, for an edge $(u, v)$ matched in $\M(\sigma)$, where $u$ is the buyer and $v$ is the item, we assign $\price(\x_u, \x_v)$ amount of gain to $v$ and assign $1 - \price(\x_u, \x_v)$ amount of gain to $u$. This is a well-defined gain-sharing scheme by the way we defined our partitioning. Formally, it is defined by the following.
\begin{definition}
    Let $\sigma$ be a rank vector on $V$, and let $\M(\sigma)$ be the respective matching. Let $\chi$ be a realization of the random two-partitioning defined in \cref{def:2-coloring}. Let $\price$ 
    be a price function as defined in \cref{pricefun}. The gain sharing function $g:V\to [0,1]$ with respect to $\sigma$, $\chi,$ is defined as
$$\g_{\chi,\sigma}(u)=\begin{cases}
    \price(\x_v,\x_u),  &\text{ if }\chi(v)=\B,\; \chi(u)=\I,\; (u,v)\in \M(\sigma)\\
    1-\price(\x_u,\x_v), &\text{ if }\chi(u)=\B,\; \chi(v)=\I,\; (u,v)\in  \M(\sigma)\\
    0.&\text{ otherwise}
\end{cases}$$
\end{definition}
With $g$ defined on each instance of $\chi$ and $\sigma$, it admits a distribution induced by their joint distribution. The expected summation of $g$ equals the expected size of the matching. So we can use $\ev[g(u)+g(u^*)]$ to bound the approximation ratio for \ranking{} as in \cref{guustarisalowerbound}. This is illustrated by the following:
\begin{fact}
    Let $g$ be the random variable s.t. the probability of $g=g_{\chi,\sigma}$ is $=p_{joint}(\chi,\sigma)$, where $p_{joint}(\chi,\sigma)$ is defined in \cref{def:2-coloring}. We have
    $$\ev\left[\sum_{u\in V}g(u)\right]=\ev[|\M(\sigma)|].$$
\end{fact}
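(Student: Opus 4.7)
The plan is to prove the identity pointwise, showing that for every realization of the pair $(\chi, \sigma)$ one has $\sum_{u \in V} g_{\chi, \sigma}(u) = |\M(\sigma)|$. Once this deterministic equality is established, the desired identity in expectation follows by taking expectation with respect to $p_{joint}$ on both sides and noting that the right-hand side $|\M(\sigma)|$ does not depend on $\chi$, so its joint expectation coincides with $\ev[|\M(\sigma)|]$.

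To establish the pointwise equality, I would first partition the sum $\sum_{u \in V} g_{\chi, \sigma}(u)$ into contributions from matched vertices and unmatched vertices. By the definition of $g_{\chi,\sigma}$, any vertex not incident to an edge of $\M(\sigma)$ contributes $0$, so the sum reduces to $\sum_{(u,v) \in \M(\sigma)} (g_{\chi, \sigma}(u) + g_{\chi, \sigma}(v))$. The key step is to show that each matched edge contributes exactly $1$ to this sum.

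The crucial ingredient here is property (1) of \cref{lem:2-coloring}, which guarantees that whenever $(u, v) \in \M(\sigma)$, the endpoints receive opposite colors under $\chi$. Therefore, for every matched edge exactly one endpoint is a buyer and the other is an item. Without loss of generality say $\chi(u) = \B$ and $\chi(v) = \I$; then the definition of $g_{\chi, \sigma}$ gives $g_{\chi, \sigma}(u) = 1 - \price(\x_u, \x_v)$ and $g_{\chi, \sigma}(v) = \price(\x_u, \x_v)$, summing to exactly $1$. Summing over all edges of $\M(\sigma)$ yields $\sum_{u \in V} g_{\chi, \sigma}(u) = |\M(\sigma)|$ for every $(\chi, \sigma)$ in the support of $p_{joint}$.

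Finally, taking expectations,
\[
\ev\Bigl[\sum_{u \in V} g(u)\Bigr] = \sum_{\chi, \sigma} p_{joint}(\chi, \sigma) \sum_{u \in V} g_{\chi, \sigma}(u) = \sum_{\chi, \sigma} p_{joint}(\chi, \sigma) |\M(\sigma)| = \ev[|\M(\sigma)|],
\]
where the last equality uses that marginalizing $p_{joint}$ over $\chi$ recovers the distribution of $\sigma$, as guaranteed by the construction in \cref{def:2-coloring}. There is no real obstacle to this proof; the only subtlety to highlight is the reliance on property (1) of \cref{lem:2-coloring}, since it is precisely the randomized two-partitioning that validates the pointwise bookkeeping and ensures the gain-sharing scheme is well-defined on every matched edge.
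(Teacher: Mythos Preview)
Your proof is correct and takes essentially the same approach as the paper: establish the pointwise identity $\sum_{u\in V} g_{\chi,\sigma}(u) = |\M(\sigma)|$ for every $(\chi,\sigma)$ in the support of $p_{joint}$, then take expectations. The paper's proof simply asserts this pointwise equality without spelling out the per-edge bookkeeping; you have made explicit the reliance on property~(1) of \cref{lem:2-coloring}, which is indeed the reason the buyer/item split is well-defined on each matched edge.
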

\begin{proof}
    This is simply because for all $\chi_{\sigma},\chi_{\sigma}^r,\sigma$ we have 
    $$\sum_{u\in V}g_{\chi_{\sigma},\sigma}(u)=\sum_{u\in V}g_{\chi_{\sigma}^r,\sigma}(u)=|\M(\sigma)|.$$
\end{proof}
The reason we defined the price function in \cref{pricefun} with two monotonicity constraints is illustrated by the following two monotonicity properties of the gain function $g$. These properties reflect the intuition that both buyers and items receive larger gains when matched with counterparts with smaller ranks.
\begin{fact}\label{monotonicitygainfunction}
    If $u$ is matched to some vertex $v$ with bucket value $\x_v \leq i$ in $\M(\sigma)$, where $u$ is a buyer and $v$ is an item, then the gain of $u$ satisfies:
    $$ g(u) \geq 1 - \price(\x_u, i).$$
    Similarly, if $v$ is matched to some vertex $u$ with bucket value $\x_u \leq i$ in $\M(\sigma)$, where $u$ is a buyer and $v$ is an item, then the gain of $v$ satisfies:
    $$ g(v) \geq \price(i, \x_v).$$
\end{fact}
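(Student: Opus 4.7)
The plan is to derive both inequalities as direct consequences of the gain-sharing definition combined with the two monotonicity constraints imposed on the price function $\price$ in \cref{pricefun}. The argument does not require any appeal to the structural properties of \ranking{}; it is purely a restatement of the monotonicity of $\price$ through the definition of $\g$.

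First, I would handle the buyer side. Since $u$ is assumed to be a buyer matched to an item $v$ in $\M(\sigma)$, the definition of $\g_{\chi,\sigma}$ forces $\g(u) = 1 - \price(\x_u, \x_v)$. The hypothesis gives $\x_v \leq i$, so I would apply the second monotonicity constraint $\price(i,j) \leq \price(i,j+1)$ iteratively (equivalently, the fact that $\price(\x_u,\cdot)$ is nondecreasing in its second argument) to conclude $\price(\x_u, \x_v) \leq \price(\x_u, i)$. Subtracting from $1$ reverses the inequality and yields $\g(u) \geq 1 - \price(\x_u, i)$, as required.

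The item side is symmetric but uses the other monotonicity constraint. With $v$ an item matched to buyer $u$, the definition gives $\g(v) = \price(\x_u, \x_v)$. The hypothesis $\x_u \leq i$ together with the first monotonicity constraint $\price(i,j) \geq \price(i+1,j)$ (so $\price(\cdot, \x_v)$ is nonincreasing in its first argument) yields $\price(\x_u, \x_v) \geq \price(i, \x_v)$, which is exactly the claimed bound on $\g(v)$.

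There is no real obstacle here: both directions are one-line consequences of the axioms in \cref{pricefun}, and the only subtlety to flag is the sign flip when bounding the buyer's gain (since a larger price to the item translates to a smaller residual gain to the buyer), which is precisely why the two constraints in \cref{pricefun} are oriented oppositely. This fact will later be invoked whenever the analysis needs to bound $\g(u)$ or $\g(v)$ knowing only an upper bound on the partner's bucket, rather than the exact bucket value.
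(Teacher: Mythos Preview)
Your proposal is correct and follows exactly the same approach as the paper, which simply states that the fact follows directly from the monotonicity properties of the price function $\price$ in \cref{pricefun}. You have merely spelled out the one-line application of each monotonicity constraint in more detail than the paper does.
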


\begin{proof}
    This follows directly from the monotonicity properties of the price function $\price$ as defined in \cref{pricefun}.
\end{proof}

\section{Lower Bounding Gains for Fixed $\sigma_{-u^*}$}\label{section6}
The overall plan we use to calculate $\mathbb{E}[g(u) + g(u^*)]$ is to consider rank vectors $\sigma_{-u^*}$ on the graph with $u^*$ removed first. We can add $u^*$ uniformly to each fixed $\sigma_{-u^*}$ and calculate the expected gain conditioning on $\sigma_{-u^*}$ being realized. Another round of expectation is further taken over the distribution of $\sigma_{-u^*}$ to calculate the final expected gain. Formally, we will be calculating:
$$\ev_{\sigma_{-u^*}}\left[\ev_{\x_{u^*}}\left[g(u)+g(u^*)\;|\;\sigma_{-u^*}\right]\right].$$
In this section, we focus on approximating the inner expected value 
$$\ev_{\x_{u^*}}[g(u)+g(u^*)\;|\;\sigma_{-u^*}],$$
by building a case-by-case lower bound for each fixed $(\sigma_{-u^*}, \x_{u^*})$ pair. We categorize $\sigma_{-u^*}$ into three different cases: $\classnomatch,\classnobackup:$ and $\classbackup$. We will define three functions: $\hclassnomatch,\hclassnobackup$ and $\hclassbackup$ to account for each of the three cases.  

Let $\sigma$ be a rank vector on $V$ and let $\sigma_{-u^*}$ be the induced vector with $u^*$ removed. When $\sigma_{-u^*}$ belongs one of the three cases $\classnomatch,\classnobackup:$ and $\classbackup$, the corresponding $h$ function will take the profile of $u$ in $\sigma_{-u^*}$ and $\x_{u^*}$ as input. $h$ will then output a lower bound for $g(u)+g(u^*)$ when $\sigma$ is realized.

After defining $\hclassnomatch$, $\hclassnobackup$, and $\hclassbackup$, we use the fact that the bucket value of $u^*$ is independent of $\sigma_{-u^*}$ to calculate the expected lower bound for $g(u) + g(u^*)$ conditioned on $\sigma_{-u^*}$. Specifically, this corresponds to the inner expectation value $\ev_{\x_{u^*}}[g(u) + g(u^*) \;|\; \sigma_{-u^*}]$.

We will use the notion of profiles for $u$ in rank vectors, with slight adjustments from \cref{def:profile}. In a bucketed rank vector, the profile of $u$ depends only on the bucket values, not the full rank values. We then classify each rank vector $\sigma_{-u^*}$ based on its respective profile of $u$.
\begin{definition}[Profile with respect to rank vectors]
     Let $\sigma_{-u^*}$ be a rank vector on $V \backslash \{u^*\}$. The profile of $u$ in $\sigma_{-u^*}$ is a triple $(\x_u, \x_v, \x_b)$ such that:
\begin{enumerate}
    \item $\x_u$ is the bucket value of $u$ in $\sigma_{-u^*}$.
    \item $\x_v$ is the bucket value of $v$ in $\sigma_{-u^*}$ if $u$ is matched to $v$ in $\M(\sigma_{-u^*})$; $\x_v = \bot$ otherwise.
    \item $\x_b$ is the bucket value of $b$ in $\sigma_{-u^*}$ if $b$ is the backup of $u$ in $\sigma_{-u^*}$; $\x_b = \bot$ otherwise.
\end{enumerate}
\end{definition}
\begin{definition}\label{def:Cs}
   Let $(\x_u,\x_v,\x_b)$ be the profile of $u$ in $\sigma_{-u^*}$. We classify  $\sigma_{-u^*}$ into one of the three cases $\classbackup,\classnobackup,\classnomatch$ based on the corresponding criteria:
    \begin{align*}
        &\classnomatch: \text{ if $u$ is not matched in $\M(\sigma_{-u^*})$; or equivalently, $\x_v,\x_b=\bot$,}\\
        &\classnobackup: \text{ if $u$ is matched in $\M(\sigma_{-u^*})$ but does not have a backup; or equivalently, $\x_v\neq\bot,\;\x_b= \bot$},\\
        &\classbackup: \text{ if $u$ is matched in $\M(\sigma_{-u^*})$ and has a backup; or equivalently, $\x_v,\x_b\neq \bot$}.
    \end{align*}
\end{definition}
The subscripts $\bot$, $s$, and $b$ mean: empty of a match ($\bot$), having a single choice when picking, thus does not have a backup (s), and having a backup ($b$).

We now define lower bound functions $\hclassnomatch$, $\hclassnobackup$, and $\hclassbackup$. Similar to the gain sharing function, these functions will only depend on the bucket values.
\subsection{Function $\hclassnomatch$}\label{sectionhbot}
We define $\hclassnomatch:\{1,...,k\}^2\to [0,2]$, the lower bound function for $\sigma_{-u^*}\in\classnomatch$ is the following:
\begin{definition}
    For any two bucket values $\x_u,\x_{u^*}\in\{1,...,k\}$,
\begin{equation}\label{defhclassnomatch}
    \hclassnomatch(\x_u,\x_{u^*})=\price(\x_{u},\x_{u^*}).
\end{equation}
\end{definition}

Assume $\sigma$ is a rank vector such that the induced matching $\sigma_{-u^*}$ is classified as $\classnomatch$. Then the profile of $u$ in $\sigma_{-u^*}$ is $(\x_u, \bot, \bot)$. Let $\M(\sigma)$ be the result of \ranking{} on $\sigma$ and let the bucket value of $u^*$ be $\x_{u^*}$. We have the following:
\begin{claim}\label{lem:hclassnomatch}
$\hclassnomatch(\x_u, \x_{u^*})$ lower bounds $\g(u) + \g(u^*)$ when $\sigma$ is realized.
\end{claim}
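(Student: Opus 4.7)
The claim essentially says that when $u$ had no match in $\M(\sigma_{-u^*})$, inserting $u^*$ is guaranteed to contribute a gain of at least $\price(\x_u, \x_{u^*})$ through $u^*$ alone. The plan is to track the rank of whichever vertex $u^*$ ends up paired with in $\M(\sigma)$, show that its bucket value is at most $\x_u$, and then invoke the monotonicity of the price function to bound $g(u^*)$ from below.

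\textbf{Step 1 (locating $u^*$'s partner).} Because $\sigma_{-u^*} \in \classnomatch$, the vertex $u$ is unmatched in $\M(\sigma_{-u^*})$. I would apply \cref{lem:nomatch} directly: for every way of inserting $u^*$ into $\sigma_{-u^*}$, the algorithm matches $u^*$ to some vertex $w$ whose overall rank satisfies $\sigma(w) \le \sigma(u)$. In our bucketed model, $\sigma(w) \le \sigma(u)$ lexicographically implies $\x_w \le \x_u$, which is the only part of $\sigma(w)$ the price function cares about.

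\textbf{Step 2 (colors of the endpoints).} Recall from \cref{sec:appendix-a}-style conventions set just before this claim that $u$ is assumed to be a buyer and hence $u^*$ is an item. Since $(u^*, w) \in \M(\sigma)$, property~(1) of \cref{lem:2-coloring} forces $w$ and $u^*$ to have opposite colors, so $w$ is a buyer. Thus the matched pair $(w, u^*)$ has a buyer $w$ with $\x_w \le \x_u$ on one side and an item $u^*$ with bucket $\x_{u^*}$ on the other.

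\textbf{Step 3 (applying monotonicity).} The second half of \cref{monotonicitygainfunction} applied to the item $u^*$, whose buyer partner has bucket at most $\x_u$, gives
\[
g(u^*) \;\ge\; \price(\x_u, \x_{u^*}).
\]
Since $g(u) \ge 0$ by definition of the gain function, we conclude
\[
g(u) + g(u^*) \;\ge\; \price(\x_u, \x_{u^*}) \;=\; \hclassnomatch(\x_u, \x_{u^*}),
\]
as desired. No case needs the hypothetical match of $u$ itself, so there is no genuine obstacle here; the only subtlety is to remember to pass from the rank inequality $\sigma(w) \le \sigma(u)$ to the bucket inequality $\x_w \le \x_u$ before invoking the monotonicity of $\price$ in its first argument.
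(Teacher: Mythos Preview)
Your proposal is correct and follows essentially the same route as the paper: invoke \cref{lem:nomatch} to bound the rank of $u^*$'s partner by $\sigma(u)$, use the standing buyer/item convention (\cref{assumeuB}, not the appendix) to identify $u^*$ as an item, and apply \cref{monotonicitygainfunction} to conclude $g(u^*)\ge \price(\x_u,\x_{u^*})$. You are simply more explicit than the paper about passing from the rank inequality to the bucket inequality and about why $w$ is a buyer.
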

\begin{proof}
    Recall by \cref{assumeuB}, we assumed that $u$ is a buyer and $u^*$ is an item. By \cref{lem:nomatch}, $u^*$ is always matched to some vertex of rank at most $(\x_u, \y_u)$, in which case it receives a gain no less than $\price(\x_u, \x_{u^*})$ by \cref{monotonicitygainfunction}.
\end{proof}

\subsection{Function $\hclassnobackup$}\label{sectionhs}
We define $\hclassnobackup:\{1,...,k\}^3\to [0,2]$, the lower bound function for $\sigma_{-u^*}\in \classnobackup$  by the following:
\begin{definition}
    For any three bucket values $\x_u, \x_v, \x_{u^*}\in\{1,...,k\}$,
\begin{equation}\label{defhclassnobackup}
\hclassnobackup(\x_u,\x_v,\x_{u^*})=
\begin{cases} 
    \price(\x_u,\x_{u^*}), & \text{if } \x_u < \x_v, \x_{u^*} \leq \x_u, \\
    1 - \price(\x_u,\x_v) + \price(\x_u,\x_{u^*}), & \text{if } \x_u < \x_{u^*} < \x_v, \\
    1 - \price(\x_u,\x_v), & \text{if } \x_u < \x_v, \x_v \leq \x_{u^*}, \\
    \min\{\price(\x_v,\x_{u^*}), 1 - \price(\x_u,\x_v) + \price(\x_u,\x_{u^*})\}, & \text{if } \x_v \leq \x_u, \x_{u^*} < \x_v, \\
    \min\{\price(\x_v,\x_{u^*}), 1 - \price(\x_u,\x_v)\}, & \text{if } \x_v \leq \x_{u^*} \leq \x_u, \\
    1 - \price(\x_u,\x_v) & \text{if }\x_v \leq \x_u , \x_u < \x_{u^*}.
\end{cases}
\end{equation}
\end{definition}
Assume $\sigma$ is a rank vector such that the induced matching $\sigma_{-u^*}$ is classified as $\classnobackup$. Then the profile of $u$ in $\sigma_{-u^*}$ is $(\x_u, \x_v, \bot)$. Let $\M(\sigma)$ be the result of \ranking{} on $\sigma$ and let the bucket value of $u^*$ be $\x_{u^*}$. We have the following:
\begin{claim}\label{lem:hclassnobackup}
    $\hclassnobackup(\x_u,\x_v,\x_{u^*})$ lower bounds $\g(u)+\g(u^*)$ when $\sigma$ is realized.
\end{claim}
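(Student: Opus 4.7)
The plan is to run a six-case analysis matching the six branches in the definition \eqref{defhclassnobackup} of $\hclassnobackup$. By \cref{assumeuB} we may assume that $u$ is a buyer and $u^*$ is an item; let $v$ denote the match of $u$ in $\M(\sigma_{-u^*})$, whose existence and lack of backup come from $\sigma_{-u^*}\in\classnobackup$. The two tools I will use in every case are \cref{lem:struct-property}, which controls how inserting $u^*$ into $\sigma_{-u^*}$ changes the matches of $u$ and $u^*$, and \cref{monotonicitygainfunction}, which converts ``$u$'s match has bucket $\le\x_v$'' into ``$g(u)\ge 1-\price(\x_u,\x_v)$'' and ``$u^*$'s match has bucket $\le\x_u$ (resp.\ $\x_v$)'' into ``$g(u^*)\ge\price(\x_u,\x_{u^*})$ (resp.\ $\price(\x_v,\x_{u^*})$)''. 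The routine observation used throughout is that any strict bucket inequality $\x_a<\x_b$ forces $\sigma(a)<\sigma(b)$, which is exactly what the hypotheses of \cref{lem:struct-property} require.

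Cases 1, 2, 3 and 6 are the easy ones, each handled by one or two direct applications of parts~(1)--(2) of \cref{lem:struct-property}. In Case~1 ($\x_{u^*}\le\x_u<\x_v$) and Case~2 ($\x_u<\x_{u^*}<\x_v$), the strict inequality $\x_{u^*}<\x_v$ activates part~(1), giving $g(u^*)\ge\price(\x_u,\x_{u^*})$. In Cases~2, 3 and~6, the strict inequality $\x_u<\x_{u^*}$ activates part~(2), giving $g(u)\ge 1-\price(\x_u,\x_v)$. Summing the available contributions (a single bound in Cases~1, 3 and 6; two bounds in Case~2) reproduces the corresponding expression of $\hclassnobackup$.

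The main obstacle, and the reason the formula contains a $\min$, lies in Cases~4 ($\x_{u^*}<\x_v\le\x_u$) and 5 ($\x_v\le\x_{u^*}\le\x_u$). Here the plan is to exploit the dichotomy built into part~(3) of \cref{lem:struct-property}: either the match of $u$ in $\M(\sigma)$ has rank $\le\sigma(v)$, producing $g(u)\ge 1-\price(\x_u,\x_v)$; or else the match of $u^*$ has rank $\le\sigma(v)$, producing $g(u^*)\ge\price(\x_v,\x_{u^*})$. Combined with $g(u)\ge 0$, the second branch already accounts for the term $\price(\x_v,\x_{u^*})$ appearing inside the $\min$ in both cases. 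For the first branch, Case~5 yields exactly $1-\price(\x_u,\x_v)$; in Case~4 we get an extra free boost, because $\x_{u^*}<\x_v$ also activates part~(1), giving $g(u^*)\ge\price(\x_u,\x_{u^*})$ on top of $g(u)\ge 1-\price(\x_u,\x_v)$ and thereby the stronger term $1-\price(\x_u,\x_v)+\price(\x_u,\x_{u^*})$. Taking the minimum over the two branches of the dichotomy recovers the right-hand side of the definition in each case, completing the proof. The conceptual hurdle to note is that the two alternatives of the dichotomy bound gains of different vertices, which is precisely what forces the bound to be a minimum of two linear expressions rather than a sum.
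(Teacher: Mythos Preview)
Your proposal is correct and follows essentially the same six-case analysis as the paper's proof, invoking parts~(1), (2), and~(3) of \cref{lem:struct-property} together with \cref{monotonicitygainfunction} in exactly the same places. The only cosmetic difference is that the paper phrases the split in Cases~4 and~5 as ``$u$ is made worse off or not,'' whereas you state it directly as the dichotomy implicit in part~(3); these are the same argument.
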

\begin{proof}
    Recall by \cref{assumeuB}, we assumed that $u$ is a buyer and $u^*$ is an item. Also recall the monotonicity property of gain function $g$ as in \cref{monotonicitygainfunction}.

\textbf{Case ( $\x_u < \x_v, \x_{u^*} \leq \x_u$):} $\x_{u^*}\leq \x_u<\x_v\Rightarrow \sigma(u^*)<\sigma(v)$. By \cref{lem:struct-property}, $u^*$ is matched to some vertex with rank at most $\sigma(u)$, hence $\g(u)+\g(u^*)\geq \g(u^*)\geq \price(\x_u,\x_{u^*})$.
    
\textbf{Case ($\x_u<\x_{u^*} < \x_v$):} $\x_u<\x_{u^*}<\x_v\Rightarrow \sigma(u) < \sigma(u^*)<\sigma(v)$. By \cref{lem:struct-property}, $u^*$ is matched to some vertex with rank at most $\sigma(u)$, and $u$ is matched to some vertex with rank at most $\sigma(v)$. So we have $\g(u^*)\geq \price(\x_u,\x_{u^*})$, and $\g(u)\geq 1-\price(\x_u,\x_v)$. Hence $\g(u)+\g(u^*)\geq 1 - \price(\x_u,\x_v) + \price(\x_u,\x_{u^*})$.
    
\textbf{Case ($\x_u<\x_v, \x_{u^*} \geq \x_v$):} $\x_u<\x_v\leq \x_{u^*}\Rightarrow \sigma(u)<\sigma(u^*)$. By \cref{lem:struct-property} $u$ is matched to some vertex with rank at most $\x_v$. So we have $\g(u)+\g(u^*)\geq \g(u)\geq 1-\price(\x_u,\x_v)$.
    
\textbf{Case ( $\x_u \geq \x_v, \x_{u^*} < \x_v$):} $\x_{u^*}<\x_v\Rightarrow \sigma(u^*)<\sigma(v)$. By \cref{lem:struct-property}, $u^*$ is matched to some vertex with rank at most $\sigma(u)$. If $u^*$ being matched does not make $u$ worse off\footnote{Worse off means $u$ gets matched to some vertex with rank $>\sigma(v)$}, then $\g(u)+\g(u^*)\geq 1-\price(\x_u,\x_v)+\price(\x_u,\x_{u^*})$. If $u^*$ being matched makes $u$ worse off, in which case $u^*$ is matched to someone with rank at most $\sigma(v)$. In this case $\g(u)+\g(u^*)\geq \price(\x_v,\x_{u^*})$. So we have $\g(u)+\g(u^*)\geq\min\{ \price(\x_v,\x_{u^*}), 1 - \price(\x_u,\x_v) + \price(\x_u,\x_{u^*})\}$.
    
\textbf{Case ( $\x_v\leq \x_{u^*} \leq \x_u$):} Any ordering of $\sigma(u),\sigma(v),\sigma(u^*)$ can happen, there are two cases, either $u$ is made worse off by $u^*$ or not. By \cref{lem:struct-property}, if $u$ is made worse off, then $u^*$ is matched to some vertex with rank at most $\sigma(v)$. In this case we have $\g(u)+\g(u^*)\geq g(u^*)\geq \price(\x_v,\x_{u^*})$. If $u$ is not made worse off, we have $\g(u)+\g(u^*)\geq \g(u)\geq 1-\price(\x_u,\x_v)$. So we have $\g(u)+\g(u^*)\geq\min\{\price(\x_v,\x_{u^*}), 1 - \price(\x_u,\x_v)\}$.
    
\textbf{Case ($\x_u\geq \x_v, \x_{u^*} > \x_u$):} $\x_u<\x_{u^*}\Rightarrow \sigma(u)<\sigma(u^*)$. By \cref{lem:struct-property}, $u$ is matched to some vertex with rank at most $\sigma(v)$. So we have $\g(u)+\g(u^*)\geq \g(u)\geq 1-\price(\x_u,\x_v)$.
\end{proof}
\subsection{Function $\hclassbackup$}\label{sectionhb}
We define $\hclassnobackup:\{1,...,k\}^4\to [0,2]$, the lower bound function for $\sigma_{-u^*}\in \classbackup$  by the following:
\begin{definition}
    For any four bucket values $\x_u, \x_v,\x_b, \x_{u^*}\in\{1,...,k\}$,
    \begin{equation}\label{defhclassbackup}
    \hclassbackup(\x_u,\x_v,\x_b,\x_{u^*})=
\begin{cases} 
    1 - \price(\x_u,\x_b) + \price(\x_u,\x_{u^*}) & \text{if } \x_u < \x_v, \x_{u^*} \leq \x_u, \\
    1 - \price(\x_u,\x_v) + \price(\x_u,\x_{u^*}) & \text{if } \x_u < \x_{u^*} < \x_v, \\
    1 - \price(\x_u,\x_v), & \text{if } \x_u < \x_v,\x_v \leq \x_{u^*}, \\
    \min\{1 - \price(\x_u,\x_b) + \price(\x_v,\x_{u^*}), & \text{if } \x_v \leq \x_u, \x_{u^*} < \x_v,\\
    \quad \quad1 - \price(\x_u,\x_v) + \price(\x_u,\x_{u^*})\}  \\
    \min\{1 - \price(\x_u,\x_b) + \price(\x_v,\x_{u^*}), & \text{if } \x_v \leq \x_{u^*} \leq \x_u, \\
    \quad \quad1 - \price(\x_u,\x_v)\} \\
    1 - \price(\x_u,\x_v) & \text{if }\x_v\leq \x_u, \x_u < \x_{u^*}.
\end{cases}
\end{equation}
\end{definition}
Assume $\sigma$ is a rank vector such that the induced matching $\sigma_{-u^*}$ is classified as $\classbackup$ equivalently, the profile of $u$ in $\sigma_{-u^*}$ is $(\x_u, \x_v, \x_b)$. Let $\M(\sigma)$ be the result of \ranking{} on $\sigma$ and let the bucket value of $u^*$ be $\x_{u^*}$. We have the following:
\begin{claim}\label{lem:hclassbackup}
     $\hclassbackup(\x_u,\x_v,\x_b,\x_{u^*})$ lower bounds $\g(u)+\g(u^*)$ when $\sigma$ is realized.
\end{claim}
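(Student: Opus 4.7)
The plan is to mirror the case analysis from the proof of \cref{lem:hclassnobackup}, partitioning on the relative bucket positions of $\x_u$, $\x_v$, and $\x_{u^*}$, and showing in each of the six cases that the bound given by $\hclassbackup$ holds. As before, we may assume $u$ is a buyer and $u^*$ is an item by \cref{assumeuB}, and we may use \cref{monotonicitygainfunction} to turn ``$u$ is matched to a vertex with rank at most that of $w$'' into the lower bound $g(u)\geq 1-\price(\x_u,\x_w)$, and similarly for $u^*$. Importantly, since $\sigma_{-u^*}\in\classbackup$, \cref{lem:backup}(1) ensures $\x_v\leq \x_b$, and \cref{lem:backup}(2) gives us the extra tool that $u$ is always matched in $\M(\sigma)$ to some vertex of rank at most $\sigma(b)$, hence $g(u)\geq 1-\price(\x_u,\x_b)$ unconditionally.

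For the three cases in which $u$ is provably not worsened by the insertion of $u^*$ --- namely $(\x_u<\x_{u^*}<\x_v)$, $(\x_u<\x_v\leq \x_{u^*})$, and $(\x_v\leq \x_u<\x_{u^*})$ --- the argument is identical to the corresponding cases of \cref{lem:hclassnobackup}. Here $\sigma(u)<\sigma(u^*)$, so \cref{lem:struct-property}(2) forces $u$ to be matched to a vertex of rank at most $\sigma(v)$, giving $g(u)\geq 1-\price(\x_u,\x_v)$; and in the case $\x_u<\x_{u^*}<\x_v$ we additionally get $g(u^*)\geq \price(\x_u,\x_{u^*})$ from \cref{lem:struct-property}(1). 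In each of these three cases the backup plays no role, and the bound in \eqref{defhclassbackup} matches the bound obtained.

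The three remaining cases are where the backup is essential, because $\sigma(u^*)\leq \sigma(u)$ and thus $u^*$ may disrupt $u$'s original match. In the case $(\x_u<\x_v,\x_{u^*}\leq\x_u)$, we have $\sigma(u^*)<\sigma(v)$, so \cref{lem:struct-property}(1) yields $g(u^*)\geq \price(\x_u,\x_{u^*})$, and \cref{lem:backup}(2) supplies $g(u)\geq 1-\price(\x_u,\x_b)$, summing to the claimed bound. For $(\x_v\leq \x_u,\x_{u^*}<\x_v)$ and $(\x_v\leq \x_{u^*}\leq \x_u)$ we split according to whether $u^*$'s arrival worsens $u$'s match: if it does not, then $g(u)\geq 1-\price(\x_u,\x_v)$ directly from the match in $\M(\sigma_{-u^*})$ persisting, combined (in the first subcase) with $g(u^*)\geq \price(\x_u,\x_{u^*})$ from \cref{lem:struct-property}(1); if it does, then \cref{lem:struct-property}(3) gives $g(u^*)\geq \price(\x_v,\x_{u^*})$ and \cref{lem:backup}(2) again gives $g(u)\geq 1-\price(\x_u,\x_b)$. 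Taking the minimum of the two subcase bounds yields exactly the expression in \eqref{defhclassbackup}. The main bookkeeping obstacle is making sure in the ``worsened'' subcases that we invoke \cref{lem:backup} rather than collapsing $g(u)$ to zero --- this is precisely the improvement that the backup hypothesis buys us over \cref{lem:hclassnobackup} --- and checking that the relative-order assumption $\x_v\leq \x_b$ is consistent with each case so that $\price(\x_u,\x_b)\geq \price(\x_u,\x_v)$ never makes the backup bound trivially worse than the matched-to-$v$ bound.
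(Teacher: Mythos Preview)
Your proposal is correct and follows essentially the same approach as the paper: the identical six-case partition on the bucket positions of $\x_u,\x_v,\x_{u^*}$, applying \cref{lem:struct-property} and \cref{lem:backup} in the same places, with the same ``worsened vs.\ not worsened'' split in the two ambiguous cases. The only cosmetic difference is that you group the three ``$\sigma(u)<\sigma(u^*)$'' cases together upfront, whereas the paper treats all six sequentially.
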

\begin{proof}
Recall by \cref{assumeuB}, we assumed that $u$ is a buyer and $u^*$ is an item. Also, recall the monotonicity property of the gain function $g$ as in \cref{monotonicitygainfunction}.

\textbf{Case ( $\x_u < \x_v, \x_{u^*} \leq \x_u$):} $\x_{u^*} \leq \x_u < \x_v \Rightarrow \sigma(u^*) < \sigma(v)$. By \cref{lem:struct-property}, $u^*$ is matched to some vertex with rank at most $\sigma(u)$, which implies $\g(u^*) \geq \price(\x_u, \x_{u^*})$. By \cref{lem:backup}, $u$ is matched to some vertex with rank at most $\sigma(b)$. This implies $\g(u) \geq 1 - \price(\x_u, \x_b)$. Therefore, we have:
$\g(u) + \g(u^*) \geq 1 - \price(\x_u, \x_b) + \price(\x_u, \x_{u^*}).$

\textbf{Case ($\x_u<\x_{u^*} < \x_v$):} $\x_u < \x_{u^*} < \x_v \Rightarrow \sigma(u) < \sigma(u^*) < \sigma(v)$. Hence, by \cref{lem:struct-property}, $u^*$ is matched to some vertex with rank at most $\sigma(u)$, and $u$ is matched to some vertex with rank at most $\sigma(v)$. This implies $\g(u^*) \geq \price(\x_u, \x_{u^*})$ and $\g(u) \geq 1 - \price(\x_u, \x_v)$. Therefore, we have:$\g(u) + \g(u^*) \geq 1 - \price(\x_u, \x_v) + \price(\x_u, \x_{u^*}).$

\textbf{Case ($\x_u<\x_v, \x_{u^*} \geq \x_v$):}$\x_u < \x_v \leq \x_{u^*} \Rightarrow \sigma(u) < \sigma(u^*)$. By \cref{lem:struct-property}, $u$ is matched to some vertex with rank at most $\sigma(v)$, and we have $\g(u) + \g(u^*) \geq \g(u) \geq 1 - \price(\x_u, \x_v)$.

\textbf{Case ( $\x_u \geq \x_v, \x_{u^*} < \x_v$):} $\x_{u^*}<\x_v\Rightarrow \sigma(u^*)<\sigma(v)$. By \cref{lem:struct-property}, $u^*$ is matched to some vertex with rank at most $\sigma(u)$. Either $u^*$ being matched does not make $u$ worse off, in which case $\g(u) + \g(u^*) \geq 1 - \price(\x_u, \x_v) + \price(\x_u, \x_{u^*})$; or $u^*$ being matched made $u$ worse off, in which case $u$ is matched to some vertex with rank at most $\sigma(b)$ (by \cref{lem:backup}) and $u^*$ is matched to some vertex with rank at most $\sigma(v)$, in this case $\g(u) + \g(u^*) \geq 1 - \price(\x_u, \x_b) + \price(\x_v, \x_{u^*})$. So we have 
$\g(u) + \g(u^*) \geq \min \{ 1 - \price(\x_u, \x_b) + \price(\x_v, \x_{u^*}), 1 - \price(\x_u, \x_v) + \price(\x_u, \x_{u^*}) \}.$

\textbf{Case ( $\x_v\leq \x_{u^*} \leq \x_u$):} Essentially any ordering of $\sigma(u), \sigma(v), \sigma(u^*)$ can happen. Either $u$ is made worse off by $u^*$ or not. If $u$ is made worse off, by \cref{lem:backup} it will still be matched to some vertex with rank at most $\sigma(b)$, and in this case $u^*$ is matched to some vertex with rank at most $\sigma(v)$ by \cref{lem:struct-property}. Thus we have $\g(u) + \g(u^*) \geq 1 - \price(\x_u, \x_b) + \price(\x_v, \x_{u^*}).$ If $u$ is not made worse off, $\g(u) + \g(u^*) \geq \g(u) \geq 1 - \price(\x_u, \x_v)$. So we have $\g(u) + \g(u^*) \geq \min \left\{ 1 - \price(\x_u, \x_b) + \price(\x_v, \x_{u^*}), 1 - \price(\x_u, \x_v) \right\}.$

\textbf{Case ($\x_u\geq \x_v, \x_{u^*} > \x_u$):} $\x_u<\x_{u^*}\Rightarrow \sigma(u)<\sigma(u^*)$. By \cref{lem:struct-property} $u$ is matched to some vertex with rank at most $\sigma(v)$, and we have $\g(u)+\g(u^*)\geq \g(u)\geq 1-\price(\x_u,\x_v)$.
\end{proof}
\subsection{Expected Gain for Fix $\sigma_{-u^*}$}
Functions $\hclassnomatch, \hclassnobackup, \hclassbackup$ provide lower bounds for each fixed $(\sigma_{-u^*}, \x_{u^*})$ pairs. Now we can bound the expected gain conditioning on each fixed permutation $\sigma_{-u^*}$ by taking expectation over $\x_{u^*}$ which uniformly takes value in $\{1,...,k\}$. 

We will use functions $\Hclassnomatch, \Hclassnobackup, \Hclassbackup$ for the lower bounds of expected gain conditioned on fixed $\sigma_{-u^*} \in \classnomatch, \classnobackup, \classbackup$. $\Hclassbackup, \Hclassnobackup$ and $\Hclassnomatch$ are defined as follows:
\begin{definition}\label{def:Hfunctions}
    For bucket values $\x_u,\x_v,\x_b\in\{1,...,k\}$, $\Hclassnomatch, \Hclassnobackup, \Hclassbackup$ are defined as:
    \begin{align*}
        &\Hclassnomatch(\x_u) = \sum_{i=1}^\numpieces \frac{1}{\numpieces} \hclassnomatch(\x_u,i),\\
        &\Hclassnobackup(\x_u,\x_v) = \sum_{i=1}^\numpieces \frac{1}{\numpieces} \hclassnobackup(\x_u,\x_v,i)\\
        &\Hclassbackup(\x_u,\x_v,\x_b) = \sum_{i=1}^\numpieces \frac{1}{\numpieces} \hclassbackup(\x_u,\x_v,\x_b,i)\\
    \end{align*}
\end{definition}
\begin{claim}\label{lem:Hfunction}
     Let $\sigma_{-u^*}$ be a rank vector over $V \backslash \{u^*\}$, and let $(\x_u, \x_v, \x_b)$ be the profile of $u$ in $\sigma_{u^*}$. $\x_v,\x_b$ could be $\bot$ depending on which class $\sigma_{-u^*}$ belongs to. We have the following:
     \begin{itemize}
         \item If $\sigma_{-u^*} \in \classnomatch$, then $\Hclassnomatch(\x_u) \leq \ev[g(u) + g(u^*) \;|\; \sigma_{-u^*}].$
         \item If $\sigma_{-u^*} \in \classnobackup$, then $\Hclassnobackup(\x_u, \x_v) \leq \ev[g(u) + g(u^*) \;|\; \sigma_{-u^*}].$
         \item If $\sigma_{-u^*} \in \classbackup$, then $\Hclassbackup(\x_u, \x_v, \x_b) \leq \ev[g(u) + g(u^*) \;|\; \sigma_{-u^*}].$
     \end{itemize}
\end{claim}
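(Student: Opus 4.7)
The plan is a direct application of the pointwise lower bounds from \cref{lem:hclassnomatch}, \cref{lem:hclassnobackup}, and \cref{lem:hclassbackup}, combined with the independence of the insertion position of $u^*$ from $\sigma_{-u^*}$. Throughout, we work under the reduction of \cref{assumeuB}, treating $u$ as a buyer and $u^*$ as an item.

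First I would fix $\sigma_{-u^*}$ and identify which class $\classnomatch, \classnobackup, \classbackup$ it belongs to by reading off the profile $(\x_u, \x_v, \x_b)$ of $u$. The full permutation $\sigma$ is obtained from $\sigma_{-u^*}$ by inserting $u^*$ at some position $(\x_{u^*}, \y_{u^*})$ sampled according to the bucketed random permutation generation model of \cref{def:BucketedRPG}. The crucial point is that in this model the bucket assignment $\x_{u^*}$ is drawn uniformly from $\{1,\dots,\numpieces\}$ independently of the bucket assignments and within-bucket ranks of all other vertices, hence independently of $\sigma_{-u^*}$; the within-bucket rank $\y_{u^*}$ is then determined by a uniform insertion into bucket $\x_{u^*}$ and is likewise independent of $\sigma_{-u^*}$.

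Next I would invoke the appropriate pointwise claim. In the $\classnomatch$ case, \cref{lem:hclassnomatch} yields $g(u)+g(u^*) \geq \hclassnomatch(\x_u, \x_{u^*})$ for every realization of $\sigma$; analogous inequalities hold in the $\classnobackup$ and $\classbackup$ cases via \cref{lem:hclassnobackup} and \cref{lem:hclassbackup}. Observe that each of these lower bounds depends only on the bucket values and, in particular, is a function of $\x_{u^*}$ alone once $(\x_u, \x_v, \x_b)$ is fixed---the within-bucket rank $\y_{u^*}$ plays no role in the bound. Taking conditional expectation over the insertion position and using independence of $(\x_{u^*}, \y_{u^*})$ from $\sigma_{-u^*}$, for instance in the $\classbackup$ case:
\[
\ev[g(u) + g(u^*) \mid \sigma_{-u^*}] \;\geq\; \ev_{\x_{u^*}}\!\left[\hclassbackup(\x_u, \x_v, \x_b, \x_{u^*})\right] \;=\; \sum_{i=1}^{\numpieces} \frac{1}{\numpieces}\, \hclassbackup(\x_u, \x_v, \x_b, i) \;=\; \Hclassbackup(\x_u, \x_v, \x_b),
\]
exactly matching \cref{def:Hfunctions}. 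The other two cases proceed identically, summing $\hclassnomatch(\x_u, i)/k$ and $\hclassnobackup(\x_u, \x_v, i)/k$ respectively.

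There is essentially no hard step here; the content of the lemma is in the three pointwise bounds, which have already been established. The only thing to verify carefully is that the bucketed permutation model truly makes $\x_{u^*}$ uniform and independent of $\sigma_{-u^*}$, and that the lower bounds from Section~\ref{section6} depend only on bucket values so that marginalizing over $\y_{u^*}$ introduces no loss. Both are immediate from \cref{def:BucketedRPG} and the definitions of the $h$ functions, so the proof amounts to writing out the expectation, invoking the pointwise claim, and recognizing the resulting sum as the definition of the appropriate $H$ function.
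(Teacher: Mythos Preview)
Your proposal is correct and follows essentially the same approach as the paper: invoke independence of $\x_{u^*}$ from $\sigma_{-u^*}$ in the bucketed model, apply the pointwise bounds of \cref{lem:hclassnomatch}, \cref{lem:hclassnobackup}, \cref{lem:hclassbackup}, and average over the uniform $\x_{u^*}$ to recover the $H$-functions. The paper's proof is terser but makes exactly the same moves; your extra remark that the $h$-bounds are insensitive to $\y_{u^*}$ is a helpful clarification the paper leaves implicit.
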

\begin{proof}
    By the way we defined bucketed random permutation in \cref{def:BucketedRPG}, the bucket value $u^*$ takes is independent from $\sigma_{-u^*}$. Let $\sigma$ be the random rank vector over $V$ and let $\sigma_{-u^*}$ be the random induced rank vector over $V\backslash\{u^*\}$ with $u^*$ removed. For each fixed rank vector $\sigma'$ on $V\backslash\{u^*\}$, each bucket value $i\in \{1,...,k\}$, we have the following:
    $$\pr(\x_{u^*}=i\;\wedge\;\sigma_{-u^*}=\sigma')=\frac{1}{k}\pr(\sigma')$$
    This fact combined with claim \cref{lem:hclassnomatch}, \cref{lem:hclassnobackup} and \cref{lem:hclassbackup} entails the claim.
\end{proof}

\section{Expectation over $\sigma_{-u^*}$ is Obtained by Uniform Profiles}\label{section8} 
In the previous section, we derived lower bounds for the expected gain conditioning on each specific $\sigma_{-u^*}$ being realized. The next step is to take the expectation over the distribution of $\sigma_{-u^*}$. To do this, we will define equivalence classes which further partition $\classnomatch,\classnobackup,$ and $\classbackup$ into smaller pieces of form $EC_\bot^i,EC_s^i$ and $EC_b^i$ s.t. 
$$\bigcup EC_\bot^i=\classnomatch, \quad \bigcup EC_s^i=\classnobackup,\quad \bigcup EC_b^i=\classbackup$$

 It will turn out that within each equivalence class, the profiles of $u$ share the same fixed $\x_u$, $\x_b$ bucket values. The distribution of $\x_v$, on the other hand, is almost a uniform distribution within each equivalence class. We will show that we can bound the expected value assuming a uniformly distributed $\x_v$. By considering the worst-case uniform distributions, we provide lower bounds for $EC_\bot^i,EC_s^i$ and $EC_b^i$ which then translates to bounds for $\classnomatch,\classnobackup,$ and $\classbackup$ by a simple law of total expectation argument. We then take the minimum of the three lower bounds and apply another round of law of total expectation argument to give a lower bound for any distribution on $\sigma_{-u^*}$ as a combination of distributions in $\classnomatch,\classnobackup,$ and $\classbackup$.
 
 As we will only focus on rank vectors $\sigma_{-u^*}$ on $V\backslash\{u^*\}$ in this section instead if considering rank vectors $\sigma$ over $V$, \textbf{we will abuse notation and use $\sigma$ instead of $\sigma_{-u^*}$ to represent rank vectors on $V\backslash\{u^*\}$ in the rest of this section.}

Intuitively, the equivalence classes are defined as follows:
\begin{itemize}
    \item Each $EC_\bot^i$ consists of a single rank vector $\sigma\in \classnomatch$.
    \item Each $EC_s^i$ (respectively, $EC_b^i$) consists of a set of rank vectors $\sigma \in \classnobackup$ (respectively, $\classbackup$). For all rank vectors in the same equivalence class, $u$ is matched to the same vertex $v$, furthermore, the rank vectors only differ at the rank of $v$. In other words, when $v$ is excluded from each of them, the induced rank vectors are the same.
\end{itemize}
Recall from \cref{descrete-monotonicity-nobackup} that we proved that increasing the rank of $v$ (the match of $u$) does not affect the matching. Also, recall from \cref{descrete-monotonicity-backup} that increasing the rank of $v$ does not affect the matching if, after the increase, $v$ is still placed before $b$ (the backup of $u$). These two properties entail the following structures for the equivalence classes $EC_s^i$ and $EC_b^i$:
\begin{itemize}
    \item $EC_s^i$: For each $EC_s^i$, there exists a generating rank vector $\sigma\in \classnobackup$. Denote $v$ as the match $u$ in $\M(\sigma)$. We have:
$$EC_s^i=\{\sigma_v^{(x,y)}\;|\; (x,y)\geq \sigma(v) \}$$
\item $EC_b^i$: For each $EC_b^i$, there exists a generating rank vector $\sigma\in \classbackup$. Denote $v$ as the match of $u$ in $\M(\sigma)$, and denote $b$ as the backup of $u$ in $\sigma$. We have:
$$EC_b^i=\{\sigma_v^{(x,y)}\;|\; \sigma(v)\leq (x,y)<\sigma(b)\}$$
\end{itemize}
With these structures, it holds that the distribution of the profile of $u$ within each equivalence class has $\x_u$ and $\x_b$ fixed. On the other hand, $\x_v$ is almost uniformly distributed except at the start and end buckets, namely $\x_0$ and $(\x_b \text{ or }\numpieces)$. At the start and end buckets, the $y$ values are also related. We cannot bound the distribution taking $y$ values into consideration as they depend on the size of the graph. We will prove that uniform distribution on $\x_v$ suffices to bound all such distributions involving $\y$ values. With this fact, we will construct lower bounds for each equivalence class assuming a uniform distribution structure on the profile of $u$.
\subsection{Definition of Equivalence Class}
We now formally define the equivalence classes and prove the structural properties of each equivalence class.
\begin{definition}\label{def:EC}
     Let $\sigma, \sigma'$ be rank vectors over $V \backslash \{u^*\}$. Let $v, v'$ be the match of $u$ in $\M(\sigma), \M(\sigma')$ respectively, if they exist. We say $\sigma, \sigma'$ are related, denoted as $\sigma \sim \sigma'$, if they belong to the same type of rank vector (\cref{def:Cs}) and the only difference appears at the rank of $u$'s match. Formally, $\sigma \sim \sigma'$ if one of the three cases holds:
    \begin{itemize}
    \item $(EC_\bot)$: $\sigma, \sigma' \in \classnomatch$ and $\sigma = \sigma'$,
    \item $(EC_s)$: $\sigma, \sigma' \in \classnobackup$ and $\sigma_{-v} = \sigma'_{-v'}$,
    \item $(EC_b)$: $\sigma, \sigma' \in \classbackup$ and $\sigma_{-v} = \sigma'_{-v'}$.
  \end{itemize}
    The $\sim$ relation defined above is an equivalence relation and can be easily verified. 
\end{definition}
 We will denote equivalence classes as $EC_\bot^i$, $EC_s^i$ and $EC_b^i$ indicating which of the three criteria is satisfied for rank vectors within the corresponding equivalence class. For example, any two vectors $\sigma,\sigma'\in EC_b^i$ will satisfy the $(EC_b)$ criterion.
\begin{claim}\label{sameexceptv}
Let $EC_\bot^i, EC_s^i$ and $EC_b^i$ be specific equivalence classes defined by $\sim$ partitioning $\classnomatch,\classnobackup$ and $\classbackup$ respectively. The following structural property holds for them, respectively. 
\begin{itemize}
    \item ($EC_\bot^i$): There exists a $\sigma\in \classnomatch$ s.t. $EC_\bot^i=\{\sigma\}$.
    \item ($EC_s^i$): There exists a $\sigma\in \classnobackup$ s.t. denoting $v$ as the match of $u$ in $\M(\sigma)$, we have:
    $$EC_s^i=\{\sigma_v^{(x,y)}\;|\; \forall (x,y)\; s.t.\; \sigma(v)\leq (x,y)\; \wedge\; \text{$\sigma_v^{(x,y)}$ is a valid rank vector}\}$$
    \item ($EC_b^i$): There exists $\sigma\in \classbackup$ s.t. denoting $v$ as the match of $u$ in $\M(\sigma)$, $b$ as the backup of $u$ in $\sigma$, we have:
    $$EC_b^i=\{\sigma_v^{(x,y)}\;|\; \forall (x,y)\; s.t.\; \sigma(v)\leq (x,y)<\sigma(b)\;\wedge\; \text{$\sigma_v^{(x,y)}$ is a valid rank vector}\}$$
We will denote this specific $\sigma$ as $\sigma_0$, the generating rank vector. 
\end{itemize}
\end{claim}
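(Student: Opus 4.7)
The plan is to handle the three cases separately. In each case I would choose the generating rank vector $\sigma_0$ as the element of the equivalence class in which the match $v$ of $u$ attains its smallest possible rank, and then argue that the equivalence class coincides with the set obtained from $\sigma_0$ by moving $v$ to later positions, bounded from above appropriately. For $EC_\bot^i$, the relation $\sim$ restricted to $\classnomatch$ is equality by \cref{def:EC}, so each class is a singleton and the claim is immediate.

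For $EC_s^i$, I first observe that $\sigma \sim \sigma'$ in $EC_s$ forces the match of $u$ in $\sigma$ and in $\sigma'$ to be the same vertex, since the defining identity $\sigma_{-v} = \sigma'_{-v'}$ requires the two induced rank vectors to share a common domain, giving $v = v'$. Let $\sigma_0 \in EC_s^i$ minimize $\sigma(v)$ among vectors in the class. For the forward inclusion, I would apply \cref{descrete-monotonicity-nobackup} iteratively to show that moving $v$ to any rank $(x,y) \geq \sigma_0(v)$ keeps $u$ matched to the same $v$ with no backup; since only $v$'s rank changes, the induced vectors on $V \setminus \{v\}$ agree, so $(\sigma_0)_v^{(x,y)} \sim \sigma_0$. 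For the reverse inclusion, any $\sigma' \in EC_s^i$ agrees with $\sigma_0$ on $V \setminus \{v\}$, hence equals $(\sigma_0)_v^{(x,y)}$ for some $(x,y)$, and minimality of $\sigma_0(v)$ forces $(x,y) \geq \sigma_0(v)$.

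For $EC_b^i$, the argument parallels the $EC_s^i$ case but now also uses the backup. Again $v' = v$, and moreover the backup $b$ is the same vertex across the class, since $b$ is the match of $u$ in $\M(\sigma_{-v})$, which is determined entirely by $\sigma_{-v}$ and is fixed across $EC_b^i$. Picking $\sigma_0$ to minimize $\sigma(v)$, the forward direction follows by iteratively applying \cref{descrete-monotonicity-backup}, which preserves the match $v$ and backup $b$ so long as $v$'s rank stays strictly below $\sigma_0(b)$. The reverse direction is the same ``differ only at $v$'' argument as before, with the upper bound $(x,y) < \sigma_0(b)$ supplied by \cref{lem:backup}(1), which asserts that the match's rank is strictly less than the backup's rank in any $\sigma' \in EC_b^i$.

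The main obstacle I anticipate is the bucketing subtlety: when $v$ moves into the same bucket as $b$, the $y$-coordinate of $b$ can shift under the operation $\sigma_v^{(x,y)}$, so the raw comparison $(x,y) < \sigma_0(b)$ needs a consistent interpretation. This is handled by noting that the operation $\sigma_v^{(x,y)}$ preserves the bucket value of every vertex other than $v$ and their relative order, so the strict inequality $\sigma'(v) < \sigma'(b)$ furnished by \cref{lem:backup}(1) on the ambient permutation matches up with $(x,y) < \sigma_0(b)$ once one unpacks the definition of $\sigma_v^{(x,y)}$; likewise for the minimality argument on the $EC_s$ side, the same observation ensures that ``moving $v$ to a smaller rank'' is well defined relative to the fixed arrangement of $V \setminus \{v\}$.
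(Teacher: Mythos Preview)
Your proposal is correct and follows essentially the same approach as the paper: pick as $\sigma_0$ the class element with the smallest rank for $v$, then use \cref{descrete-monotonicity-nobackup} (respectively \cref{descrete-monotonicity-backup}) to populate the class by pushing $v$ to later ranks. Your write-up is in fact somewhat more careful than the paper's own argument, explicitly justifying $v=v'$ from the domain constraint, invoking \cref{lem:backup}(1) for the upper bound in the $EC_b$ case, and flagging the bucketing subtlety when $v$ enters $b$'s bucket; the paper leaves these points implicit.
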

\begin{proof}
    The proof is essentially due to \cref{descrete-monotonicity-nobackup} and \cref{descrete-monotonicity-backup}, which allow us to increase the rank of $v$ while maintaining $v$ as the match of $u$. By the definition of equivalence class, all rank vectors in the same equivalence class share the same match $v$, and there exists a minimum rank to which $v$ can be moved such that $u$ remains matched with $v$. We can place $v$ arbitrarily between this minimum rank and the end rank ($(k,k)$ in the case of $EC^i_s$ and $\sigma(b)$ in the case of $EC_b^i$), while ensuring that the permuted rank vector still belongs to the same equivalence class. A detailed proof can be found in \nameref{Appendix1}.
\end{proof}

With the equivalence classes defined, we will provide bounds for each of the equivalence classes $EC_\bot^i,EC_s^i$ and $EC_b^i$ and the respective entailed lower bounds for $\classnomatch,\classnobackup$ and $\classbackup$. 

\subsection{Lower Bounds For $EC_\bot^i$ and $\classnomatch$}\label{sectionexpecthbot}
We start with the easiest case, $EC_\bot^i$. Fix an arbitrary equivalence class $EC_\bot^i$. By \cref{sameexceptv}, it consists of a singleton rank vector $\sigma_0$. Let $(\x_u,\bot,\bot)$ be the profile of $u$ in $\sigma_0$. The following is a lower bound for expected gain conditioning on $\sigma\in EC_\bot^i$. 
\begin{claim}\label{lowerboundingecboti}
    Let $\sigma$ be the random rank vector over $V\backslash\{u^*\}$. The expected gain of $g(u)+g(u^*)$ conditioning on $\sigma\in EC_\bot^i$ is lower bounded by:
$$
    \Hclassnomatch(\x_u) = \sum_{\x_{u^*}=1}^\numpieces \frac{1}{\numpieces}\hclassnomatch(\x_u,\x_{u^*}).
$$
\end{claim}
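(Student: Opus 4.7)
The plan is to reduce this claim directly to the per-permutation bound already established in \cref{lem:Hfunction}, using the fact that $EC_\bot^i$ is just a single rank vector.

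First, I would invoke \cref{sameexceptv}, which says that every equivalence class of type $EC_\bot$ is a singleton: there exists some $\sigma_0 \in \classnomatch$ with $EC_\bot^i = \{\sigma_0\}$, whose profile of $u$ is $(\x_u, \bot, \bot)$. Conditioning on $\sigma_{-u^*} \in EC_\bot^i$ is therefore the same as conditioning on $\sigma_{-u^*} = \sigma_0$. Hence
\[
\ev[g(u) + g(u^*) \mid \sigma_{-u^*} \in EC_\bot^i] \;=\; \ev[g(u) + g(u^*) \mid \sigma_{-u^*} = \sigma_0],
\]
where the only remaining randomness is the bucket (and within-bucket rank) of $u^*$.

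Second, I would apply the $\classnomatch$ branch of \cref{lem:Hfunction} to the fixed rank vector $\sigma_0$, which immediately gives
\[
\Hclassnomatch(\x_u) \;\leq\; \ev[g(u) + g(u^*) \mid \sigma_{-u^*} = \sigma_0],
\]
delivering the claim. If a self-contained derivation is preferred, one can instead unroll the definition of $\Hclassnomatch$ using the bucketed permutation model (\cref{def:BucketedRPG}): the bucket $\x_{u^*}$ is uniform on $\{1,\dots,k\}$ and independent of $\sigma_{-u^*}$, so
\[
\ev[g(u)+g(u^*) \mid \sigma_{-u^*}=\sigma_0] \;=\; \sum_{i=1}^{k} \tfrac{1}{k}\, \ev\!\left[g(u)+g(u^*) \,\middle|\, \sigma_{-u^*}=\sigma_0,\; \x_{u^*}=i\right],
\]
and \cref{lem:hclassnomatch} bounds each conditional expectation from below by $\hclassnomatch(\x_u, i)$, whose average over $i$ is precisely $\Hclassnomatch(\x_u)$.

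Since the claim is essentially a corollary of the two previously established ingredients (the singleton structure of $EC_\bot^i$ and the already-proved per-permutation lower bound), there is no real obstacle here; this subsection is setting up the easy base case before the more involved analyses of $EC_s^i$ and $EC_b^i$, where the nontrivial work will be showing that a uniform distribution on $\x_v$ suffices to bound the true distribution that also depends on the within-bucket $\y$-values.
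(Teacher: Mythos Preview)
Your proposal is correct and matches the paper's proof almost verbatim: the paper simply notes that conditioning on $\sigma\in EC_\bot^i$ is the same as conditioning on $\sigma=\sigma_0$ (since $EC_\bot^i$ is a singleton), and then invokes \cref{lem:Hfunction}. Your optional unrolled derivation via \cref{lem:hclassnomatch} is just the proof of \cref{lem:Hfunction} restated for this case, so there is no meaningful difference.
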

\begin{proof}
    As conditioning on $\sigma\in EC_\bot^i$ is conditioning on $\sigma=\sigma_0$. By \cref{lem:Hfunction}, $\Hclassnomatch(\x_u)$ is a lower bound for the expected value of $g(u)+g(u^*)$.
\end{proof}
 Because any $\sigma\in \classnomatch$ belongs to some equivalence class of type $EC_\bot^i$, a lower bound for $\ev[g(u)+g(u^*)]$ conditioning on $\sigma\in \classnomatch$ and $\x_u=i$ can be obtained by the following.
 \begin{claim}\label{claim7.1}
     Let $\sigma$ be the bucketed random rank vector over $V\backslash\{u^*\}$, and let $\x_u$ the the random bucket value for $i$ in $\sigma$. Let $i$ be a bucket value in $\{1,...,k\}$. We have that the expected value of $g(u)+g(u^*)$ conditioning on $\sigma\in \classnomatch$ and $\x_u=i$, that is, the value
     $$\ev [g(u)+g(u^*)\;|\; \sigma\in \classnomatch\;\wedge\;\x_u=i],$$
     is lower bounded by:
     \begin{equation}\label{expextedhbot}
     \sum_{\x_{u^*}=1}^\numpieces \frac{1}{\numpieces}\hclassnomatch(i,\x_{u^*}).
     \end{equation}
 \end{claim}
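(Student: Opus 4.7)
The plan is to use the law of total expectation by partitioning the conditioning event $\{\sigma \in \classnomatch \wedge \x_u = i\}$ into the equivalence classes $EC_\bot^j$ that it comprises. By \cref{sameexceptv}, every equivalence class $EC_\bot^j$ contained in $\classnomatch$ consists of a single rank vector, so refining the conditioning to any particular $EC_\bot^j$ amounts to conditioning on a specific rank vector $\sigma_0$ with profile $(\x_u, \bot, \bot)$. The key observation is that the lower bound $\Hclassnomatch(\x_u)$ from \cref{def:Hfunctions} depends only on $\x_u$, not on any finer information about $\sigma_0$.

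First I would enumerate the equivalence classes: let $\mathcal{S}_i$ denote the set of equivalence classes $EC_\bot^j \subseteq \classnomatch$ whose (unique) representative rank vector $\sigma_0$ satisfies $\x_u = i$. Because the $\sim$-relation partitions $\classnomatch$ and by \cref{sameexceptv} each such class is a singleton, the event $\{\sigma \in \classnomatch \wedge \x_u = i\}$ is precisely the disjoint union $\bigsqcup_{EC_\bot^j \in \mathcal{S}_i} \{\sigma \in EC_\bot^j\}$.

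Next, I would apply \cref{lowerboundingecboti} to each equivalence class $EC_\bot^j \in \mathcal{S}_i$: conditioning on $\sigma \in EC_\bot^j$ gives
\[
\ev[g(u) + g(u^*) \mid \sigma \in EC_\bot^j] \geq \Hclassnomatch(i) = \sum_{\x_{u^*}=1}^{\numpieces} \frac{1}{\numpieces} \hclassnomatch(i, \x_{u^*}),
\]
where the equality uses that the representative of $EC_\bot^j$ has $\x_u = i$. Finally, by the law of total expectation,
\[
\ev[g(u) + g(u^*) \mid \sigma \in \classnomatch \wedge \x_u = i] = \sum_{EC_\bot^j \in \mathcal{S}_i} \pr[\sigma \in EC_\bot^j \mid \sigma \in \classnomatch \wedge \x_u = i] \cdot \ev[g(u) + g(u^*) \mid \sigma \in EC_\bot^j],
\]
and since every summand is at least $\sum_{\x_{u^*}=1}^{\numpieces} \tfrac{1}{\numpieces} \hclassnomatch(i, \x_{u^*})$ and the weights sum to $1$, the claim follows.

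There is no real obstacle here; this is essentially a bookkeeping step that lifts the per-equivalence-class bound of \cref{lowerboundingecboti} to a bound conditioned on the coarser event that only fixes $\x_u$. The slightly delicate point is confirming that the equivalence classes $EC_\bot^j \subseteq \classnomatch$ genuinely partition $\classnomatch$ (which follows from $\sim$ being an equivalence relation as noted in \cref{def:EC}) and that $\mathcal{S}_i$ is nonempty only when we need it to be; if $\mathcal{S}_i$ happens to be empty, the conditional expectation is vacuously defined and the inequality is trivial.
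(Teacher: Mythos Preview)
Your proposal is correct and follows essentially the same approach as the paper: partition $\{\sigma \in \classnomatch \wedge \x_u = i\}$ into the singleton equivalence classes $EC_\bot^{i_j}$, apply the per-class bound from \cref{lowerboundingecboti}, and combine via the law of total expectation. The paper's proof is slightly more terse but structurally identical.
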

 \begin{proof}
By the definition of $\sim$ and \cref{sameexceptv}, there exists a list of equivalence classes $EC_\bot^{i_1}, EC_\bot^{i_2}, \ldots, EC_\bot^{i_l}$ partitioning the set of rank vectors $\sigma \in \classnomatch$ where bucket value of $u$ is $c$. Let 
$$\alpha_i = \sum_{\x_{u^*}=1}^\numpieces \frac{1}{\numpieces}\hclassnomatch(i, \x_{u^*}).$$
We have by \cref{lowerboundingecboti}:
$$
\alpha_i \leq \ev[g(u) + g(u^*) \;|\; \sigma \in EC_\bot^{i_j}] \quad \text{for all equivalence classes } EC_\bot^{i_j}.
$$
Thus:
\begin{align*}
    \alpha_i &\leq \sum_{j=1}^l \ev[g(u) + g(u^*) \;|\; \sigma \in EC_\bot^{i_j}] \cdot \pr(\sigma \in EC_\bot^{i_j} \;|\; \sigma \in \classnomatch \;\wedge\; \x_u = i), \\
    &= \ev[g(u) + g(u^*) \;|\; \sigma \in \classnomatch \;\wedge\; \x_u = i],
\end{align*}
by the law of total expectation.
 \end{proof}
\subsection{Lower Bounds For $EC_s^i$ and $\classnobackup$}\label{sectionexpecths}
Fix an arbitrary $EC_s^i$ set of rank vectors, and let $\sigma_0$ be the generating rank vector in $EC_s^i$ by \cref{sameexceptv}. Let $(\x_u,\x_0,\bot)$ be the profile of $u$ in $\sigma_0$. Let $v$ be the match of $u$ in $\M(\sigma_0)$. The following is a lower bound for expected gain conditioning on $EC_s^i$.
\begin{claim}
     Let $\sigma$ be the bucketed random rank vector over $V\backslash\{u^*\}$. The expected gain of $g(u)+g(u^*)$ conditioning on $\sigma\in EC_s^i$ is lower bounded by:
$$
    \min_{c\in\{1,...,\numpieces\}}\left\{\frac{1}{\numpieces-c+1}\sum_{\x=c}^{\numpieces}\sum_{\x_{u^*=1}}^{\numpieces}\frac{1}{\numpieces}\hclassnobackup(\x_u,\x,\x_{u^*})\right\}.
$$
\end{claim}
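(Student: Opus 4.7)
The plan is to lower bound the conditional expectation by applying \cref{lem:Hfunction} rank-vector by rank-vector, then pin down the conditional distribution of $\x_v$ within $EC_s^i$ and decompose it as a convex combination of uniform-suffix distributions so that the bound drops out by averaging.

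First, for every $\sigma \in EC_s^i$ with profile $(\x_u, \x_v, \bot)$, \cref{lem:Hfunction} gives
$$\ev_{\x_{u^*}}[g(u) + g(u^*) \mid \sigma] \geq \Hclassnobackup(\x_u, \x_v).$$
Since \cref{sameexceptv} fixes $\x_u$ across the entire equivalence class, taking expectation over $\sigma \in EC_s^i$ reduces the task to lower-bounding $\ev[\Hclassnobackup(\x_u, \x_v) \mid EC_s^i]$ as a function of the single random variable $\x_v$.

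Next, I would work out the conditional distribution of $\x_v$ on the support $\{\x_0, \x_0 + 1, \ldots, \numpieces\}$. By \cref{sameexceptv}, rank vectors in $EC_s^i$ correspond to placements of $v$ at positions $(x, y) \geq (\x_0, y_0) = \sigma_0(v)$. Letting $m_c$ denote the bucket-$c$ size in $\sigma_0$, the number of placements with $\x_v = \x_0$ is $m_{\x_0} - y_0 + 1$, while the number with $\x_v = c > \x_0$ is $m_c + 1$. Because each placement into a different bucket changes the bucket-size profile (shrinking bucket $\x_0$ and growing bucket $c$), the per-placement probability under the bucketed random permutation differs between $\x_v = \x_0$ and $\x_v > \x_0$; a short factorial calculation shows that $\pr(\x_v = c \mid EC_s^i)$ is the same constant $q$ for every $c \in \{\x_0 + 1, \ldots, \numpieces\}$, and that the mass $p := \pr(\x_v = \x_0 \mid EC_s^i)$ satisfies $q/p = m_{\x_0}/(m_{\x_0} - y_0 + 1) \geq 1$.

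The key step is the convex decomposition. Let $U_c$ denote the uniform distribution on $\{c, \ldots, \numpieces\}$ and set $\lambda_{\x_0} := p(\numpieces - \x_0 + 1)$ and $\lambda_{\x_0 + 1} := (q - p)(\numpieces - \x_0)$. Both coefficients are nonnegative (using $q \geq p$) and sum to $1$, giving the distributional identity $P = \lambda_{\x_0} U_{\x_0} + \lambda_{\x_0 + 1} U_{\x_0 + 1}$. Writing $F(c) := \tfrac{1}{\numpieces - c + 1}\sum_{\x = c}^{\numpieces} \Hclassnobackup(\x_u, \x)$ for the uniform-suffix average, convexity yields
$$\ev[\Hclassnobackup(\x_u, \x_v) \mid EC_s^i] = \lambda_{\x_0} F(\x_0) + \lambda_{\x_0 + 1} F(\x_0 + 1) \geq \min_{c \in \{1, \ldots, \numpieces\}} F(c),$$
which, after unfolding $\Hclassnobackup$ via \cref{def:Hfunctions}, is exactly the asserted lower bound. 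The boundary case $\x_0 = \numpieces$ is handled by $\lambda_{\x_0 + 1} = 0$ and $P = U_{\numpieces}$.

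The main obstacle is the probability calculation in the second step: because different placements of $v$ produce bucketed rank vectors with different bucket-size profiles, one cannot simply treat them as equally likely. Getting the factorial bookkeeping right is essential, and the crucial qualitative outcome that the probability mass $p$ at the starting bucket $\x_0$ is \emph{no larger} than the common mass $q$ on later buckets is precisely what makes the convex decomposition have nonnegative coefficients and hence drives the whole argument.
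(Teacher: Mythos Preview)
Your proposal is correct and follows essentially the same approach as the paper. The paper computes the conditional distribution of $\x_v$ via independence of $v$'s bucket value (arriving at the same masses you obtain by direct factorial counting), then phrases the final step as ``the lower bound is an affine combination whose minimum is at an endpoint of $p_0\in[0,1/k]$''---which is exactly your convex decomposition $P=\lambda_{\x_0}U_{\x_0}+\lambda_{\x_0+1}U_{\x_0+1}$ viewed from the other side.
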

\begin{proof}
    Let $\sigma_0(v)=(\x_0,\y_0)$. By \cref{sameexceptv}, any rank vector in $EC_s^i$ is generated by moving $v$ in $\sigma_0$ to some rank $(x_v,y_v)\geq(\x_0,\y_0)$. 
    
    The probability of $\x_v=x_0$ conditioning on $\sigma\in EC_s^i$ is the following, due to independence of the bucket value of $v$, and the fact that $\y_v$ needs to be a number $\geq\y_0$:
    $$\pr(\x_v=\x_0\;|\; \sigma\in EC_s^i)=\frac{\pr(\y_v\geq\y_0\;|\; \x_v=\x_0)\;\pr(\x_v=\x_0)\;}{\pr((x_v,y_v)\geq (\x_0,\y_0))}$$
    The probability of $\x_v=x$ for each $x>\x_0$ conditioning on $\sigma\in EC_s^i$ is the following, due to independence of the bucket value of $v:$
     $$\pr(\x_v=\x\;|\; \sigma\in EC_s^i)=\frac{\pr(\x_v=\x)}{\pr((x_v,y_v)\geq (\x_0,y_0))}$$
    Let  $p_0=\pr(\x_v=\x_0)\pr(\y_v\geq\y_0\;|\; \x_v=\x_0\;\wedge\; \sigma\in EC_s^i)$, we know that $0\leq p_0\leq \frac{1}{k}$. We also know that for each $x$, $\pr(\x_v=x)=\frac{1}{k}$. So the conditional probabilities are simplified to the following:
\begin{align*}
    &\pr(\x_v=\x_0\;|\; \sigma\in EC_s^i)=\frac{p_0}{p_0+\frac{\numpieces-\x_0}{\numpieces}},\\
     &\pr(\x_v=\x\;|\; \sigma\in EC_s^i)=\frac{\frac{1}{\numpieces}}{p_0+\frac{\numpieces-\x_0}{\numpieces}}\quad\quad\forall x>\x_0.
\end{align*}
 With this distribution of $\x_v$, we have a lower bound for $g(u) + g(u^*)$ conditioning on $\sigma\in EC_s^i$ by
\begin{align*}
    &\ev[g(u)+g(u^*)\;|\; \sigma\in EC_s^i],\\
    \geq&\pr(\x_v=\x_0\;|\; \sigma\in EC_s^i)H_s(\x_u,\x_0)+\sum_{\x=\x_0+1}^{\numpieces}\pr(\x_v=\x\;|\; \sigma\in EC_s^i)H_s(\x_u,x),\\
    =&\mathbf{a_1}\cdot H_s(\x_u,\x_0)+\mathbf{a_2}\cdot\sum_{\x=\x_0+1}^{\numpieces}H_s(\x_u,x),
\end{align*}
where $H_s$ is the function defined in \cref{def:Hfunctions} and denote $\mathbf{a_1}=\frac{p_0}{p_0+\frac{\numpieces-\x_0}{\numpieces}}$, $\mathbf{a_2}=\frac{\frac{1}{\numpieces}}{p_0+\frac{\numpieces-\x_0}{\numpieces}}$. When $\x_u, \x_0$ are fixed constants, which is true as it is the profile of $u$ in $\sigma_0$, the two terms multiplied by $\mathbf{a_1}$ and $\mathbf{a_2}$ are constants. Thus, the lower bound is simply an affine combination of two constants, and the minimum is achieved in one of the following two cases:
\begin{align*}
    \frac{1}{\numpieces-x_0+1}\sum_{\x=\x_0}^{\numpieces}H_s(\x_u,\x)&=\frac{1}{\numpieces-x_0+1}\sum_{\x=\x_0}^{\numpieces}\sum_{\x_{u^*=1}}^{\numpieces}\frac{1}{\numpieces}\hclassnobackup(\x_u,\x,\x_{u^*}),&&\text{(when $\mathbf{a_1}$ is maximized)},\\
     \frac{1}{\numpieces-x_0}\sum_{\x=\x_0+1}^{\numpieces}H_s(\x_u,\x)&=\frac{1}{\numpieces-x_0}\sum_{\x=\x_0+1}^{\numpieces}\sum_{\x_{u^*=1}}^{\numpieces}\frac{1}{\numpieces}\hclassnobackup(\x_u,\x,\x_{u^*}).&&\text{(when $\mathbf{a_2}$ is maximized)}.
\end{align*}
Now we can derive a lower bound for the expected gain for $\sigma\in EC_s^i$, where $\x_u$ is fixed and $\x_0$ is chosen to be adversarial. The lower bound is the following, accounting for both cases above: 
$$
    \min_{c\in\{1,...,\numpieces\}}\{\frac{1}{\numpieces-c+1}\sum_{\x=c}^{\numpieces}\sum_{\x_{u^*=1}}^{\numpieces}\frac{1}{\numpieces}\hclassnobackup(\x_u,\x,\x_{u^*})\}.
$$
This is technically a lower bound stronger than needed, as we are choosing $\x_0$ to be adversarial. We are doing this because in the next step, we need to consider adversarial $\x_0$ but we are safe to assume a fixed $\x_u$ value.
\end{proof}
Similar to the $\classnomatch$ case, any $\sigma\in \classnobackup$ belongs to some equivalence class of type $EC_s^i$, a lower bound for $\ev[g(u)+g(u^*)\;|\; \sigma\in\classnobackup\;\wedge\;\x_u=i]$ can be obtained by setting $\x_u=i$.
\begin{claim}\label{claim7.2}
     Let $\sigma$ be the bucketed random rank vector over $V\backslash\{u^*\}$, and let $j$ be a bucket value in $\{1,...,k\}$. We have that the expected value of $g(u)+g(u^*)$ conditioning on $\sigma\in \classnobackup$ and $\x_u=i$, i.e. the value
     $$\ev[g(u)+g(u^*)\;|\; \sigma\in \classnobackup\;\wedge\;\x_u=i],$$
     is lower bounded by:
     \begin{equation}\label{expectedhs}
     \min_{c\in\{1,...,\numpieces\}}\left\{\frac{1}{\numpieces-c+1}\sum_{\x=c}^{\numpieces}\sum_{\x_{u^*=1}}^{\numpieces}\frac{1}{\numpieces}\hclassnobackup(i,\x,\x_{u^*})\right \}.
     \end{equation}
 \end{claim}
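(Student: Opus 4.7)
My plan is to mirror the proof of \cref{claim7.1} almost verbatim, exploiting the fact that the per-equivalence-class bound proved just before this claim was already written in a form that is uniform over all equivalence classes with a fixed $\x_u$. By \cref{def:EC} and \cref{sameexceptv}, the event $\{\sigma\in\classnobackup\}\cap\{\x_u=i\}$ decomposes as a disjoint union of equivalence classes $EC_s^{i_1},EC_s^{i_2},\dots,EC_s^{i_l}$, and every rank vector in each $EC_s^{i_j}$ shares the same $\x_u=i$ (since within an equivalence class only the rank of the match $v$ can change, not the rank of $u$).

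First I would apply the preceding bound on each $EC_s^{i_j}$ to obtain
\begin{equation*}
\ev[g(u)+g(u^*)\mid\sigma\in EC_s^{i_j}]\;\geq\;\min_{c\in\{1,\dots,\numpieces\}}\left\{\frac{1}{\numpieces-c+1}\sum_{\x=c}^{\numpieces}\sum_{\x_{u^*}=1}^{\numpieces}\frac{1}{\numpieces}\hclassnobackup(i,\x,\x_{u^*})\right\}\;=:\;\beta_i.
\end{equation*}
The crucial observation is that $\beta_i$ is independent of $j$: the only data entering the bound are $\x_u=i$ and a minimization over $c$, both of which coincide across all equivalence classes in the decomposition. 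This is precisely why the preceding claim took the worst case over the generating bucket $\x_0$ (rather than leaving it class-specific) -- so that the per-class bounds could be collapsed into a single value along any fibre $\x_u=i$.

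The rest is a law-of-total-expectation calculation identical in shape to \cref{claim7.1}:
\begin{align*}
\ev[g(u)+g(u^*)\mid\sigma\in\classnobackup\wedge\x_u=i]
&=\sum_{j=1}^{l}\ev[g(u)+g(u^*)\mid\sigma\in EC_s^{i_j}]\cdot\pr(\sigma\in EC_s^{i_j}\mid\sigma\in\classnobackup\wedge\x_u=i)\\
&\geq\beta_i\sum_{j=1}^{l}\pr(\sigma\in EC_s^{i_j}\mid\sigma\in\classnobackup\wedge\x_u=i)\;=\;\beta_i.
\end{align*}

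I do not expect a serious obstacle here; the whole analytic content was dispatched in the previous claim, and what remains is organizational. The only two points to verify carefully are (i) that $\sim$ genuinely partitions $\classnobackup$ (stated immediately after \cref{def:EC}) and (ii) that $\x_u$ is constant along any equivalence class, which follows from the structural characterization in \cref{sameexceptv} since only the position of $v$ varies within $EC_s^i$. With these in hand the convex combination is immediate and yields the stated bound.
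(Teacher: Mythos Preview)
Your proposal is correct and matches the paper's own proof essentially line for line: decompose $\{\sigma\in\classnobackup,\ \x_u=i\}$ into the equivalence classes $EC_s^{i_j}$, apply the preceding per-class bound (which was deliberately made uniform in $\x_u$), and finish with the law of total expectation. The only cosmetic difference is that the paper calls the common lower bound $\alpha_i$ rather than $\beta_i$.
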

 \begin{proof}
 The proof is similar to the case $\classnomatch$. See detailed proof in \nameref{Appendix1}.
 \end{proof}
\subsection{Lower Bounds For $EC_b^i$ and $\classbackup$}\label{sectionexpecthb}
Fix an arbitrary $EC_b^i$ set of rank vectors, and let $\sigma_0$ be the generating rank vector in $EC_s^i$ by \cref{sameexceptv}. Let $(\x_u,\x_0,\x_b)$ be the profile of $u$ in $\sigma_0$. Let $v$ be the match of $u$ in $\M(\sigma_0)$. The following is a lower bound for the expected gain conditioning on $\sigma\in EC_b^i$.
\begin{claim}
     Let $\sigma$ be the bucketed random rank vector over $V\backslash\{u^*\}$. The expected gain of $g(u)+g(u^*)$ conditioning on $\sigma\in EC_b^i$ is lower bounded by:
$$
    \min_{c\leq d}\left\{\frac{1}{d-c+1}\sum_{\x=c}^{d}\sum_{\x_{u^*=1}}^{\numpieces}\frac{1}{\numpieces}\hclassbackup(\x_u,\x,d+1,\x_{u^*})\right\}.
$$
\end{claim}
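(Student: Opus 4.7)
The argument mirrors the $EC_s^i$ analysis directly above, now accounting for the two-sided truncation of $\x_v$ imposed by the backup bucket. By \cref{sameexceptv}, every $\sigma \in EC_b^i$ is of the form $(\sigma_0)_v^{(x,y)}$ for some $(x,y)$ with $(\x_0, \y_0) \leq (x,y) < (\x_b, \y_b)$, where $\sigma_0(v) = (\x_0, \y_0)$ and $\sigma_0(b) = (\x_b, \y_b)$. Since the bucket assignment of $v$ is independent of those of the other vertices, the conditional distribution of $\x_v$ given $\sigma \in EC_b^i$ places mass $p_0/Z$ at $\x_0$, mass $p_b/Z$ at $\x_b$, and mass $(1/k)/Z$ at each intermediate bucket, for some $p_0, p_b \in [0, 1/k]$ (reflecting the $\y$-coordinate constraints at the two endpoint buckets) and normalizer $Z = p_0 + p_b + (\x_b - \x_0 - 1)/k$.

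Combining this with \cref{lem:Hfunction} gives
$$\ev[g(u) + g(u^*) \mid \sigma \in EC_b^i] \geq \frac{p_0\, H_b(\x_u, \x_0, \x_b) + \tfrac{1}{k}\sum_{\x = \x_0 + 1}^{\x_b - 1} H_b(\x_u, \x, \x_b) + p_b\, H_b(\x_u, \x_b, \x_b)}{Z}.$$
Viewed as a linear-fractional function of $(p_0, p_b) \in [0, 1/k]^2$, the right-hand side has a partial derivative with respect to $p_0$ whose numerator depends only on $p_b$ (and symmetrically), so it is monotone in each coordinate and its minimum on the rectangle is attained at a corner. Each of the four corner values is the unweighted average of $H_b(\x_u, \cdot, \x_b)$ over a consecutive bucket range $\{c, \dots, d\}$ with $c \in \{\x_0, \x_0 + 1\}$ and $d \in \{\x_b - 1, \x_b\}$.

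The two corners with $d = \x_b - 1$ fit the claim's form $\tfrac{1}{d - c + 1}\sum_{\x = c}^{d} H_b(\x_u, \x, d + 1)$ exactly at those values of $(c, d)$, and hence are at least the minimum $L$ in the claim. For the remaining two corners (with $d = \x_b$), the average includes the term $H_b(\x_u, \x_b, \x_b)$ and does not a priori fit the form because $d + 1 = \x_b + 1$ there. To handle these, I would use the fact that $\hclassbackup(\x_u, \x, \cdot, \x_{u^*})$ is non-increasing in its third argument: each piece of its piecewise definition either ignores $\x_b$ entirely or depends on it only through $-\price(\x_u, \x_b)$, which is non-increasing in $\x_b$ by monotonicity of $\price$. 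This lets me dominate each such corner by the value of the claim's expression at $(c, \x_b)$ (whose third argument is $\x_b + 1$), and hence by $L$. I expect the main technical obstacle to be the boundary case $\x_b = k$, where $\x_b + 1$ exits the admissible bucket range and the comparison above does not directly apply; a small side argument (for instance, extending $\price$ formally to bucket $k + 1$, or treating this case via a dedicated bound consistent with the bucketed-loss remark in the footnote) should close this gap.
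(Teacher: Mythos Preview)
Your proposal is correct and follows essentially the same route as the paper: both compute the conditional distribution of $\x_v$ on $EC_b^i$ as uniform on the interior buckets with endpoint masses $p_0,p_b\in[0,1/k]$, reduce the resulting linear-fractional expression in $(p_0,p_b)$ to its four corner values, and then use the fact that $\hclassbackup$ is non-increasing in its third (backup) argument to absorb the two $d=\x_b$ corners into the unified $d{+}1$ form. Your anticipated boundary issue at $\x_b=k$ is exactly what the paper resolves by formally extending $\price$ (and hence $\hclassbackup$) to bucket $k{+}1$ while preserving monotonicity, as noted in its footnote.
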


\begin{proof}
Let $\sigma_0(v)=(\x_0,y_0)$ and let $\sigma_0(b)=(\x_b,\y_b)$. By \cref{sameexceptv}, any rank vector in $EC_b^i$ is generated by moving $v$ in $\sigma_0$ to some rank $(x_0,y_0)\leq (x_v,y_v)< (\x_b,\y_b)$. 

The probability of $\x_v=x_0$ conditioning on $\sigma\in EC_b^i$ is the following, due to independence of the bucket value of $v$, and the fact that $\y_v$ needs to be a number $\geq\y_0$:
    $$\pr(\x_v=\x_0\;|\; \sigma\in EC_b^i)=\frac{\pr(\y_v\geq\y_0\;|\; \x_v=\x_0)\;\pr(\x_v=\x_0)}{\pr((\x_0,y_0)\leq(\x_v,y_v)< (\x_b,y_b))}$$
    The probability of $\x_v=x$ for each $\x_0<x<\x_b$ conditioning on $\sigma\in EC_s^i$ is the following, due to independence of the bucket value of $v:$
     $$\pr(\x_v=\x\;|\; \sigma\in EC_s^i)=\frac{\pr(\x_v=\x)}{\pr((\x_0,y_0)\leq(\x_v,y_v)< (\x_b,y_b))}$$
    The probability of $\x_v=\x_b$ conditioning on $\sigma\in EC_b^i$ is the following, due to independence of the bucket value of $v$, and the fact that $\y_v$ needs to be a number $<\y_b$:
    $$\pr(\x_v=\x_b\;|\; \sigma\in EC_b^i)=\frac{\pr(\y_v<\y_b\;|\; \x_v=\x_b)\;\pr(\x_v=\x_b)}{\pr((\x_0,y_0)\leq(\x_v,y_v)< (\x_b,y_b))}$$
    
    Similar to the proof for $EC_s^i$ we know that there exists $p_0,p_1\in[0,\frac{1}{k}]$ s.t. the conditional probabilities can be simplified to the following:
\begin{align*}
    &\pr(\x_v=\x_0\;|\;\sigma\in EC_b^i)=\frac{p_0}{p_0+\frac{\x_b-1-\x_0}{\numpieces}+p_1},\\
     &\pr(\x_v=\x\;\;\;|\;\sigma\in EC_b^i)=\frac{\frac{1}{\numpieces}}{p_0+\frac{\x_b-1-\x_0}
     {\numpieces}+p_1},\quad\quad\forall x\text{ s.t. } x_0< x< \x_b,\\
     &\pr(\x_v=\x_b\;\;|\;\sigma\in EC_b^i)=\frac{p_1}{p_0+\frac{\x_b-1-\x_0}{\numpieces}+p_1}.
\end{align*}
For simplicity, we use the notation $\pr_{\x_v}(x)$ to represent $\pr(\x_v = x \;|\; \sigma \in EC_b^i)$. With this distribution of $\x_v$, we obtain a lower bound for $g(u) + g(u^*)$ conditioned on $\sigma\in EC_b^i$ as follows:
\begin{align*}
    &\ev[g(u)+g(u^*)\;|\;\sigma\in EC_b^i],\\
    \geq&\pr_{\x_v}(\x_0)\Hclassbackup(\x_u,\x_0,\x_b)
    +\sum_{\x=\x_0+1}^{\x_b-1}\pr_{\x_v}(x)\Hclassbackup(\x_u,\x,\x_b)
    +\pr_{\x_v}(\x_b)\Hclassbackup(\x_u,\x_b,\x_b),\\
    =&\mathbf{a_1}\cdot \Hclassbackup(\x_u,\x_0,\x_b)
    +\mathbf{a_2}\cdot \sum_{\x=\x_0+1}^{\x_b-1}\Hclassbackup(\x_u,x,\x_b)
    +\mathbf{a_3}\cdot \Hclassbackup(\x_u,\x_b,\x_b),
\end{align*}
where $H_b$ is defined in \cref{def:Hfunctions}, and
$\mathbf{a_1} = \frac{p_0}{p_0 + \frac{\x_b - 1 - \x_0}{\numpieces} + p_1},\;
\mathbf{a_2} = \frac{\frac{1}{\numpieces}}{p_0 + \frac{\x_b - 1 - \x_0}{\numpieces} + p_1}\;
\mathbf{a_3} = \frac{p_1}{p_0 + \frac{\x_b - 1 - \x_0}{\numpieces} + p_1}.$
When $\x_u, \x_0, \x_b$ are fixed constants, which is true as it is the profile of $u$ with respect to $\sigma_0$, the three terms multiplied by $\mathbf{a_1}, \mathbf{a_2}, \mathbf{a_3}$ are constants. Thus, the lower bound is an affine combination of these three constants, and the minimum is achieved in one of the following four cases:
\begin{align*}
    \frac{1}{\x_b-x_0+1}\sum_{\x=\x_0}^{\x_b}\sum_{\x_{u^*=1}}^{\numpieces}\frac{1}{\numpieces}\hclassbackup(\x_u,\x,\x_b,\x_{u^*})&&\text{(when $\mathbf{a_1},\mathbf{a_3}$ are maximized)},\\
    \frac{1}{\x_b-x_0}\sum_{\x=\x_0}^{\x_b-1}\sum_{\x_{u^*=1}}^{\numpieces}\frac{1}{\numpieces}\hclassbackup(\x_u,\x,\x_b,\x_{u^*})&&\text{(when $\mathbf{a_1}$ maximized ,$\mathbf{a_3}$ minimized)},\\
    \frac{1}{\x_b-x_0}\sum_{\x=\x_0+1}^{\x_b}\sum_{\x_{u^*=1}}^{\numpieces}\frac{1}{\numpieces}\hclassbackup(\x_u,\x,\x_b,\x_{u^*})&&\text{(when $\mathbf{a_1}$ minimized ,$\mathbf{a_3}$ maximized)},\\
    \frac{1}{\x_b-x_0-1}\sum_{\x=\x_0+1}^{\x_b-1}\sum_{\x_{u^*=1}}^{\numpieces}\frac{1}{\numpieces}\hclassbackup(\x_u,\x,\x_b,\x_{u^*})&&\text{(when $\mathbf{a_1}, \mathbf{a_3}$ are minimized)}.
\end{align*}
Now we can derive a lower bound for the expected gain conditioning on $\sigma\in EC_s^i$. We assume $\x_u$ is fixed and $\x_0,\x_b$ are chosen to be adversarial. The lower bound is the following, accounting for all the four different cases above:
$$
    \min_{1\leq c\leq d\leq k}\left\{\frac{1}{d-c+1}\sum_{\x=c}^{d}\sum_{\x_{u^*=1}}^{\numpieces}\frac{1}{\numpieces}\hclassbackup(\x_u,\x,d+1,\x_{u^*})\right\}.
$$
$d+1$ instead of $d$ appears here, because in cases where $\mathbf{a_3}$ is minimized, the outer sum stops at $\x_b - 1$, while the inner parameter is $\x_b$. By the way we defined $\hclassbackup$ function as in \cref{lem:hclassbackup}, increasing $\x_b$ only decreases the function value. Thus, we can fit all four cases into the form where the outer sum sums up to $d$, and the inner parameter takes $d+1$.\footnote{\label{dplusone}This adversarial form introduces problem when $d=k$, as we use $k+1$ as a parameter in $\hclassbackup$. In our final $LP$, we will take the price function $\price$ and the function $\hclassbackup$ to be defined over domain $\{1,...,k+1\}$. As long as the monotonicity properties are enforced, extending the domain by one is not a problem. One can also avoid dealing with this by setting up an edge case lower bound so specifically insist $\classbackup$ takes in $d$ as parameter when $d=k$. We are not doing it here for aesthetic purposes and the fact that such a change does not affect the final LP solution. }

Similar to the $EC_s^i$ case, this is technically a lower bound stronger than needed, as we are choosing $\x_0$ and $\x_b$ to be adversarial. We are doing this because in the next step, we need to consider adversarial $\x_0,\x_b$ values but we are safe to assume a fixed $\x_u$ value.
\end{proof}
Any $\sigma\in \classbackup$ belongs to some equivalence class of type $EC_b^i$, a lower bound for expected gain conditioned on $\sigma\in\classbackup$ and $\x_u=i$ can be obtained by setting $\x_u=i$ as the following:
\begin{claim}\label{claim7.3}
     Let $\sigma$ be the bucketed random rank vector over $V\backslash\{u^*\}$, and let $i$ be a bucket value in $\{1,...,k\}$. We have that the expected value of $g(u)+g(u^*)$ conditioning on $\sigma\in \classbackup$ and $\x_u=i$, i.e. the value
     $$\ev[g(u)+g(u^*)\;|\; \sigma\in \classbackup\;\wedge\;\x_u=i],$$
     is lower bounded by:
     \begin{equation}\label{expectedhb}
     \min_{1\leq c\leq d\leq k}\left\{\frac{1}{d-c+1}\sum_{\x=c}^{d}\sum_{\x_{u^*=1}}^{\numpieces}\frac{1}{\numpieces}\hclassbackup(i,\x,d+1,\x_{u^*})\right\}.
     \end{equation}
 \end{claim}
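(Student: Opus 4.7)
The plan is to mirror the exact structure of the proof of \cref{claim7.1} (and of \cref{claim7.2}), leveraging the previous per-equivalence-class bound just established for $EC_b^i$ together with the law of total expectation. The conceptual content sits entirely in the lemma for an individual equivalence class; this claim is the routine aggregation step.

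First, I would invoke \cref{def:EC} and \cref{sameexceptv} to partition the event $\{\sigma \in \classbackup \wedge \x_u = i\}$ into finitely many equivalence classes $EC_b^{i_1}, EC_b^{i_2}, \dots, EC_b^{i_l}$. The key observation is that all rank vectors within a single equivalence class share the same generating vector $\sigma_0$ modulo the placement of $v$, and in particular they share the same bucket value $\x_u$ for $u$ (since only $v$'s rank varies within a class, by \cref{sameexceptv}). Hence every class appearing in this partition has $\x_u = i$ for its generating rank vector.

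Next, I would apply the preceding per-class bound to each $EC_b^{i_j}$. That bound is stated in terms of the fixed $\x_u$ value of the generating rank vector, with $\x_0$ and $\x_b$ ranging adversarially; substituting $\x_u = i$ yields
$$
\min_{1 \leq c \leq d \leq k} \left\{ \frac{1}{d-c+1} \sum_{\x=c}^{d} \sum_{\x_{u^*}=1}^{\numpieces} \frac{1}{\numpieces}\, \hclassbackup(i, \x, d+1, \x_{u^*}) \right\}
$$
as a lower bound for $\ev[g(u) + g(u^*) \mid \sigma \in EC_b^{i_j}]$ for every $j$. Call this expression $\beta_i$.

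Finally, the law of total expectation gives
$$
\ev[g(u) + g(u^*) \mid \sigma \in \classbackup \wedge \x_u = i] = \sum_{j=1}^{l} \ev[g(u) + g(u^*) \mid \sigma \in EC_b^{i_j}] \cdot \pr(\sigma \in EC_b^{i_j} \mid \sigma \in \classbackup \wedge \x_u = i),
$$
and since each summand is at least $\beta_i$ weighted by a probability that sums to $1$, the whole expression is at least $\beta_i$, proving the claim. I do not anticipate any real obstacle: the structural heavy lifting (monotonicity of $\x_v$ within an equivalence class, the minimization over the two endpoints $c,d$ capturing the adversarial boundary buckets, and the $d+1$ trick flagged in footnote~\ref{dplusone}) has already been done in the preceding per-class lemma. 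The only thing to be careful about is verifying that the partition into equivalence classes indeed respects the conditioning event $\x_u = i$, which follows immediately from the fact that $u$'s rank is untouched by the rank shift of $v$ that defines $\sim$.
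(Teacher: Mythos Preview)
Your proposal is correct and follows essentially the same approach as the paper: partition the conditioning event $\{\sigma \in \classbackup \wedge \x_u = i\}$ into equivalence classes via \cref{sameexceptv}, apply the per-class lower bound (with $\x_u = i$ substituted and $\x_0, \x_b$ adversarial) to each $EC_b^{i_j}$, and conclude via the law of total expectation. Your extra remark that the partition respects the condition $\x_u = i$ because only $v$'s rank varies within a class is a helpful sanity check that the paper leaves implicit.
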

 \begin{proof}
 The proof is similar to the case $\classnomatch$. See detailed proof in \nameref{Appendix1}.
 \end{proof}
\subsection{Combining the Lower Bounds}
We will combine the three lower bounds defined in this section to obtain a lower bound for the approximation ratio for \ranking{} on general graphs.
\begin{claim}\label{alphaislowerbound}
    Suppose for each $i\in \{1,...,k\}$, $\alpha_i\in[0,1]$ satisfies the following three inequalities:
\begin{align*}
    \alpha_i\leq&\sum_{\x_{_{u^*}}=1}^{\numpieces}\frac{1}{k}\hclassnomatch(i,\x_{u^*}),&\cref{expextedhbot}\\
    \alpha_i\leq& \min_{c\in\{1,...,\numpieces\}}\left\{\frac{1}{\numpieces-c+1}\sum_{\x=c}^{\numpieces}\sum_{\x_{u^*=1}}^{\numpieces}\frac{1}{\numpieces}\hclassnobackup(i,x,\x_{u^*})\right\},&\cref{expectedhs}\\
    \alpha_i\leq& \min_{1\leq c\leq d\leq k}\left\{\frac{1}{d-c+1}\sum_{\x=c}^{d}\sum_{\x_{u^*=1}}^{\numpieces}\frac{1}{\numpieces}\hclassbackup(i,x,d+1,\x_{u^*})\right\}.&\cref{expectedhb}
\end{align*}
Then $\alpha_i$ is a lower bound for the approximation ratio for $\ranking{}$ conditioning on $\x_u=i$. Let $\alpha$ be the expected value of $\alpha_i$. That is
\begin{equation}\label{expectedalpha}
    \alpha=\sum_{i=1}^{\numpieces}\frac{1}{k}\alpha_i.
\end{equation}
Then $\alpha$ is a lower bound for the approximation ratio for \ranking{} on general graphs.
\end{claim}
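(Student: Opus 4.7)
The plan is to assemble the three case bounds via two successive applications of the law of total expectation, and then chain through the bridging results \cref{assumeuB}, \cref{guustarisalowerbound}, and \cref{wecanassume2n}. No new structural facts about \ranking{} are needed beyond what has already been established.

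First I would observe that \cref{def:Cs} partitions the sample space of rank vectors $\sigma_{-u^*}$ into the three disjoint and exhaustive classes $\classnomatch, \classnobackup, \classbackup$, distinguished by whether $u$ is unmatched, matched without a backup, or matched with a backup in $\M(\sigma_{-u^*})$. Fixing any bucket value $i \in \{1,\ldots,\numpieces\}$ for $\x_u$, the law of total expectation decomposes $\ev[g(u)+g(u^*) \mid \x_u = i]$ as a convex combination of the three conditional expectations $\ev[g(u)+g(u^*) \mid \x_u = i,\; \sigma_{-u^*} \in C_\tau]$ for $\tau \in \{\bot, s, b\}$. By \cref{claim7.1}, \cref{claim7.2}, and \cref{claim7.3} these three conditional expectations are each lower-bounded by precisely the right-hand sides appearing in the three displayed hypotheses of the claim. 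Since $\alpha_i$ is assumed to be at most all three of these lower bounds, the convex combination is also at least $\alpha_i$, giving $\ev[g(u)+g(u^*) \mid \x_u = i] \geq \alpha_i$, which establishes the first part of the claim.

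Next I would uncondition $\x_u$. Since $\x_u$ takes each value in $\{1,\ldots,\numpieces\}$ with probability $1/\numpieces$ by the bucketed permutation model of \cref{def:BucketedRPG}, averaging the previous inequality over $i$ with weight $1/\numpieces$ yields $\ev[g(u)+g(u^*)] \geq \alpha$. Recall however that throughout \cref{section6} and \cref{section8} we operated under the blanket assumption $\chi(u) = \B$, so what has really been established is $\alpha \leq \ev[g(u)+g(u^*) \mid \chi(u) = \B]$ for every edge $(u, u^*) \in M^*$. Applying \cref{assumeuB} immediately removes this conditioning and gives $\alpha \leq \ev[g(u)+g(u^*)]$ unconditionally for every such edge. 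Then \cref{guustarisalowerbound} converts this per-edge bound into the statement that $\alpha$ lower bounds the approximation ratio of \ranking{} on any graph admitting a perfect matching, and \cref{wecanassume2n} extends this guarantee to arbitrary general graphs.

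The main obstacle I anticipate is bookkeeping rather than mathematics: one must carefully track which randomness has already been conditioned on at each stage (the bucket $\x_u$, the random coloring $\chi$, and the class of $\sigma_{-u^*}$) so that each invocation of total expectation is valid, and correctly chain the final per-edge bound through \cref{assumeuB}, \cref{guustarisalowerbound}, and \cref{wecanassume2n} to land on the intended approximation-ratio statement on general graphs.
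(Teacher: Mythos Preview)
Your proposal is correct and follows essentially the same approach as the paper: combine \cref{claim7.1}, \cref{claim7.2}, \cref{claim7.3} via the law of total expectation over the three classes to get $\alpha_i \le \ev[g(u)+g(u^*)\mid \x_u=i]$, then average over the uniform bucket value $\x_u$ and invoke \cref{guustarisalowerbound}. Your additional explicit invocations of \cref{assumeuB} and \cref{wecanassume2n} are in fact a bit more careful than the paper's own proof, which silently relies on the standing assumption $\chi(u)=\B$ and on the perfect-matching reduction, so there is nothing to correct here.
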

\begin{proof}
By \cref{claim7.1}, \cref{claim7.2}, and \cref{claim7.3}, as $\alpha_i$ satisfies \cref{expextedhbot}, \cref{expectedhs}, \cref{expectedhb}, it holds that
\begin{align*}
    \alpha_i\;\leq\; &\ev[g(u)+g(u^*)\;|\; \sigma\in \classnomatch\;\wedge\;\x_u=i]\cdot\pr(\sigma\in \classnomatch\;|\;\x_u=i)\\
    &+\ev[g(u)+g(u^*)\;|\; \sigma\in \classnobackup\;\wedge\;\x_u=i]\cdot\pr(\sigma\in \classnobackup\;|\;\x_u=i)\\
    &+\ev[g(u)+g(u^*)\;|\; \sigma\in \classbackup\;\wedge\;\x_u=i]\cdot\pr(\sigma\in \classbackup\;|\;\x_u=i),\\
    \leq \;&\ev[g(u)+g(u^*)\;|\; \x_u=i] \quad\quad\quad\quad\text{(By law of total Expectation)}
\end{align*}
 Further, because $\x_u$ is a bucket value chosen uniformly from \{1,...,k\}, $\alpha$ satisfies
\begin{align*}
    \alpha=\sum_{i=1}^\numpieces \frac{1}{k}\alpha_i\leq\sum_{i=1}^\numpieces \ev[g(u)+g(u^*)\;|\; \x_u=i]\cdot \pr(\x_u=i)=\ev[g(u)+g(u^*)].
\end{align*}
By \cref{guustarisalowerbound}, such $\alpha$ is a valid lower bound for \ranking{} on general graphs.
\end{proof}

\section{The LP}\label{LP}
In this section, we present the full LP that integrates all the bounds we have set. Each feasible solution has its output objective $\alpha$ serving as a valid lower bound. The LP searches through all valid price functions $\price:\{1,\dots,\numpieces+1\}^2 \to [0,1]$ and calculates the corresponding $\alpha$ value as output. We used $k+1$ instead of $k$\footnotemark
\footnotetext{See \cref{dplusone} for reference.} to account for the $d+1$ appearing in \cref{expectedhb}. The following $\forall x$, without specifically mentioning, refers to $\forall x$ s.t. $1\leq x\leq k+1$.
\begin{align*}
    &&&\max_{f:\{1,\dots,\numpieces+1\}^2\to [0,1]} \alpha\\
    \text{s.t.} \quad 
    &\forall  i,\forall j\leq \numpieces  & f(i,j) &\leq f(i,j+1),&\cref{pricefunction}\\
    &\forall i\leq\numpieces,\forall j  & f(i,j) &\geq f(i+1,j),\\
    &\forall u,u^* & \hclassnomatch(\x_u,\x_{u^*})&=\price(\x_u,\x_{u^*}).&\cref{defhclassnomatch}\\
    &\forall u,v,u^*  & \hclassnobackup(\x_u,\x_v,\x_{u^*}) &= 
    \begin{cases} 
        \price(\x_u,\x_{u^*}), & \x_u < \x_v, \x_{u^*} \leq \x_u, \\
        1 - \price(\x_u,\x_v) + \price(\x_u,\x_{u^*}), & \x_u < \x_{u^*} < \x_v, \\
        1 - \price(\x_u,\x_v), & \x_u < \x_v, \x_v \leq \x_{u^*}, \\
        \min\{\price(\x_v,\x_{u^*}), & \x_v \leq \x_u, \x_{u^*} < \x_v, \\
        \quad \quad1 - \price(\x_u,\x_v) +\price(\x_u,\x_{u^*})\}, \\
        \min\{\price(\x_v,\x_{u^*}), 1 - \price(\x_u,\x_v)\},  & \x_v \leq \x_{u^*} \leq \x_u,\\
        1 - \price(\x_u,\x_v), & \x_v \leq \x_u , \x_u < \x_{u^*}.
    \end{cases}&\cref{defhclassnobackup}\\
     &\forall \x_u, \x_v,\x_b,\x_{u^*}  & \hclassbackup(\x_u,\x_v,\x_b,\x_{u^*}) &= 
    \begin{cases} 
        1 - \price(\x_u,\x_b) + \price(\x_u,\x_{u^*}), & \x_u < \x_v, \x_{u^*} \leq \x_u, \\
        1 - \price(\x_u,\x_v) + \price(\x_u,\x_{u^*}), & \x_u < \x_{u^*} < \x_v, \\
        1 - \price(\x_u,\x_v), & \x_u < \x_v, \x_v \leq \x_{u^*}, \\
        \min\{1 - \price(\x_u,\x_b) + \price(\x_v,\x_{u^*}), \\
        \quad\quad1 - \price(\x_u,\x_v) + \price(\x_u,\x_{u^*})\}, & \x_v \leq \x_u, \x_{u^*} < \x_v, \\
        \min\{1 - \price(\x_u,\x_b) + \price(\x_v,\x_{u^*}), \\
        \quad\quad1 - \price(\x_u,\x_v)\}, & \x_v \leq \x_{u^*} \leq \x_u, \\
        1 - \price(\x_u,\x_v), & \x_v \leq \x_u , \x_u < \x_{u^*}.
    \end{cases}&\cref{defhclassbackup}\\
    &\forall i,  & \alpha_i &\leq \sum_{\x_{u^*}=1}^{\numpieces} \frac{1}{\numpieces} \hclassnomatch(i,\x_{u^*}),&\cref{expextedhbot}\\
    &\forall i,\; \forall c \leq \numpieces  & \alpha_i &\leq  \frac{1}{\numpieces-c+1} \sum_{\x_v=c}^\numpieces\sum_{\x_{u^*}=1}^{\numpieces} \frac{1}{\numpieces} \hclassnobackup(i,\x_v,\x_{u^*}),&\cref{expectedhs}\\
    &\forall i,\; \forall c\leq d\leq\numpieces, & \alpha_i &\leq  \frac{1}{d-c+1} \sum_{\x_v=c}^{d}\sum_{\x_{u^*}=1}^{\numpieces} \frac{1}{\numpieces} \hclassbackup(i,\x_v,d+1,\x_{u^*}),&\cref{expectedhb}\\
    && \alpha &= \sum_{i=1}^{\numpieces} \frac{1}{\numpieces}\alpha_i,&\cref{expectedalpha}\\
\end{align*}
Each constraint in the LP corresponds to:

\cref{pricefunction}: This constraint corresponds to the monotonicity constraints for the price function defined in  \cref{pricefun}.

\cref{defhclassnomatch}: This constraint corresponds to the function $\hclassnomatch$ that provides a lower bound for $g(u) + g(u^*)$ conditioning on a rank vector $\sigma_{-u^*}$ with profile for $u=(\x_u,\bot,\bot)$ in $\classnomatch$ being realized. This bound is defined and proved in \cref{sectionhbot}.

\cref{defhclassnobackup}:This constraint corresponds to the function $\hclassnobackup$ that provides a lower bound for $g(u) + g(u^*)$ conditioning on a rank vector $\sigma_{-u^*}$ with profile for $u=(\x_u,\x_v,\bot)$ in $\classnobackup$ being realized. The terms of the form $\hclassnobackup=\min\{a_1,a_2\}$ that appeared in the constraint can be achieved by setting two bounds $\hclassnobackup\leq a_1\;\wedge \hclassnobackup\leq a_1$. This bound is defined and proved in \cref{sectionhs}.

\cref{defhclassnobackup}:This constraint corresponds to the function $\hclassbackup$ that provides a lower bound for $g(u) + g(u^*)$ conditioning on a rank vector $\sigma_{-u^*}$ with profile for $u=(\x_u,\x_v,\x_b)$ in $\classbackup$ being realized. The terms of the form $\hclassbackup=\min\{a_1,a_2\}$ appeared in the constraint can be achieved by setting two bounds $\hclassbackup\leq a_1\;\wedge \hclassbackup\leq a_1$. This bound is defined and proved in \cref{sectionhb}.

\cref{expextedhbot}: This constraint corresponds to the lower bound for expected gain $g(u)+g(u^*)$ conditioning on $\sigma_{-u^*}\in \classnomatch$ and $\x_u=i$. This bound is defined and proved in \cref{sectionexpecthbot}.

\cref{expectedhs}: This constraint corresponds to the lower bound for expected gain $g(u)+g(u^*)$ conditioning on $\sigma_{-u^*}\in \classnobackup$ and $\x_u=i$. This bound is defined and proved in \cref{sectionexpecths}.

\cref{expectedhb}: This constraint corresponds to the lower bound for expected gain $g(u)+g(u^*)$ conditioning on $\sigma_{-u^*}\in \classbackup$ and $\x_u=i$. This bound is defined and proved in \cref{sectionexpecthb}.

\cref{expectedalpha}: This constraint sets $\alpha$ as the average of $\alpha_i$, which by \cref{alphaislowerbound}, combined with all other constraints, forces $\alpha$ to be a lower bound for the approximation ratio for \ranking{} on general graphs.

\section{Numerical Solution}\label{Numericalapproximation}

In this section, we present the results of running the LP for $k \leq 100$. All code is written in Python (version 3.10.12) and is available upon request. To solve the LP instances, we used the Gurobi optimization package (version 11.0.0). The experiments were conducted on a computing cluster featuring 64 cores, each running at 2.30 GHz on Intel(R) Xeon(R) processors, with 756 GiB of main memory. The operating system was Ubuntu 22.04.3 LTS. Solving the LP for $k = 100$ takes approximately 15 hours. The results of the LP implementation are shown in \Cref{tab:LP-solution}.  Also, the values of function $f$ is shown in \Cref{tab:f_values} for $k=10$.

\vspace{2em}
\begin{table}[h]
  \centering
  \begin{tabular}{|c|c||c|c||c|c||c|c|}
    \hline
    $k$ & LP Solution & $k$ & LP Solution & $k$ & LP Solution & $k$ & LP Solution\\ 
    \hline
    1  & 0.5  &  11 & 0.53202  & 25 & 0.54098 & 75 & 0.54624\\ \hline
    2  & 0.5  &  12 & 0.53334  & 30 & 0.54226 & 80 & 0.54641\\ \hline
    3  & 0.50347  &  13 & 0.53443  & 35 & 0.54318 & 85 & 0.54656\\ \hline
    4  & 0.51052  &  14 & 0.53530  & 40 & 0.54389 & 90 & 0.54669\\ \hline
    5  & 0.51625  &  15 & 0.53608  & 45 & 0.54444 & 95 & 0.54681\\ \hline
    6  & 0.52068  &  16 & 0.53687  & 50 & 0.54489 & 100 & 0.54690 \\ \hline
    7  & 0.52422  &  17 & 0.53755  & 55 & 0.54525 & - & -\\ \hline
    8  & 0.52674  &  18 & 0.53812  & 60 & 0.54556 & - & -\\ \hline
    9  & 0.52882  &  19 & 0.53863  & 65 & 0.54582 & - & -\\ \hline
    10  & 0.53046  &  20 & 0.53910  & 70 & 0.54604 & - & -\\ \hline
  \end{tabular}
  \caption{The LP solution for various values of $k$.}
  \label{tab:LP-solution}
\end{table}

\begin{table}[h]
    \centering
    \renewcommand{\arraystretch}{1.2} 
    \setlength{\tabcolsep}{6pt} 
    \small
    \begin{tabular}{|c||c|c|c|c|c|c|c|c|c|c|c|}
        \hline
        $i \backslash j$ & 1 & 2 & 3 & 4 & 5 & 6 & 7 & 8 & 9 & 10 & 11 \\ \hline\hline
        1 & 0.423 & 0.462 & 0.500 & 0.553 & 0.585 & 0.629 & 0.632 & 0.797 & 0.843 & 0.843 & 1.000 \\ \hline
        2 & 0.423 & 0.462 & 0.500 & 0.538 & 0.585 & 0.629 & 0.632 & 0.632 & 0.684 & 0.684 & 1.000 \\ \hline
        3 & 0.423 & 0.462 & 0.500 & 0.538 & 0.585 & 0.629 & 0.632 & 0.632 & 0.632 & 0.632 & 1.000 \\ \hline
        4 & 0.423 & 0.462 & 0.500 & 0.538 & 0.585 & 0.599 & 0.599 & 0.599 & 0.599 & 0.599 & 1.000 \\ \hline
        5 & 0.423 & 0.462 & 0.500 & 0.538 & 0.570 & 0.570 & 0.570 & 0.570 & 0.570 & 0.570 & 1.000 \\ \hline
        6 & 0.423 & 0.462 & 0.500 & 0.538 & 0.544 & 0.544 & 0.544 & 0.544 & 0.544 & 0.544 & 1.000 \\ \hline
        7 & 0.423 & 0.462 & 0.500 & 0.523 & 0.523 & 0.523 & 0.523 & 0.523 & 0.523 & 0.523 & 1.000 \\ \hline
        8 & 0.423 & 0.462 & 0.500 & 0.501 & 0.512 & 0.512 & 0.512 & 0.512 & 0.512 & 0.512 & 1.000 \\ \hline
        9 & 0.423 & 0.462 & 0.500 & 0.501 & 0.505 & 0.506 & 0.510 & 0.510 & 0.510 & 0.510 & 1.000 \\ \hline
        10 & 0.423 & 0.462 & 0.500 & 0.501 & 0.505 & 0.505 & 0.505 & 0.505 & 0.505 & 0.505 & 1.000 \\ \hline
        11 & 0.000 & 0.000 & 0.000 & 0.000 & 0.000 & 0.000 & 0.000 & 0.000 & 0.000 & 0.000 & 0.000 \\ \hline
    \end{tabular}
    \caption{Values of $f(i,j)$ for $k=10$.}
    \label{tab:f_values}
\end{table}

\section{Main Theorem}
\begin{theorem}
    The approximation ratio for \ranking{} on general graphs is at least $0.5469$.
\end{theorem}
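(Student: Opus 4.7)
The plan is to assemble the theorem as a direct consequence of the machinery developed in Sections 5 through 8. Since all structural work has been done in the paper, the proof essentially amounts to exhibiting a feasible solution of the LP in Section 8 with objective value at least $0.5469$, and then invoking the chain of claims that certifies this objective as a lower bound on the approximation ratio.

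First, I would reduce to the perfect-matching case via \cref{wecanassume2n}, so it suffices to lower-bound the approximation ratio on graphs $G$ admitting a perfect matching $\opt$. Next, I would fix an arbitrary edge $(u, u^*) \in \opt$, and apply the randomized two-partitioning of \cref{def:2-coloring} together with \cref{assumeuB} to reduce the problem to lower-bounding $\E_\sigma[g(u) + g(u^*) \mid \chi(u) = \B]$, where $u$ is a buyer and $u^*$ is an item. Using the bucketed random permutation model of \cref{def:BucketedRPG} together with \cref{bucketedpermutation}, this expectation is arbitrarily close to the corresponding expectation under the uniform permutation as $k \to \infty$, so it suffices to prove the bucketed version for large enough $k$.

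Now I would invoke \cref{alphaislowerbound}: any feasible tuple $(\price, \{\alpha_i\}, \alpha)$ of the LP of \cref{LP} yields $\alpha$ as a valid lower bound for the approximation ratio of \ranking{} on general graphs. Concretely, I would take the LP instance with $k = 100$ and solve it numerically (as reported in \cref{tab:LP-solution}), producing a feasible price function $\price : \{1,\dots,101\}^2 \to [0,1]$ with objective value $\alpha \geq 0.54690$. Since all constraints of the LP — the monotonicity of $\price$ (\cref{pricefunction}), the definitions of $\hclassnomatch, \hclassnobackup, \hclassbackup$ (Equations~\ref{defhclassnomatch}, \ref{defhclassnobackup}, \ref{defhclassbackup}), and the three families of expectation lower bounds (\cref{expextedhbot}, \cref{expectedhs}, \cref{expectedhb}) — are exactly the conditions aggregated in \cref{alphaislowerbound}, feasibility of the numerical solution immediately gives the desired lower bound of $0.5469$.

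The only nontrivial step is the numerical verification itself, but this is purely computational: once a candidate $\price$ is produced by Gurobi, feasibility is a finite set of linear inequalities that can be checked in polynomial time in $k$. Hence the main obstacle is not conceptual but computational, namely solving the LP at $k = 100$ with the scale described in \cref{Numericalapproximation} (approximately $15$ hours of runtime). Putting all the pieces together yields the theorem.
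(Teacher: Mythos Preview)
Your proposal is correct and follows essentially the same approach as the paper: invoke \cref{alphaislowerbound} on a feasible solution of the LP in \cref{LP} at $k=100$ with objective $\alpha \geq 0.5469$. One minor clarification: by \cref{bucketedpermutation} the bucketed model induces the uniform permutation distribution \emph{exactly} for every $k$, not just asymptotically, so no $k\to\infty$ limiting argument is needed---the role of large $k$ is solely to refine the price function $\price$, not to approximate the permutation law.
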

\begin{proof}
    By \cref{alphaislowerbound}, any feasible solution to the LP presented in \cref{LP}, with the parameter $k$ set to an arbitrary value, serves as a lower bound for the approximation ratio for \ranking{} on general graphs. We ran the LP with $k=100$ and obtained a feasible solution $\alpha=0.5469$. This implies that the approximation ratio for \ranking{} on general graphs is at least $0.5469$.
\end{proof}

\section*{Deferred Proofs}

\begin{proof}[\textbf{Proof for }\cref{same-match-before-v}]
For $\M^t(\sigma)= \M^t(\sigma^+)$: Assume towards a contradiction that $\M^t(\sigma)\neq \M^t(\sigma^+)$. Let $t_0<t$ be the earliest time the two sets become different. Then $\M^{t_0}(\sigma)= \M^{t_0}(\sigma^+)$ and $A^{t_0}(\sigma)= A^{t_0}(\sigma^+)$ but not in the successive time. At $t_0$, the vertex $u_0$ being processed picked some vertex in one matching, the same vertex is also available in the other permutation, but $u_0$ will not pick the same vertex as the two matchings becomes different at the next time stamp. Let $(u_0,u_1)\in \M^t(\sigma)$ and $(u_0,u_2)\in \M^t(\sigma^+)$ be the different edges picked by $u_0$. In this case, $u_1$ is preferred over $u_2$ in $\sigma$ but not in $\sigma^+$. The only possible choices for $u_1$ and $u_2$ are $v$ and $v'$. This is because the only vertex that becomes relatively less preferred is $v$, and it only becomes less preferred compared to $v'$ but not any other nodes. However, by assumption, $v$ cannot be matched to $u_0$ at time $t_0<t$ as $v$ is still available in $A^t(\sigma)$, leading to a contradiction.

For $\M^t(\sigma)=\M^t(\sigma_{-v})$: Since $t\in A^t(\sigma)$, it is never the best choice for any vertex with rank $<t$, so whether $v$ is available or not does not affect any choice made by these vertices. This implies $\M^t=\M^t_{\sigma_{-v}}$.

    Now that we have $\M^t=\M^t_+=\M^t_{\sigma_{-v}}$, it is easy to see that $A^t=A^t_{+}=A^t_{-v}\cup\{v\}$ by the definition of available vertices at time $t$.
\end{proof}

\begin{proof}[\textbf{Proof for }\cref{descrete-monotonicity-backup}]\label{Appendix3}
Let $t$ be the time right before $v$ becomes unavailable in the \ranking{} process on $\sigma$. It's easy to see $t$ is the smaller of $\sigma(u)$ and $\sigma(v)$. We consider three different cases:

Case 1: $t=\sigma(u) < \sigma(v)$. Then $t=\sigma(u)$, which is the time when $u$ picks $v$. Since $v$ is still available at this time, \cref{same-match-before-v} implies $\M^t(\sigma) = \M^t(\sigma^+)$ and ${u, v} \subseteq A^t(\sigma) = A^t(\sigma^+)$. It suffices to show that $v$ is still the best available neighbor of $u$ at time $t$. By \cref{same-match-before-v}, since $A^t(\sigma) = A^t(\sigma_{-v})\cup\{v\}$ the second best available neighbor at this time is $b$, with $\sigma(b)>\sigma(v)+1$. As $\sigma^+(v)=\sigma(v)+1<\sigma(b)=\sigma^+(b)$, $v$ is still the best available neighbor after the increase.

Case 2:  $t=\sigma(v) = \sigma(u) - 1$. Here, $\sigma^+$ simply swaps the order of $u$ and $v$. By \cref{same-match-before-v}, $u$ and $v$ are both available in $A^t(\sigma^+)$. At time $t$, $\sigma^+(u) = t$ will actively pick its smallest ranked neighbor according to $\sigma^+$. Since $\sigma^+(v) = \sigma^+(u) + 1$, $v$ is necessarily the smallest ranked vertex among the available neighbors of $u$, and will be picked by $u$.

Case 3: $t=\sigma(v) < \sigma(u) - 1$. By \cref{same-match-before-v}, we have $A^t(\sigma) = A^t(\sigma^+)$. We will show that the vertex being processed at time $t$ by \ranking{} on $\sigma^+$, namely $v'$, will not pick $u$ or $v$ as its match. If this is the case, then $v$ will being the next vertex visited will be alive and has $u$ as an available neighbor. Since the relative ordering of vertices with rank $> \sigma(v)+1$ is not changed between $\sigma$ and $\sigma^+$, and all the available neighbor of $v$ at time $t+1$ fall in this range, $v$ will still pick $u$ as its match. 

Suppose $v'$ is not matched yet so that it might pick $u$ or $v$ to match. First, we claim that $v'$ is not connected to $v$. If $v$ and $v'$ were connected, then $v' \in A^t(\sigma^+) = A^t(\sigma)$ would be the best choice for $v$ in $\sigma$ when $v$ picks. This contradicts the assumption that $v$ picks $u$ as its match in $\M(\sigma)$.

On the other hand, assume $v'$ picks $u$. Then $u$ must be the best neighbor for $v'$ in $A^t(\sigma^+)$. By \cref{same-match-before-v}, the set of available neighbors in $A^t(\sigma_{-v})$ is also the same. Furthermore, the relative ordering of this set of available neighbors remains unchanged, so $v'$ would also pick $u$ in $\M(\sigma_{-v})$. Hence $v'$ is the backup of $u$ with respect to the ordering $\sigma$. But by assumption $\sigma(b)>\sigma(v)+1=\sigma(v')$. This is a contradiction.

Finally, we need to show that $u$ still have the backup $b$ in $\sigma^+$. This follows from the fact that $\sigma_{-v}$ and $(\sigma^+)_{-v}$ are the same ordering on the same induced graph, meaning \ranking{} will generate the same matching.
\end{proof}

\begin{proof}[\textbf{Proof for }\cref{bucketedpermutation}] \label{Appendix1}
    Assume $|V| = n$. Let $\sigma_0 : V \to \{1, \dots, n\}$ be any permutation. Let $\vecx, \vecy$ be the random vector of $\x_v, \y_v$ values generated by the bucketed random permutation. 
    We will prove by induction on the number of buckets $\numpieces$ that 
    $$ \pr(\sigma_{\vecx,\vecy} = \sigma_0) = \frac{1}{n!}. $$

    When the number of buckets $k = 1$, the result holds trivially because all nodes fall in the same bucket, and we permute all the vertices uniformly at random. 

    Assume it holds for $k - 1$ buckets. The probability of realizing $\sigma$ with $k$ buckets is
    \begin{align*}
        \pr(\sigma_{\vecx,\vecy}=\sigma)&=\sum_{i=0}^n\pr(\text{first $i$ nodes in $\sigma$ falls in the first bucket}\wedge \sigma_{\vecx,\vecy}=\sigma_0),\\
        &=\sum_{i=0}^n (\frac{1}{\numpieces})^i(\frac{\numpieces-1}{\numpieces})^{n-i}\frac{1}{i!}\frac{1}{(n-i)!},\\
        &=\frac{1}{n!}\sum_{i=0}^n \frac{n!}{i!(n-i)!} (\frac{1}{\numpieces})^i(\frac{\numpieces-1}{\numpieces})^{n-i},\\
        &=\frac{1}{n!}(\frac{1}{\numpieces}+\frac{\numpieces-1}{\numpieces})^n,\\
        &=\frac{1}{n!}.
    \end{align*}
    The second equality follows from the fact that
    $$\pr(\sigma_{\vecx,\vecy}=\sigma_0\;|\;\text{first $i$ nodes in $\sigma_0$ fall in the first bucket})=\frac{1}{i!}\frac{1}{(n-i)!},$$
    as the probability that the first $i$ vertices in bucket 1 realize the same ordering as the first $i$ vertices in $\sigma_0$ is $\frac{1}{i!}$. The probability that the remaining vertices falling into buckets $2, \dots, k$ realize the same ordering as the remaining vertices in $\sigma$ is $\frac{1}{(n-i)!}$ by the inductive hypothesis.

The second-to-last inequality follows from the binomial formula.
\end{proof}

\begin{proof}[\textbf{Proof for }\cref{sameexceptv}]
   $EC_\bot^i$: This is trivial.

   $EC_s^i$: Fix arbitrary $\sigma\in EC_s^i$, let $v$ be the match of $u$ in $\M(\sigma)$. There exists minimum $(\x_0,\y_0)$ s.t. $\sigma_v^{(\x_0,\y_0)}\in EC_s^i$. Denote $\sigma_0=\sigma_v^{(\x_0,\y_0)}$. By the definition of $\sim$, we have that the matches of $u$ in $\M(\sigma_0)$ and $\M(\sigma)$ both equal to $v$. By \cref{descrete-monotonicity-nobackup}, moving $v$ to any $(x,y)\geq (\x_0,\y_0)$ will not change the fact that $u$ is matched to $v$. So $v$ is also the match of $u$ in such $(\sigma_0)_v^{(x,y)}$ and hence we also have $(\sigma_0)_{-v}=((\sigma_0)^{(x,y)}_v)_{-v}$. This $\sigma_0$ is the generating vector as defined.

   $EC_b^i$: Fix arbitrary $\sigma\in EC_b^i$. Let $v$ be the match of $u$ in $\M(\sigma)$. There exists minimum $(\x_0,\y_0)$ s.t. $\sigma_v^{(\x_0,\y_0)}\in EC_s^i$. Denote $\sigma_0=\sigma_v^{(\x_0,\y_0)}$. By the definition of $\sim$, we have that the matches of $u$ in $\M(\sigma_0)$ and $\M(\sigma)$ both equal to $v$. Denote $b$ as the backup of $u$ in $\sigma_0$. By \cref{descrete-monotonicity-backup}, permuting $v$ to any $(x,y)$ s.t. $(\x_0,\y_0)\leq(x,y)<\sigma_0(b)$ will not change the fact that $u$ is matched to $v$ and the fact that $b$ is the backup of $u$. So $v$ is also the match of $u$ in such $\sigma_v^{(x,y)}$ and hence we also have $(\sigma_0)_{-v}=((\sigma_0)^{(x,y)}_v)_{-v}$. This $\sigma_0$ is the generating vector as defined.
\end{proof}

\begin{proof}[\textbf{Proof for }\cref{claim7.2}]
    By the definition of $\sim$ and \cref{sameexceptv}, there exists a list of equivalence classes $EC_s^{i_1}, EC_s^{i_2}, \ldots, EC_s^{i_l}$ partitioning the set of rank vectors $\{\sigma \;|\; \sigma \in \classnobackup \;\wedge\; \x_u = i\}$. Let 
$$\alpha_i = \min_{c\in\{1,...,\numpieces\}}\left\{\frac{1}{\numpieces-c+1}\sum_{\x=c}^{\numpieces}\sum_{\x_{u^*=1}}^{\numpieces}\frac{1}{\numpieces}\hclassnobackup(i,\x,\x_{u^*})\right\}.$$
We have:
$$
\alpha_i \leq \ev[g(u) + g(u^*) \;|\; \sigma \in EC_s^{i_j}] \quad \text{for all equivalence classes } EC_s^{i_j}.
$$
Thus:
\begin{align*}
    \alpha_i &\leq \sum_{j=1}^l \ev[g(u) + g(u^*) \;|\; \sigma \in EC_s^{i_j}] \cdot \pr(\sigma \in EC_s^{i_j} \;|\; \sigma \in \classnobackup \;\wedge\; \x_u = i), \\
    &= \ev[g(u) + g(u^*) \;|\; \sigma \in \classnobackup \;\wedge\; \x_u = i],
\end{align*}
by the law of total expectation.
\end{proof}

\begin{proof}[\textbf{Proof for }\cref{claim7.3}]
     By the definition of $\sim$ and \cref{sameexceptv}, there exists a list of equivalence classes $EC_b^{i_1}, EC_b^{i_2}, \ldots, EC_b^{i_l}$ partitioning the set of rank vectors $\{\sigma \;|\; \sigma \in \classbackup \;\wedge\; \x_u = i\}$. Let 
     $$\alpha_i = \min_{c\leq d}\left\{\frac{1}{d-c+1}\sum_{\x=c}^{d}\sum_{\x_{u^*=1}}^{\numpieces}\frac{1}{\numpieces}\hclassbackup(i,\x,d+1,\x_{u^*})\right\}.$$ 
     We have:
$$
\alpha_i \leq \ev[g(u) + g(u^*) \;|\; \sigma \in EC_b^{i_j}] \quad \text{for all equivalence classes } EC_b^{i_j}.
$$
Thus:
\begin{align*}
    \alpha_i &\leq \sum_{j=1}^d \ev[g(u) + g(u^*) \;|\; \sigma \in EC_b^{i_j}] \cdot \pr(\sigma \in EC_b^{i_j} \;|\; \sigma \in \classbackup \;\wedge\; \x_u = i), \\
    &= \ev[g(u) + g(u^*) \;|\; \sigma \in \classbackup \;\wedge\; \x_u = i],
\end{align*}
by the law of total expectation.
\end{proof}

\section*{Appendix}
\subsection{Comparison of Backup Vertices with Prior Work}\label{sec:appendix-a}
To avoid leaving \(u\) unmatched when \(u^*\) is added, previous work \cite{chan2018analyzing} considered an idea similar to backups. In their Lemma~3.6, they stated the following fact:
\begin{lemma}
    Let $\sigma$ be an ordering $\sigma:V\to \{1,...,|V|\}$ Suppose $u$ is matched and has two unmatched neighbors $w_1,w_2$ in $\M(\sigma)$, then $u$ is matched even if we demote the rank of $u$ to the bottom.
\end{lemma}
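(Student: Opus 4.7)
The plan is to set $\sigma' = \sigma_u^{|V|}$, the permutation obtained by demoting $u$ to the last rank, and argue by contradiction that $u$ remains matched in $\M(\sigma')$. I would invoke the alternating path lemma (\cref{lem:alt-path}) twice: once on the pair $(\sigma',\sigma_{-u})$ to extract a strong structural consequence from the assumed failure, and once on the pair $(\sigma,\sigma_{-u})$ to exploit the two unmatched neighbors hypothesis. Note that $(\sigma')_{-u} = \sigma_{-u}$ as orderings, so both applications share the same comparison matching $\M(\sigma_{-u})$.

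Suppose for contradiction that $u$ is unmatched in $\M(\sigma')$. The first application of \cref{lem:alt-path} to $\sigma'$ and $\sigma_{-u}$ gives an alternating path $(u_0,\dots,u_k)$ with $u_0=u$, together with the fact that the final available sets of $\sigma'$ and $\sigma_{-u}$ differ only at $u_k$. Under the paper's convention, $u$ is marked unavailable throughout the $\sigma_{-u}$ run, while $u$ is unmatched (hence available) in $\M(\sigma')$ by assumption. So $u$ itself is a point of differing availability, which forces $u_k = u = u_0$ and hence $k=0$. This degeneracy means the symmetric difference of edges is empty, so $\M(\sigma')=\M(\sigma_{-u})$; in particular, \emph{every} neighbor of $u$ is matched in $\M(\sigma_{-u})$, since otherwise $u$ would find an available neighbor when processed last in $\sigma'$ and be matched.

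The second application of \cref{lem:alt-path} to $\sigma$ and $\sigma_{-u}$ yields another alternating path starting at $u$, with the final available sets of $\sigma$ and $\sigma_{-u}$ differing at a single vertex $u_k'$. By hypothesis, $w_1$ and $w_2$ are both unmatched in $\M(\sigma)$, and at most one of them can coincide with $u_k'$. Hence at least one of them, say $w_1$, has the same availability in $\M(\sigma_{-u})$, meaning $w_1$ is unmatched in $\M(\sigma_{-u})$. Since $w_1$ is a neighbor of $u$, this contradicts the conclusion from the first step.

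The delicate point is the first invocation, specifically the accounting at $u$ itself: one has to use the paper's convention that $\sigma_{-u}$ treats $u$ as unavailable throughout, so that the ``unique differing vertex'' clause of \cref{lem:alt-path} immediately pins down $u_k=u$ when $u$ is unmatched in $\M(\sigma')$. Once this bookkeeping is in place, both applications of \cref{lem:alt-path} are used verbatim, and the conclusion follows from the pigeonhole observation that two unmatched neighbors cannot both be the unique availability-differing vertex identified by a single alternating path.
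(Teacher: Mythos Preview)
Your proof is correct. Note, however, that the paper does not actually supply its own proof of this lemma: the statement is quoted verbatim as Lemma~3.6 of \cite{chan2018analyzing} in \cref{sec:appendix-a}, solely to contrast the prior notion of ``unmatched neighbors'' with the paper's backup notion, and no argument is given. So there is nothing to compare against here.

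That said, your two-step use of \cref{lem:alt-path} is clean and self-contained within the paper's toolkit. The first invocation, on $(\sigma',\sigma_{-u})$, correctly exploits property~(3) together with the convention that $u$ is marked unavailable in $\sigma_{-u}$: if $u$ were unmatched in $\M(\sigma')$ it would lie in $A^{\mathrm{final}}(\sigma')\oplus A^{\mathrm{final}}(\sigma_{-u})=\{u_k\}$, forcing $u_k=u_0=u$ and hence $k=0$ by the strict monotonicity in property~(2). This gives $\M(\sigma')=\M(\sigma_{-u})$, and since $u$ is processed last in $\sigma'$, all neighbors of $u$ must then be matched in $\M(\sigma_{-u})$. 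The second invocation, on $(\sigma,\sigma_{-u})$, together with the pigeonhole observation that two distinct unmatched neighbors $w_1,w_2$ cannot both be the unique vertex in $A^{\mathrm{final}}(\sigma)\oplus A^{\mathrm{final}}(\sigma_{-u})$, yields the contradiction. The bookkeeping you flag as delicate is indeed the only subtle point, and you handle it correctly.
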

This lemma also guarantees, to some extent, that \(u\) will always be matched. With a bit more effort, one can infer from Lemma~3.6 that if \(u\) has an unmatched neighbor \(w\) in \(\M(\sigma_{-u^*})\), then when \(u^*\) is added, \(u\) will always match to some vertex at least as good as \(w\). Intuitively, this suggests defining the backup \(b\) as the first unmatched neighbor of \(u\) in \(\M(\sigma_{-u^*})\).

It is true that if unmatched neighbors exist, then \(u\) will have a backup. However, it is not true that all backups are unmatched neighbors. Consider the following partial graph $G_{\text{partial}}$ with ordering \(\sigma_{-u^*}\), where a larger rank corresponds to a position lower in the ordering:
\newline
\begin{center}
\begin{tikzpicture}[dot/.style={circle, fill, inner sep=1.5pt}]

\node at (0.75,2)  {\textbf{$G_{\text{partial}}$}};
\node at (4.75,2)  {\textbf{$\M(\sigma_{-u^*})$}};
\node at (8.75,2)  {\textbf{$\M(\sigma_{-vu^*})$}};
\node at (12.75,2) {\textbf{$\M(\sigma_{-u^*}')$}};  

\node[dot, label=left:$u$]   (u1) at (0,1.2) {};
\node[dot, label=left:$u_1$] (u2) at (0,0.4) {};
\node[dot, label=right:$v$]   (v)  at (1.5,0.8) {};
\node[dot, label=right:$v_1$] (v1) at (1.5,0) {};
\node[dot, label=right:$w$]   (w)  at (1.5,-0.8) {};
\draw[dotted] (u1) -- (v);
\draw[dotted] (u1) -- (v1);
\draw[dotted] (u1) -- (w);
\draw[dotted] (u2) -- (v1);

\begin{scope}[xshift=4cm]
\node[dot, label=left:$u$]   (u1) at (0,1.2) {};
\node[dot, label=left:$u_1$] (u2) at (0,0.4) {};
\node[dot, label=right:$v$]   (v)  at (1.5,0.8) {};
\node[dot, label=right:$v_1$] (v1) at (1.5,0) {};
\node[dot, label=right:$w$]   (w)  at (1.5,-0.8) {};
\draw[thick]  (u1) -- (v);
\draw[dotted] (u1) -- (v1);
\draw[dotted] (u1) -- (w);
\draw[thick]  (u2) -- (v1);
\end{scope}

\begin{scope}[xshift=8cm]
\node[dot, label=left:$u$]   (u1) at (0,1.2) {};
\node[dot, label=left:$u_1$] (u2) at (0,0.4) {};
\node[dot, label=right:$v_1$] (v1) at (1.5,0) {};
\node[dot, label=right:$w$]   (w)  at (1.5,-0.8) {};
\draw[thick]  (u1) -- (v1);
\draw[dotted] (u1) -- (w);
\draw[dotted] (u2) -- (v1);
\end{scope}

\begin{scope}[xshift=12cm]
\node[dot, label=left:$u$]   (u1) at (0,1.2) {};
\node[dot, label=left:$u_1$] (u2) at (0,0.4) {};
\node[dot, label=right:$v_1$] (v1) at (1.5,0) {};
\node[dot, label=right:$w$]   (w)  at (1.5,-0.8) {};
\node[dot, label=right:$v$]   (v)  at (1.5,-0.4) {};
\draw[thick]  (u1) -- (v1);
\draw[dotted] (u1) -- (w);
\draw[dotted] (u1) -- (v);
\draw[dotted] (u2) -- (v1);
\end{scope}

\begin{scope}[xshift=5.5cm, yshift=-2cm]
\draw[dotted] (-0.3,0.6) -- +(0.6,0);
\node[anchor=west] at (0.4,0.6) {\footnotesize Edges in $G$};

\draw[thick] (-0.3,0.1) -- +(0.6,0);
\node[anchor=west] at (0.4,0.1) {\footnotesize Edges in \ranking{}};
\end{scope}
\end{tikzpicture}
\end{center}
We see that an unmatched neighbor \(w\) may not be the backup of \(u\) in \(\M(\sigma_{-u^*})\), and that the backup \(v_1\) in this case may not itself be an unmatched vertex in \(\M(\sigma_{-u^*})\). This situation essentially shows that \Cref{descrete-monotonicity-nobackup} and \Cref{descrete-monotonicity-backup} fail to hold when backups are defined as the smallest unmatched neighbor. In the example, as in graph $\M(\sigma_{-u^*}')$, if we demote \(v\) to a rank larger than \(\sigma_{-u^*}(v_1)\), then \(v\) is no longer the match of \(u\), violating \Cref{descrete-monotonicity-backup}. The monotonicity property is one of the key components that enables us to achieve the \(0.5469\) approximation. Therefore, a more delicate definition of backup is required in our work compared to the notion of unmatched neighbors.

\subsection{Restricted and Unrestricted Vertex Arrival Models}\label{vertexarrivalmodels}
\paragraph{Unrestricted Vertex Arrival (Oblivious Matching) Model.} In the study of \ranking{}, we may refer to two different kinds of vertex arrival models when not explicitly stated. \cref{alg:vertex-ranking-general} is one of them. This model assumes that each vertex $u$ arrives at time $\sigma(u)$ and is matched to the smallest-ranked unmatched neighbor according to $\sigma$, regardless of whether the neighbor has arrived or not. We call this the unrestricted (uniform random) vertex arrival model.

\paragraph{Restricted Vertex Arrival Model.} There also exists a more restricted version of the random vertex arrival model, where, when a vertex arrives, it only looks at the neighbors that have already arrived, and it gets matched to the smallest-ranked one that's still available. We call this the restricted (uniform random) vertex arrival model.
\newline\newline
Intuitively, this restriction should not affect the matching choice. We will prove a simple fact here to establish this.
\begin{fact}
    For the same ordering $\sigma:V\to\{1,...|V|\}$, restricted and unrestricted vertex arrival model of \ranking{} produce the same outputs.
\end{fact}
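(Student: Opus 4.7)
The plan is to view both the restricted and unrestricted procedures as greedy-matching processes probing the edges of $G$ in some global order, and then argue that these two orders, while globally distinct, induce the same relative order on the edges incident to every vertex $w$. A standard structural lemma will then give that greedy matching depends only on such vertex-local orderings, yielding equality of the two outputs.

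First I would identify the two edge orderings explicitly. In the unrestricted model, as noted in \cref{greeedyprobing}, edge $(u,v)$ with $\sigma(u)<\sigma(v)$ is probed during the session of the earlier endpoint $u$, with edges within that session handled in increasing order of the other endpoint's rank; globally this is the lex order $\pi_1$ with key $(\min\{\sigma(u),\sigma(v)\},\max\{\sigma(u),\sigma(v)\})$. In the restricted model, when $v$ arrives it considers only its already-arrived neighbors in increasing order of rank, so the edge is probed during the session of the later endpoint $v$, giving the lex order $\pi_2$ with key $(\max\{\sigma(u),\sigma(v)\},\min\{\sigma(u),\sigma(v)\})$. In both views, the probed edge is added to the matching iff both endpoints are still available at the time of probing, so each model is pure greedy matching with respect to its ordering.

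Next I would verify that $\pi_1$ and $\pi_2$ agree locally at every vertex: for any $w$ and any two incident edges $(w,a),(w,b)$ with $\sigma(a)<\sigma(b)$, a short case analysis on whether $\sigma(w)$ lies below $\sigma(a)$, between $\sigma(a)$ and $\sigma(b)$, or above $\sigma(b)$ shows that $(w,a)$ precedes $(w,b)$ under both orderings. The intuition is that $w$ appears in both keys of both comparisons, so the effective tie-break is always $\sigma(a)$ versus $\sigma(b)$, regardless of whether the outer coordinate is $\min$ or $\max$. Thus $\pi_1$ and $\pi_2$ induce the same total order on the edges incident to every fixed vertex.

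Finally I would invoke, and sketch, the lemma that any two edge orderings agreeing vertexwise produce the same greedy matching, by induction on $|E|$. For the inductive step, let $e^{*}=(u_1,u_2)$ be the globally $\pi_1$-smallest edge; it is in particular the $\pi_1$-smallest edge incident to each of $u_1,u_2$, and by vertex-local agreement it is also the $\pi_2$-smallest edge incident to each of them. Hence every edge examined by the greedy process on $\pi_2$ before $e^{*}$ avoids both $u_1$ and $u_2$, so these endpoints are still available when $\pi_2$ reaches $e^{*}$, and $e^{*}$ is added to the matching in both runs. Deleting $u_1,u_2$ yields a smaller instance on which the restricted orderings still agree vertexwise, and the inductive hypothesis concludes. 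The conceptually delicate step is precisely this inductive step: one must carefully observe that although $\pi_2$ may insert many edges before reaching $e^{*}$, vertex-local agreement forbids any of them from touching $u_1$ or $u_2$, so $e^{*}$ can never be blocked. Combining the three ingredients yields the desired equality of matchings.
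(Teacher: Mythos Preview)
Your proof is correct and shares the paper's key observation: the two probe orders $\pi_1$ and $\pi_2$ agree on the relative order of any two edges sharing a common vertex (the edge whose other endpoint has smaller rank always comes first). Where you diverge is in how you conclude from this. The paper argues by contradiction: it takes the lexicographically first edge $e=(u,v)$ lying in exactly one of the two matchings, observes that one endpoint must already be matched in the other run via some edge $(u,v')$ probed earlier there, and then uses the vertex-local agreement to deduce $\sigma(v')<\sigma(v)$, so $(u,v')$ also precedes $e$ lexicographically---contradicting the minimality of $e$. You instead abstract the observation into a standalone lemma---any two edge orderings that agree locally at every vertex produce the same greedy matching---and prove it by induction on $|E|$, peeling off the $\pi_1$-minimal edge together with its endpoints. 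Your route buys a reusable structural statement that does not depend on the specific form of $\pi_1$ and $\pi_2$; the paper's route buys brevity for this particular pair of orderings. The underlying content is the same.
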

\begin{proof}
We adopt the greedy probing view for both models. The greedy probing view for the unrestricted model is stated in \cref{greeedyprobing}. The greedy probing view for the restricted model can be viewed as iterating over the following disjoint sets $S_1, S_2, \dots, S_{|V|}$ one by one and greedily picking the edges. $S_i$ is the ordered set
    $$S_i=\{(i,1),(i,2),...(i,i)\}.$$
We denote the two greedy probing orderings as $GP_{\text{unres}}$ and $GP_{\text{res}}$ for the unrestricted and restricted models, respectively. Let $M_{\text{res}}$ and $M_{\text{unres}}$ be the matching results respectively. Assume $M_{\text{res}}\neq M_{\text{unres}}$. We first make the following observation:

For any two edges $e_1 = (u, v_1)$ and $e_2 = (u, v_2)$ that share the same vertex $u$, the relative orders in which $GP_{\text{unres}}$ and $GP_{\text{res}}$ query them are the same. That is, one can show that both orderings query $e_1$ before $e_2$ if and only if $\sigma(v_1) < \sigma(v_2)$\footnote{The required proof for $GP_{\text{unres}}$ is given by \cref{lexicographicorder}, and one can apply a similar process to give a proof for $GP_{\text{res}}$ as well.}. 

Now let $e = (u, v)$ be the first edge\footnote{According to the ordering $GP_{\text{unres}}$, which is also the natural lexicographic ordering.} that witnesses $M_{\text{res}} \neq M_{\text{unres}}$, i.e. $(u, v)$ is the first edge that's included in one of the matching sets but not the other. This means that either $u$ or $v$ is not available when $(u,v)$ gets queried in the other set. W.L.O.G., assume $u$ is not available. This means that some other edge $(u, v')$ is queried before $(u, v)$ and got included in the other matching set. By the observation we made earlier, this implies $\sigma(v') < \sigma(v)$, and hence $(u, v')$ comes before $(u, v)$ in the lexicographic ordering. This contradicts the assumption that $(u, v)$ is the first edge witnessing $M_{\text{res}} \neq M_{\text{unres}}$.
\end{proof}

\bibliographystyle{plain}
\bibliography{references}

@article{BirnbaumM08,
  author       = {Benjamin E. Birnbaum and
                  Claire Mathieu},
  title        = {On-line bipartite matching made simple},
  journal      = {{SIGACT} News},
  volume       = {39},
  number       = {1},
  pages        = {80--87},
  year         = {2008},
  url          = {https://doi.org/10.1145/1360443.1360462},
  doi          = {10.1145/1360443.1360462},
  timestamp    = {Tue, 06 Nov 2018 12:51:29 +0100},
  biburl       = {https://dblp.org/rec/journals/sigact/BirnbaumM08.bib},
  bibsource    = {dblp computer science bibliography, https://dblp.org}
}

@inproceedings{GoelM08,
  author       = {Gagan Goel and
                  Aranyak Mehta},
  editor       = {Shang{-}Hua Teng},
  title        = {Online budgeted matching in random input models with applications
                  to Adwords},
  booktitle    = {Proceedings of the Nineteenth Annual {ACM-SIAM} Symposium on Discrete
                  Algorithms, {SODA} 2008, San Francisco, California, USA, January 20-22,
                  2008},
  pages        = {982--991},
  publisher    = {{SIAM}},
  year         = {2008},
  url          = {http://dl.acm.org/citation.cfm?id=1347082.1347189},
  timestamp    = {Fri, 07 Dec 2012 17:02:08 +0100},
  biburl       = {https://dblp.org/rec/conf/soda/GoelM08.bib},
  bibsource    = {dblp computer science bibliography, https://dblp.org}
}

@article{chan2018analyzing,
  author    = {T.{-}H. Hubert Chan and
               Fei Chen and
               Xiaowei Wu},
  title     = {Analyzing node-weighted oblivious matching problem via continuous {LP} with jump discontinuity},
  journal   = {{ACM} Trans. Algorithms},
  volume    = {14},
  number    = {2},
  pages     = {12:1--12:25},
  year      = {2018},
  url       = {https://doi.org/10.1145/3186722},
  doi       = {10.1145/3186722},
  timestamp = {Wed, 21 Nov 2018 12:44:53 +0100},
  biburl    = {https://dblp.org/rec/journals/talg/ChanCW18.bib},
  bibsource = {dblp computer science bibliography, https://dblp.org}
}

@article{chan2014ranking,
  author  = {Chan, T.-H. Hubert and Chen, Fei and Wu, Xiaowei and Zhao, Zhichao},
  title   = {Ranking on Arbitrary Graphs: Rematch via Continuous Linear Programming},
  journal = {SIAM Journal on Computing},
  volume  = {47},
  number  = {4},
  pages   = {1529--1546},
  year    = {2018},
  doi     = {10.1137/140984051},
  url     = {https://doi.org/10.1137/140984051},
  eprint  = {https://doi.org/10.1137/140984051},
  abstract = {Motivated by online advertisement and exchange settings, greedy randomized algorithms for the maximum matching problem have been studied, in which the algorithm makes (random) decisions that are essentially oblivious to the input graph. Any greedy algorithm can achieve a performance ratio of 0.5, which is the expected number of matched nodes to the number of nodes in a maximum matching. Since Aronson, Dyer, Frieze, and Suen [Random Structures Algorithm, 6 (1991), pp. 29--46] proved that the modified randomized greedy algorithm achieves a performance ratio of \$0.5 + \epsilon\$ (where \$\epsilon = \frac{1}{400000}\$) on arbitrary graphs in the midnineties, no further attempts in the literature have been made to improve this theoretical ratio for arbitrary graphs until two papers were published in FOCS 2012 [G. Goel and P. Tripathi, IEEE Computer Society, Los Alamitos, CA, 2012, pp. 718--727; M. Poloczek and M. Szegedy, IEEE Computer Society, Los Alamitos, CA, 2012, pp. 708--717]. In this paper, we revisit the ranking algorithm using the linear programming framework. Special care is given to analyze the structural properties of the ranking algorithm in order to derive the linear programming constraints, of which one known as the boundary constraint requires totally new analysis and is crucial to the success of our linear program (LP). We use continuous linear programming relaxation to analyze the limiting behavior as the finite LP grows. Of particular interest are new duality and complementary slackness characterizations that can handle the monotone and the boundary constraints in continuous linear programming. Improving previous work, this paper achieves a theoretical performance ratio of \$\frac{2(5-\sqrt{7})}{9} \approx 0.523\$ on arbitrary graphs.}
}

@inproceedings{poloczek2012randomized,
  author    = {Matthias Poloczek and
               Mario Szegedy},
  title     = {Randomized Greedy Algorithms for the Maximum Matching Problem with New Analysis},
  booktitle = {53rd Annual {IEEE} Symposium on Foundations of Computer Science, {FOCS} 2012, New Brunswick, NJ, USA, October 20-23, 2012},
  pages     = {708--717},
  publisher = {{IEEE} Computer Society},
  year      = {2012},
  url       = {https://doi.org/10.1109/FOCS.2012.20},
  doi       = {10.1109/FOCS.2012.20},
  timestamp = {Wed, 16 Oct 2019 14:14:55 +0200},
  biburl    = {https://dblp.org/rec/conf/focs/PoloczekS12.bib},
  bibsource = {dblp computer science bibliography, https://dblp.org}
}

@article{aronson1995randomized,
  author    = {Jonathan Aronson and
               Martin E. Dyer and
               Alan M. Frieze and
               Stephen Suen},
  title     = {Randomized greedy matching. {II}},
  journal   = {Random Struct. Algorithms},
  volume    = {6},
  number    = {1},
  pages     = {55--73},
  year      = {1995},
  url       = {https://doi.org/10.1002/rsa.3240060107},
  doi       = {10.1002/rsa.3240060107},
  timestamp = {Wed, 14 Nov 2018 10:13:52 +0100},
  biburl    = {https://dblp.org/rec/journals/rsa/AronsonDFS95.bib},
  bibsource = {dblp computer science bibliography, https://dblp.org}
}

@inproceedings{goel2012matching,
  author    = {Gagan Goel and
               Pushkar Tripathi},
  title     = {Matching with Our Eyes Closed},
  booktitle = {53rd Annual {IEEE} Symposium on Foundations of Computer Science,
               {FOCS} 2012, New Brunswick, NJ, USA, October 20-23, 2012},
  pages     = {718--727},
  publisher = {{IEEE} Computer Society},
  year      = {2012},
  url       = {https://doi.org/10.1109/FOCS.2012.77},
  doi       = {10.1109/FOCS.2012.77},
  timestamp = {Wed, 16 Oct 2019 14:14:55 +0200},
  biburl    = {https://dblp.org/rec/conf/focs/GoelT12.bib},
  bibsource = {dblp computer science bibliography, https://dblp.org}
}

@inproceedings{mahdian2011online,
  author    = {Mohammad Mahdian and
               Qiqi Yan},
  title     = {Online Bipartite Matching with Random Arrivals: An Approach Based on Strongly Factor-Revealing {LPs}},
  booktitle = {Proceedings of the 43rd Annual {ACM} Symposium on Theory of Computing,
               {STOC} 2011, San Jose, CA, USA, June 6-8, 2011},
  pages     = {597--606},
  publisher = {{ACM}},
  year      = {2011},
  url       = {https://doi.org/10.1145/1993636.1993716},
  doi       = {10.1145/1993636.1993716},
  timestamp = {Tue, 06 Nov 2018 16:59:35 +0100},
  biburl    = {https://dblp.org/rec/conf/stoc/MahdianY11.bib},
  bibsource = {dblp computer science bibliography, https://dblp.org}
}

@inproceedings{karp1990optimal,
  author    = {Richard M. Karp and
               Umesh V. Vazirani and
               Vijay V. Vazirani},
  title     = {An Optimal Algorithm for On-line Bipartite Matching},
  booktitle = {Proceedings of the 22nd Annual {ACM} Symposium on Theory of Computing},
  pages     = {352--358},
  publisher = {{ACM}},
  year      = {1990},
  url       = {https://doi.org/10.1145/100216.100262},
  doi       = {10.1145/100216.100262},
  timestamp = {Tue, 06 Nov 2018 16:59:44 +0100},
  biburl    = {https://dblp.org/rec/conf/stoc/KarpVV90.bib},
  bibsource = {dblp computer science bibliography, https://dblp.org}
}

@article{DyerF91,
  author       = {Martin E. Dyer and
                  Alan M. Frieze},
  title        = {Randomized Greedy Matching},
  journal      = {Random Struct. Algorithms},
  volume       = {2},
  number       = {1},
  pages        = {29--46},
  year         = {1991},
  url          = {https://doi.org/10.1002/rsa.3240020104},
  doi          = {10.1002/RSA.3240020104},
  timestamp    = {Wed, 14 Nov 2018 10:13:52 +0100},
  biburl       = {https://dblp.org/rec/journals/rsa/DyerF91.bib},
  bibsource    = {dblp computer science bibliography, https://dblp.org}
}

@article{towardsabetterunderstanding,
  author    = {Zhihao Gavin Tang and
               Xiaowei Wu and
               Yuhao Zhang},
  title     = {Toward a Better Understanding of Randomized Greedy Matching},
  journal   = {J. ACM},
  volume    = {70},
  number    = {6},
  pages     = {39:1--39:32},
  year      = {2023},
  url       = {https://doi.org/10.1145/3614318},
  doi       = {10.1145/3614318},
  timestamp = {Tue, 05 Dec 2023 12:00:00 +0100},
  biburl    = {https://dblp.org/rec/journals/jacm/TangWZ23.bib},
  bibsource = {dblp computer science bibliography, https://dblp.org}
}

@inbook{RandomizedPrimalDual,
author = {Nikhil R. Devanur and Kamal Jain and Robert D. Kleinberg},
title = {Randomized Primal-Dual Analysis of RANKING for Online Bipartite Matching},
booktitle = {Proceedings of the 2013 Annual ACM-SIAM Symposium on Discrete Algorithms (SODA)},
chapter = {},
pages = {101-107},
doi = {10.1137/1.9781611973105.7},
URL = {https://epubs.siam.org/doi/abs/10.1137/1.9781611973105.7},
eprint = {https://epubs.siam.org/doi/pdf/10.1137/1.9781611973105.7},
    abstract = { Abstract We give a simple proof that the ranking algorithm of Karp, Vazirani and Vazirani [KVV90] is 1-1/e competitive for the online bipartite matching problem. The proof is via a randomized primal-dual argument. Primal-dual algorithms have been successfully used for many online algorithm problems, but the dual constraints are always satisfied deterministically. This is the first instance of a non-trivial randomized primal-dual algorithm in which the dual constraints only hold in expectation. The approach also generalizes easily to the vertex-weighted version considered by Agarwal et al. [AGKM11]. Further we show that the proof is very similar to the deterministic primal-dual argument for the online budgeted allocation problem with small bids (also called the AdWords problem) of Mehta et al. [MSVV05]. }
}

@inproceedings{Howtomatchwhenallverticesarriveonline,
  author    = {Zhiyi Huang and
               Ning Kang and
               Zhihao Gavin Tang and
               Xiaowei Wu and
               Yuhao Zhang and
               Xue Zhu},
  title     = {How to Match when All Vertices Arrive Online},
  booktitle = {Proceedings of the 50th Annual {ACM} {SIGACT} Symposium on Theory of
               Computing, {STOC} 2018, Los Angeles, CA, USA, June 25-29, 2018},
  pages     = {17--29},
  publisher = {{ACM}},
  year      = {2018},
  url       = {https://doi.org/10.1145/3188745.3188858},
  doi       = {10.1145/3188745.3188858},
  timestamp = {Wed, 21 Nov 2018 12:44:53 +0100},
  biburl    = {https://dblp.org/rec/conf/stoc/0002KTWZZ18.bib},
  bibsource = {dblp computer science bibliography, https://dblp.org}
}

@inproceedings{Thighcompetitiveratioofclassicmatching,
  author    = {Zhiyi Huang and
               Binghui Peng and
               Zhihao Gavin Tang and
               Runzhou Tao and
               Xiaowei Wu and
               Yuhao Zhang},
  title     = {Tight Competitive Ratios of Classic Matching Algorithms in the Fully Online Model},
  booktitle = {Proceedings of the Thirtieth Annual {ACM-SIAM} Symposium on Discrete Algorithms ({SODA})},
  pages     = {2875--2886},
  publisher = {{SIAM}},
  year      = {2019},
  url       = {https://doi.org/10.1137/1.9781611975482.178},
  doi       = {10.1137/1.9781611975482.178},
  timestamp = {Tue, 02 Feb 2021 17:08:00 +0100},
  biburl    = {https://dblp.org/rec/conf/soda/0002PTTWZ19.bib},
  bibsource = {dblp computer science bibliography, https://dblp.org}
}

@article{Onlinevertexweightedbipartitematching,
author = {Huang, Zhiyi and Tang, Zhihao Gavin and Wu, Xiaowei and Zhang, Yuhao},
title = {Online Vertex-Weighted Bipartite Matching: Beating 1-1/e with Random Arrivals},
year = {2019},
issue_date = {July 2019},
publisher = {Association for Computing Machinery},
address = {New York, NY, USA},
volume = {15},
number = {3},
issn = {1549-6325},
url = {https://doi.org/10.1145/3326169},
doi = {10.1145/3326169},
abstract = {We introduce a weighted version of the ranking algorithm by Karp et&nbsp;al. (STOC 1990), and we prove a competitive ratio of 0.6534 for the vertex-weighted online bipartite matching problem when online vertices arrive in random order. Our result shows that random arrivals help beating the 1-1/e barrier even in the vertex-weighted case. We build on the randomized primal-dual framework by Devanur et&nbsp;al. (SODA 2013) and design a two dimensional gain sharing function, which depends not only on the rank of the offline vertex, but also on the arrival time of the online vertex. To our knowledge, this is the first competitive ratio strictly larger than 1-1/e for an online bipartite matching problem achieved under the randomized primal-dual framework. Our algorithm has a natural interpretation that offline vertices offer a larger portion of their weights to the online vertices as time increases, and each online vertex matches the neighbor with the highest offer at its arrival.},
journal = {ACM Trans. Algorithms},
month = jun,
articleno = {38},
numpages = {15},
keywords = {Vertex weighted, online bipartite matching, randomized primal-dual}
}

@inproceedings{conf/stoc/KarandeMT11,
  author    = {Chinmay Karande and
               Aranyak Mehta and
               Pushkar Tripathi},
  title     = {Online bipartite matching with unknown distributions},
  booktitle = {Proceedings of the 43rd Annual {ACM} Symposium on Theory of Computing
               ({STOC})},
  pages     = {587--596},
  publisher = {{ACM}},
  year      = {2011},
  url       = {https://doi.org/10.1145/1993636.1993717},
  doi       = {10.1145/1993636.1993717},
  timestamp = {Mon, 21 Jun 2021 16:48:34 +0200},
  biburl    = {https://dblp.org/rec/conf/stoc/KarandeMT11.bib},
  bibsource = {dblp computer science bibliography, https://dblp.org}
}

@inproceedings{eden2020economicbasedanalysisrankingonline,
  author    = {Alon Eden and Michal Feldman and Amos Fiat and Kineret Segal},
  title     = {An Economics-Based Analysis of RANKING for Online Bipartite Matching},
  booktitle = {4th Symposium on Simplicity in Algorithms (SOSA)},
  pages     = {107--110},
  publisher = {SIAM},
  year      = {2021},
  url       = {https://doi.org/10.1137/1.9781611976496.12},
  doi       = {10.1137/1.9781611976496.12}
}

@article{xu2025bounding,
  title={Bounding the Optimal Performance of Online Randomized Primal-Dual Methods},
  author={Xu, Pan},
  journal={arXiv preprint arXiv:2503.09508},
  year={2025}
}

@article{peng2025revisiting,
  title={Revisiting Ranking for Online Bipartite Matching with Random Arrivals: the Primal-Dual Analysis},
  author={Peng, Bo and Tang, Zhihao Gavin},
  journal={arXiv preprint arXiv:2503.04196},
  year={2025}
}

@article{DBLP:journals/talg/HuangTWZ19,
  author       = {Zhiyi Huang and
                  Zhihao Gavin Tang and
                  Xiaowei Wu and
                  Yuhao Zhang},
  title        = {Online Vertex-Weighted Bipartite Matching: Beating 1-1/\emph{e} with
                  Random Arrivals},
  journal      = {{ACM} Trans. Algorithms},
  volume       = {15},
  number       = {3},
  pages        = {38:1--38:15},
  year         = {2019},
  url          = {https://doi.org/10.1145/3326169},
  doi          = {10.1145/3326169},
  timestamp    = {Sun, 19 Jan 2025 15:02:12 +0100},
  biburl       = {https://dblp.org/rec/journals/talg/HuangTWZ19.bib},
  bibsource    = {dblp computer science bibliography, https://dblp.org}
}

@inproceedings{DBLP:conf/soda/DevanurJK13,
  author       = {Nikhil R. Devanur and
                  Kamal Jain and
                  Robert D. Kleinberg},
  editor       = {Sanjeev Khanna},
  title        = {Randomized Primal-Dual analysis of {RANKING} for Online BiPartite
                  Matching},
  booktitle    = {Proceedings of the Twenty-Fourth Annual {ACM-SIAM} Symposium on Discrete
                  Algorithms, {SODA} 2013, New Orleans, Louisiana, USA, January 6-8,
                  2013},
  pages        = {101--107},
  publisher    = {{SIAM}},
  year         = {2013},
  url          = {https://doi.org/10.1137/1.9781611973105.7},
  doi          = {10.1137/1.9781611973105.7},
  timestamp    = {Tue, 02 Feb 2021 17:07:58 +0100},
  biburl       = {https://dblp.org/rec/conf/soda/DevanurJK13.bib},
  bibsource    = {dblp computer science bibliography, https://dblp.org}
}

@inproceedings{DBLP:conf/soda/0002PTTWZ19,
  author       = {Zhiyi Huang and
                  Binghui Peng and
                  Zhihao Gavin Tang and
                  Runzhou Tao and
                  Xiaowei Wu and
                  Yuhao Zhang},
  editor       = {Timothy M. Chan},
  title        = {Tight Competitive Ratios of Classic Matching Algorithms in the Fully
                  Online Model},
  booktitle    = {Proceedings of the Thirtieth Annual {ACM-SIAM} Symposium on Discrete
                  Algorithms, {SODA} 2019, San Diego, California, USA, January 6-9,
                  2019},
  pages        = {2875--2886},
  publisher    = {{SIAM}},
  year         = {2019},
  url          = {https://doi.org/10.1137/1.9781611975482.178},
  doi          = {10.1137/1.9781611975482.178},
  timestamp    = {Mon, 05 Feb 2024 20:28:16 +0100},
  biburl       = {https://dblp.org/rec/conf/soda/0002PTTWZ19.bib},
  bibsource    = {dblp computer science bibliography, https://dblp.org}
}

@inproceedings{0.505_simplified_Ranking,
  author       = {Mahsa Derakhshan and Mohammad Roghani and Mohammad Saneian and Tao Yu},
  title        = {A Simple Analysis of Ranking in General Graphs},
  booktitle    = {2026 Symposium on Simplicity in Algorithms (SOSA)},
  year         = {2026},
  publisher    = {{SIAM}},
  note         = {to appear}
}
\end{document}